\newtheorem{theorem}{Theorem}[section]
\newtheorem{definition}[theorem]{Definition}
\newtheorem{corollary}[theorem]{Corollary}
\newtheorem{prop}[theorem]{Proposition}
\newtheorem{lemma}[theorem]{Lemma}
\newtheorem{remark}[theorem]{Remark}
\newtheorem*{conjecture*}{Conjecture}
\newtheorem*{theorem*}{Theorem}
\newtheorem*{corollary*}{Corollary}
\newtheorem*{Question*}{Question}
\renewcommand{\phi}{\varphi}
\newcommand{\ns}{\slashed{\nabla}}
\newcommand{\divs}{\slashed{\mathrm{div}}}
\newcommand{\curls}{\slashed{\mathrm{curl}}}
\newcommand{\Dts}{\slashed{\mathcal{D}}_2^{\star}}
\newcommand{\Olin}{\Omega^{-1}\accentset{\scalebox{.6}{\mbox{\tiny (1)}}}{\Omega}}
\newcommand{\Olino}{\accentset{\scalebox{.6}{\mbox{\tiny (1)}}}{\Omega}}
\newcommand{\glin}{\accentset{\scalebox{.6}{\mbox{\tiny (1)}}}{\slashed{g}}}
\newcommand{\bmlin}{\accentset{\scalebox{.6}{\mbox{\tiny (1)}}}{b}}
\newcommand{\xblin}{\accentset{\scalebox{.6}{\mbox{\tiny (1)}}}{\underline{\hat{\chi}}}}
\newcommand{\xlin}{\accentset{\scalebox{.6}{\mbox{\tiny (1)}}}{{\hat{\chi}}}}
\newcommand{\eblin}{\accentset{\scalebox{.6}{\mbox{\tiny (1)}}}{\underline{\eta}}}
\newcommand{\elin}{\accentset{\scalebox{.6}{\mbox{\tiny (1)}}}{{\eta}}}
\newcommand{\otx}{\accentset{\scalebox{.6}{\mbox{\tiny (1)}}}{\left(\Omega \mathrm{tr} \chi\right)}}
\newcommand{\otxb}{\accentset{\scalebox{.6}{\mbox{\tiny (1)}}}{\left(\Omega \mathrm{tr} \underline{\chi}\right)}}
\newcommand{\olin}{\accentset{\scalebox{.6}{\mbox{\tiny (1)}}}{\omega}}
\newcommand{\olinb}{\accentset{\scalebox{.6}{\mbox{\tiny (1)}}}{\underline{\omega}}}
\newcommand{\ablin}{\accentset{\scalebox{.6}{\mbox{\tiny (1)}}}{\underline{\alpha}}}
\newcommand{\alin}{\accentset{\scalebox{.6}{\mbox{\tiny (1)}}}{{\alpha}}}
\newcommand{\bblin}{\accentset{\scalebox{.6}{\mbox{\tiny (1)}}}{\underline{\beta}}}
\newcommand{\blin}{\accentset{\scalebox{.6}{\mbox{\tiny (1)}}}{{\beta}}}
\newcommand{\rlin}{\accentset{\scalebox{.6}{\mbox{\tiny (1)}}}{\rho}}
\newcommand{\slin}{\accentset{\scalebox{.6}{\mbox{\tiny (1)}}}{{\sigma}}}
\newcommand{\Gamlin}{\accentset{\scalebox{.6}{\mbox{\tiny (1)}}}{\Gamma}}
\newcommand{\Wlin}{\accentset{\scalebox{.6}{\mbox{\tiny (1)}}}{\mathrm{W}}}
  \title{	\normalsize \textsc{} 	
		 		\LARGE \textbf{\uppercase{Coercivity Properties of the Canonical Energy in Double Null Gauge}}	
		}
		\author{
		Sam C. Collingbourne \footnote{scc@math.columbia.edu}\\
	Department of Mathematics, Columbia University, New York}
\begin{document}
\maketitle
\begin{abstract}
In this paper, we study the canonical energy associated with solutions to the linearised vacuum Einstein equation on a stationary spacetime. The main result of this paper establishes, in the context of the $4$-dimensional Schwarzschild exterior, a direct correspondence between the conservation law satisfied by the canonical energy and the conservation laws deduced by Holzegel for gravitational perturbations in double null gauge. Since the latter exhibit useful coercivity properties (leading to energy and pointwise boundedness statements) we obtain coercivity results for the canonical energy in the double null gauge as a corollary. More generally, the correspondence suggests a systematic way to uncover coercivity properties in the conservation laws for the canonical energy on Kerr.
\end{abstract}

\tableofcontents

\section{Introduction}

\subsection{The Schwarzschild Spacetime}
The $4$-dimensional Schwarzschild spacetime, $(\mathrm{Schw}_4,g_s)$, is the most basic solution to the vacuum Einstein equation,
\begin{align}\label{VE}
    \mathrm{Ric}[g]=0,
\end{align}
giving rise to the black hole phenomenon. One can cover the exterior region of the $4$-dimensional Schwarzschild spacetime with  double null Eddington--Finkelstein coordinates $(u,v,\theta,\phi)\in \mathbb{R}^2\times\mathbb{S}^2_{u,v}$  with
\begin{align*}
u=\frac{1}{2}(t-r_{\star}),\qquad v=\frac{1}{2}(t+r_{\star}),\qquad r_{\star}(r)\doteq (r-4M) + 2 M \ln\Big(\frac{r-2M}{2M}\Big)
\end{align*}
where standard~$(t,r,\theta,\phi)$ are the standard Schwarzschild coordinates. The metric on the exterior region in $(u,v,\theta,\phi)$ coordinates is
\begin{align}\label{SchTDN}
g_s=-2D(r(u,v))(du\otimes dv+dv\otimes du)+r(u,v)^2\gamma_{2},\qquad D(r(u,v))\doteq 1-\frac{2M}{r(u,v)},
\end{align}
where~$\gamma_{2}$ is the metric on the unit~$2$-sphere. The exterior region has the following Penrose diagrammatic represention:
\begin{figure}[H]
\centering
\begin{tikzpicture}

   \draw [thick] (6,3) -- (8.95,5.95);
       \draw [dashed] (9.05,5.95) -- (11.95,3.05);
       \draw [dashed] (9.05,0.05) -- (11.95,2.95);
    \draw [thick] (6,3) -- (8.95,0.05);   
      \node[mark size=2pt,inner sep=2pt] at (12,3) {$\circ$};
      \node[mark size=2pt] at (9,6) {$\circ$};
          \node[mark size=2pt] at (9,0) {$\circ$};

      \node (scriplus) at (10.9,4.75) {\large $\mathcal{I}^{+}$};
        \node (Hplus) at (7.4,4.75) {\large $\mathcal{H}^{+}$};   
        \node (scriplus) at (10.9,1.2) {\large $\mathcal{I}^{-}$};
        \node (Hplus) at (7.4,1.2) {\large $\mathcal{H}^{-}$};   
      \node (iplus) at (9.25,6.3) {\large $i^{+}$};
      \node (inaught) at (12.3,3) {\large $i^{0}$};

          
           \draw[blue] (7,3) -- (9,5);
    \draw[blue] (8,2) -- (10,4);
    \draw[blue] (7,3) -- (8,2);
     \draw [dotted,thick] (6.5,3.5) -- (7,3);
     \draw [dotted,thick] (9,5) -- (9.5,5.5);
     \draw [dotted,thick] (9,5) -- (8.5,5.5);
      \draw [dotted,thick] (10,4) -- (10.5,4.5);
    \draw[blue] (9,5) -- (10,4);

         \node at (7.85,4.3) {$C_{u_1}$};
            \node  at (9.15,2.7) {$C_{u_0}$};
            \node at (7.25,2.4) {$\underline{C}_{v_0}$};
            \node at (9.7,4.7) {$\underline{C}_{{v_1}}$}; 
            \node at (8.5,3.5) {$\mathcal{R}$};
      \end{tikzpicture}  
     \caption{The Penrose diagram of the exterior of $(\mathrm{Schw}_4,g_s)$ with a region $\mathcal{R}$ bounded by a characteristic rectangle.} \label{fig1}
\end{figure}
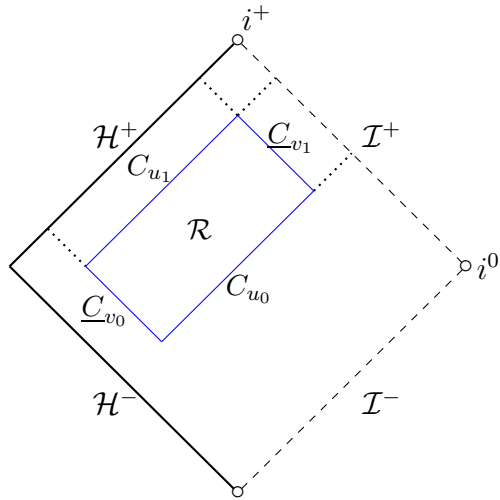
\noindent In these coordinates, constant $v$ or $u$ level sets are (ingoing or outgoing, respectively) null hypersurfaces denoted in figure~\ref{fig1} with $\underline{C}_v$ or $C_u$ respectively. Strictly speaking, the coordinates $(u,v,\theta,\phi)$ do not cover the future event horizon, $\mathcal{H}^+$, or future null infinity, $\mathcal{I}^+$. However, formally one can parameterise the future event horizon as $(\infty,v,\theta,\phi)$ and future null infinity as $(u,\infty,\theta,\phi)$.

\subsection{Conservation Laws, Canonical Energy and Coercivity}\label{sec:IntroCL}
This paper studies conservation laws for linearised gravity on the $4$-dimensional Schwarzschild black hole spacetime. Conservation laws play an important role in the study of stability properties of solutions to many partial differential equations. In particular, conservation laws are essential ingredients when trying to prove energy estimates for hyperbolic equations on black hole backgrounds. Such energy estimates then let one infer boundedness and decay properties of solutions to such equations (see, for example,~\cite{AndersonBlue,ABBSM19,DR,DHR2,DHR,DHRT,DR2,Mihalis2,DR4,DR3,Mihalis,DRY,Elena,GKS,Holzegel,Szeftel,Morawetz,YakovRita2}). In many cases, one can view such estimates as arising from applications of the divergence theorem to identities satisfied by energy currents. If one has an energy--momentum tensor associated to a theory, it provides a natural way to construct such energy currents~\cite{Alinhac}.

Taking a metric $g+\epsilon h$ and formally linearising the vacuum Einstein equation~\eqref{VE} yields the linearised vacuum Einstein equation, 
\begin{align}\label{LVEE}
g^{cd}\nabla_c\nabla_dh_{ab}+\nabla_a\nabla_b(\mathrm{tr}_gh)-\nabla_a(\mathrm{div}h)_b-\nabla_b(\mathrm{div}h)_a+2{{{{R}_a}^c}_b}^dh_{cd}=0,
\end{align}
which is an equation for a symmetric $2$-tensor $h$, known as the linearised metric, on a background spacetime~$(\mathcal{M},g)$. Here $\nabla$ and ${R}$ are the Levi-Civita connection and Riemann tensor, respectively, of $g$. Whilst equation~\eqref{LVEE} still needs complemented with an suitable gauge choice (typically inherited from a well-posed non-linear gauge), it is the appropriate equation of study to address the linear stability of a known metric $g$. 

Constructing conservation laws for~\eqref{LVEE} is complicated. In particular, one immediately encounters the issue that there is no energy--momentum tensor associated to a solution $h$ of the linearised vacuum Einstein equation~(\ref{LVEE}). Therefore, one cannot construct a conserved current based upon an energy--momentum tensor. This does not mean the theory should be absent of conservation laws; conservation laws should still arise in the system of gravitational perturbation due to the presence of background Killing symmetries. For example, since the Schwarzschild black hole is stationary one would expect that there is a conservation law for solutions of the linearised vacuum Einstein equation~\eqref{LVEE}.

Whilst there is not an energy-momentum tensor for the linearised vacuum Einstein equation~\eqref{LVEE}, it turns out there is a {systematic} method of deriving conservation laws for~\eqref{LVEE} originating from a work of Friedman~\cite{Friedman78} (see also~\cite{Chandrasekhar2,Chandrasekhar3}) and extended by Hollands and Wald~\cite{HolW}. As above, these conservation laws can be viewed as arising from an energy current for linearised vacuum Einstein equation~(\ref{LVEE}) which is divergence-free when associated with a Killing vector field~$X$ of the black hole in question. Since it is divergence-free when associated to symmetry, applying the divergence theorem in regions of the black hole exterior gives rise to a conservation law for an energy. Rather than being constructed from an energy-momentum tensor, the current is constructed from a symplectic form on the space of solutions to~\eqref{LVEE}.
\subsubsection{The Symplectic Current for the Wave Equation}
The symplectic form described above has a natural counterpart for the wave equation,
\begin{align}\label{eq:WE}
\Box_g\Psi=0,
\end{align}
which is an instructive example we shall now briefly discuss. The symplectic form on the space of solutions to~\eqref{eq:WE} is
\begin{align}\label{eq:sympf}
j[\Phi,\Psi]_a\doteq \Phi\nabla_a\Psi-\Psi\nabla_a\Phi.
\end{align}
If $\Phi$ and $\Psi$ solve the wave equation~\eqref{eq:WE}, this current is conserved. The current in equation~\eqref{eq:sympf} may not look immediately helpful since it involves two different solutions. However, if one has a Killing symmetry $X$ then one can set $\Phi=\mathcal{L}_X\Psi$, since $\mathcal{L}_X\Psi$ will solve the wave equation. If one does this for the Killing field $T=\partial_t$ on the region $\mathcal{R}$ in figure~\ref{fig1} on the exterior of Schwarzschild, one finds the following conservation law 
\begin{align}\label{eq:clawWE2}
\mathcal{E}^T_{u_1}[\Psi](v_0,v_1)+\mathcal{E}^T_{v_1}[\Psi](u_0,u_1)=\mathcal{E}^T_{u_0}[\Psi](v_0,v_1)+\mathcal{E}^T_{v_0}[\Psi](u_0,u_1)
\end{align}
with
\begin{align*}
\mathcal{E}^T_u[\Psi](v_0,v_1)&\doteq\frac{1}{2}\int_{v_0}^{v_1}\int_{\mathbb{S}^2}\Big(\partial_u\Psi\partial_v\Psi+|\partial_v\Psi|^2-\Psi\partial_{v}^2\Psi-\Psi\partial_u\partial_v\Psi\Big)dv\slashed{\epsilon} \\ \mathcal{E}^T_v[\Psi](u_0,u_1)&\doteq  \frac{1}{2}\int_{u_0}^{u_1}\int_{\mathbb{S}^2}\Big(\partial_u\Psi\partial_v\Psi+|\partial_u\Psi|^2-\Psi\partial_{u}^2\Psi-\Psi\partial_u\partial_v\Psi\Big)du\slashed{\epsilon},
\end{align*}
 where $\slashed{\varepsilon}$ denotes the volume form on the sphere of radius $r(u,v)$. These energy fluxes do not look immediately useful since one does not gain control of a Sobolev-type norm, i.e. its coercivity properties are obsure. However, one can use the wave equation, which on $(\mathrm{Schw}_4,g_s)$ in $(u,v,\theta,\phi)$ coordinates reduces to
\begin{align*}
-\frac{1}{\Omega^2}\partial_u\partial_v\Psi+\frac{1}{r}(\partial_v-\partial_u)\Psi+{\slashed{\Delta}}\Psi=0,
\end{align*}
to integrate by parts to show, 
 \begin{align*}
  \mathcal{E}_u^T[\Psi](v_0,v_1)&=2E^T_u[\Psi](v_0,v_1)-\int_{\mathbb{S}^2}\mathcal{A}[\Psi](u,v,\theta,\phi)\slashed{\varepsilon}\Big|_{v_0}^{v_1}\\
  \mathcal{E}_v^T[\Psi](u_0,u_1)&=2E^T_v[\Psi](u_0,u_1)+\int_{\mathbb{S}^2}\mathcal{A}[\Psi](u,v,\theta,\phi)\slashed{\varepsilon}\Big|_{u_0}^{u_1}
  \end{align*}
  where
  \begin{align*}
  E^T_u[\Psi](v_0,v_1)&\doteq \int_{v_0}^{v_1}\int_{\mathbb{S}^2}\Big(|\partial_v\Psi|^2+D|\ns\Psi|^2\Big) dv\slashed{\varepsilon},\quad E^T_v[\Psi](u_0,u_1)\doteq\int_{u_0}^{u_1}\int_{\mathbb{S}^2}\Big(|\partial_u\Psi|^2+D|\ns\Psi|^2\Big) dv \slashed{\varepsilon},
  \end{align*}
 with $\mathcal{A}\doteq \frac{1}{4}\Psi(\partial_v\Psi-\partial_u\Psi)$. 
 Therefore, due to the cancellation of the $\mathcal{A}$ terms, 
 \begin{align}\label{eq:clawWESch}
E^T_{u_1}[\Psi](v_0,v_1)+E^T_{v_1}[\Psi](u_0,u_1)=E^T_{u_0}[\Psi](v_0,v_1)+E^T_{v_0}[\Psi](u_0,u_1),
\end{align}
which is precisely the usual $T$-energy conservation law for the wave equation arising from the energy-momentum tensor. Crucially this gives rise to the coercive energy estimate
 \begin{align*}
 E^T_{u_1}[\Psi](v_0,v_1)+E^T_{v_1}[\Psi](u_0,u_1)\leq  E_{data}^T[\Psi]\doteq E^T_{u_0}[\Psi](v_0,\infty)+E^T_{v_0}[\Psi](u_0,\infty)
 \end{align*}
 for all $u_0\leq u_1< \infty$ and $v_0\leq v_1< \infty$.
\subsubsection{The Symplectic Form for Linearised Vacuum Einstein Equation}
The generalisation of the symplectic current~\eqref{eq:sympf} to the linearised vacuum Einstein equation~\eqref{LVEE} is
\begin{align}\label{eqn:SymplecticCurrentLVEE}
j[h_1,h_2]^a\doteq&P^{abcdef}\Big((h_2)_{bc}\nabla_d(h_1)_{ef}-(h_1)_{bc}\nabla_d(h_2)_{ef}\Big)
\end{align}
where
\begin{align}\label{eqn:P}
P^{abcdef}\doteq g^{ae}g^{bf}g^{cd}-\frac{1}{2}g^{ad}g^{be}g^{cf}-\frac{1}{2}g^{ab}g^{ef}g^{cd}-\frac{1}{2}g^{ae}g^{df}g^{bc}+\frac{1}{2}g^{ad}g^{ef}g^{bc}.
\end{align}
If $h_1$ and $h_2$ solve~\eqref{LVEE} then this current is conserved. As with the wave equation example above, if $X$ is a Killing symmetry and $h$ is a solution to equation~\eqref{LVEE}, then one obtains a conserved current for $h$, which we will denote $\mathcal{J}^X[h]$, by setting $h_1=h$ and $h_2=\mathcal{L}_Xh$, since $\mathcal{L}_Xh$ will also solve~\eqref{LVEE}, i.e. $\mathcal{J}^X[h]=j[h,\mathcal{L}_Xh]$. 

In~\cite{HolW}, Hollands and Wald use $\mathcal{J}^X[h]$ to construct a conservation law for linearised gravity on stationary black holes by picking $X=T\doteq\partial_t$, where $T$ is the Killing field associated to stationarity. They call the associated energy,~$\mathcal{E}^T[h]$, that arises on the boundary of the region where one applies the divergence theorem, the `canonical energy' for the linearised vacuum Einstein equation~(\ref{LVEE}). In~\cite{HolW}, the authors introduce a linear stability criterion associated to~$\mathcal{E}^T[h]$ evaluated on a Cauchy hypersurfaces. In rough terms, for a class of admissible initial data, the positivity or negativity of~$\mathcal{E}^T[h]$ (evaluated on a Cauchy hypersurface) is an indication of whether a stationary vacuum black hole spacetime linearly stable or unstable, respectively.

The criterion is appealing conceptually since one can write down a conservation law for \textit{any} stationary vacuum black hole spacetime (or more generally a black hole with a symmetry) and one `simply' needs to check the sign of~$\mathcal{E}^T[h]$ for a particular class of admissible data. However, in practice, it is hard to establish the sign of the canonical energy. This is in no small part due to the fact one has to make sure that the initial data one is considering satisfies the aforementioned admissibility criteria. Indeed, even for the `basic' case of the stability of the $4$-dimensional Schwarzschild black hole, the original work of Hollands and Wald did not establish positivity (the Schwarzschild black hole is after all (non-)linearly stable~\cite{DHR,DHRT}). This was partially rectified in the work of Prabu and Wald~\cite{PW2} where they show that the canonical energy of a metric perturbation arising from a `Hertz potential' is positive. Crucially, if one is interested in proving energy boundedness statements, the coercivity properties are even more obscure. 

\subsubsection{Conservation Laws in Double Null Gauge and Revealing Coercivity}

In~\cite{Holzegel}, Holzegel constructed two conservation laws for linearised gravity in \textit{double null gauge} on a characteristic rectangle in the Schwarzschild exterior as depicted in the figure~\ref{fig1}. One of these conservation laws occurs at the level of linearised Ricci coefficients and the other involving linearised curvature. Remarkably, Holzegel spots them by eye in the double null decomposed linearised vacuum Einstein equations. Moveover, in~\cite{Holzegel}, coercivity properties of these conservation laws are established to attain an energy boundedness statement for the linearised shear on any outgoing null cone, $C_u$, and the linearised shear on null infinity. This has now been extended in~\cite{HolzegelSC} to obtain energy boundedness and pointwise boundedness for the gauge invariant Teukolsky quantities on Schwarzschild, which is a key ingredient in approaching the stability problem for Schwarzschild (and, also, Kerr)~\cite{DHR,DHR2,ABBSM19,DHRT,Szeftel,YakovRita2,Elena}. 

The main question that arises from the above discussion:
\begin{Question*}
Is there a connection between Holzegel's conservation laws and the canonical energy?
\end{Question*}
This paper answers this question in the affirmative. In particular, the main novelties of this paper are
\begin{enumerate}
\item a systematic derivation of the conservation laws in~\cite{Holzegel} for double null decomposed gravitational perturbations \textit{from the canonical energy}. The advantage of this is that the canonical energy can be written down on any stationary spacetime, i.e. one starts with a conservation law which one can manipulate into a desirable form, rather than having to construct a conservation law from scratch. This provides one with a natural path to investigate energy boundedness on other spacetimes, for example Kerr or Schwarzschild--Tangherlini.\footnote{See~\cite{MyThesis} for a generalization to higher dimensional Schwarzschild--Tangherlini.} Further, this provides a resolution to the issue of coercivity and positivity of canonical energy~$\mathcal{E}^T[h]$ on the Schwarzschild exterior; the precise correspondence between canonical energy and Holzegel's conservation laws provided in this paper allows one to interpret the stability results established in~\cite{HolzegelSC} as results about the canonical energy.
\item a construction of a novel conservation law for the decoupled Teukolsky null curvature components $\alpha$ and $\underline{\alpha}$.
\end{enumerate}

In short, the first point above are addressed by decomposing the canonical energy in double null gauge. The conservation law for $\mathcal{E}^T[h]$ is then understood locally on the characteristic rectangle in figure~\ref{fig1} above. It is shown that the canonical energy conservation law is equivalent to Holzegel's conservation law for the energy involving linearised connection coefficients. 

The remainder of this introduction is organised as follows. Section~\ref{OverviewIntro} gives an overview of the main results in this paper and section~\ref{Outlook} discusses some possible directions for extension to this work.

\subsection{Overview of Main Results: Theorems~\ref{thm:CanEntoGusCon}-\ref{TH5}}\label{OverviewIntro}
Section~\ref{CanEn} gives a more in depth exposition on the canonical energy. The discussion here differs from the work of Hollands and Wald~\cite{HolW} where the canonical energy is viewed as a constrained variational principle evaluated on Cauchy hypersurfaces. In this work, the $X$-canonical energy,  $\mathcal{E}^X_{\Sigma}[h]$, on some hypersurface $\Sigma$ is simply viewed as the energy quantity for the linearised metric $h$ arising from the vector field current $\mathcal{J}[h]^X$ (see equation~\eqref{eqn:SymplecticCurrentLVEE}) which can be evaluated \textit{locally} and is divergence free if $X$ is Killing and $h$ satisfies the linearised vacuum Einstein equation~\eqref{LVEE}. Now, since $\mathcal{J}[h]^X$ is divergence free if $X$ is Killing, the divergence theorem allows one to construct conservation laws for $\mathcal{E}^X[h]$ associated to the boundary of some spacetime region. Additionally, a modified definition of energy current for~\eqref{LVEE} is proposed in section~\ref{sec:NCurr} in close analogy with the currents that arise from the energy-momentum tensor for the wave equation~\eqref{eq:WE}.

Section~\ref{DNG} contains a brief introduction and review of the Einstein equation in double null gauge as is neccesary for the computations in section~\ref{Results}. In short, a solution to linearised vacuum Einstein equation~\eqref{LVEE} in double null gauge is encoded by geometric quantities on surfaces, $\mathcal{S}_{u,v}$, homeomorphic to spheres. One has the linearised metric quantities
\begin{align*} 
\mathrm{tr}\slashed{h}, \slashed{h}_{AB}, \bmlin_A,\Olin,
\end{align*}
the linearised connection coefficients
\begin{align*} 
\otx,\otxb,\olin,\olinb,\elin_A,\eblin_A, \xlin_{AB},\xblin_{AB} ,
\end{align*}
and the linearised curvature components
\begin{align*}
\rlin,\slin,\blin_A,\bblin_A, \alin_{AB},\ablin_{AB},
\end{align*}
where the roman indices denote the  $\mathcal{S}_{u,v}$-tensor type. The linearised vacuum Einstein equation~\eqref{LVEE} is then encoded by linearising the Levi-Civita connection condition (this gives a set of equations known as the linearised null structure equations) and linearised Bianchi identities. 

Section~\ref{Results} contains the main body of original work. By evaluating the canonical energy associated to the Killing field $T\doteq\partial_t$ for the Schwarzschild spacetime on the characteristic rectangle depicted in figure~\ref{fig1} and choosing the metric perturbation $h$ to be in double null gauge (see definition~\ref{DNGDefinition}), a double null decomposition of the canonical energy is achieved. Since $T$ is a Killing vector field one obtains a conservation law for the $T$-canonical energy
\begin{align}\label{CEconlaw}
\mathcal{E}_{u_1}^T[h](v_0,v_1)+\mathcal{E}_{v_1}^T[h](u_0,u_1)=\mathcal{E}_{u_0}^T[h](v_0,v_1)+\mathcal{E}_{v_0}^T[h](u_0,u_1),
\end{align}
where, to ease notation, one denotes
\begin{align*}
    \mathcal{E}_{u}^T[h](v_0,v_1)\doteq \mathcal{E}_{C_{u}\cap\{ v_0\leq v\leq v_1\}}^T[h],\quad
    \mathcal{E}_{v}^T[h](u_0,u_1)\doteq \mathcal{E}_{C_{v}\cap \{ u_0\leq u\leq u_1\}}^T[h].
\end{align*} 
The main theorem on the double null decomposition of the canonical energy is the following:
\begin{theorem}\label{thm:CanEntoGusCon}
Suppose $h$ is a smooth solution to the linearised vacuum Einstein equation~(\ref{LVEE}) in double null gauge on the Schwarzschild black hole exterior. Then the $T$-canonical energy of $h$ on the null cones $C_u\cap \{ v_0\leq v\leq v_1\}$ and $\underline{C}_v\cap \{ u_0\leq u\leq u_1\}$ is given by
\begin{align}
{\mathcal{E}}^T_u[h](v_0,v_1)&=2F_u[\Gamma](u_0,u_1)-2\int_{\mathbb{S}_{u,v}^2}\mathcal{A}[h](u,v,\theta,\phi)\slashed{\varepsilon}\Big|_{v_0}^{v_1},\label{RelationGW1}\\
{\mathcal{E}}^T_v[h](u_0,u_1)&=2F_v[\Gamma](v_0,v_1)+2\int_{\mathbb{S}_{u,v}^2}\mathcal{A}[h](u,v,\theta,\phi)\slashed{\varepsilon}\Big|_{u_0}^{u_1}.\label{RelationGW2}
\end{align}
where
\begin{align}
F_u[\Gamma](v_0,v_1)&\doteq\int_{v_0}^{v_1}\int_{\mathbb{S}_{u,v}^2}\Big[|\Omega\xlin |^2+2|\Omega\eblin |^2-2\olin \otxb -\frac{1}{2}{\otx^2}+4\omega\Big(\frac{\Olino }{\Omega}\Big)\otx\Big]dv\slashed{\varepsilon},\label{F1}\\
F_v[\Gamma](u_0,u_1)&\doteq\int_{u_0}^{u_1}\int_{\mathbb{S}_{u,v}^2}\Big[|\Omega\xblin |^2+2|\Omega\elin |^2-2\olinb \otx -
\frac{1}{2}\otxb ^2-4\omega\Big(\frac{\Olino }{\Omega}\Big)\otxb \Big] du\slashed{\varepsilon}.\label{F2}
\end{align}
and with 
\begin{align}\label{BasicA}
\mathcal{A}[h]\doteq &\frac{1}{4}(\olinb -\olin )\mathrm{tr}\slashed{h}-\frac{1}{4}(\elin -\eblin )(\bmlin)+\frac{1}{8}\Big[\otxb -\otx\Big]\mathrm{tr}\slashed{h}-\frac{\Omega}{4}\langle\xblin -\xlin ,\hat{\slashed{h}}\rangle\\
&\nonumber+\frac{3}{2}\Big(\frac{\Olino }{\Omega}\Big)\Big[\otx -\otxb \Big]+\frac{\Omega\mathrm{tr}\chi}{2}\Big(\frac{\Olino }{\Omega}\Big)\Big(\mathrm{tr}\slashed{h}-4\Big(\frac{\Olino }{\Omega}\Big)\Big)\nonumber.
\end{align}
Moreover, the following conservation law is satisfied:
\begin{align}\label{MODCE2}
F_{u_1}[\Gamma](v_0,v_1)+F_{v_1}[\Gamma](u_0,u_1)=F_{u_0}[\Gamma](v_0,v_1)+F_{v_0}[\Gamma](u_0,u_1).
\end{align}
\end{theorem}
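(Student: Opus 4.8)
The plan is to first establish the flux identities \eqref{RelationGW1} and \eqref{RelationGW2} by explicitly decomposing the symplectic current \eqref{eqn:SymplecticCurrentLVEE} in double null gauge, and then to deduce the conservation law \eqref{MODCE2} from the canonical energy conservation law \eqref{CEconlaw} by showing that the boundary terms $\mathcal{A}[h]$ cancel around the characteristic rectangle. The guiding model is the wave equation computation of Section~\ref{sec:IntroCL}: there the raw flux $\mathcal{E}^T_u[\Psi]$ is not manifestly coercive, but after using the equation of motion and integrating by parts one extracts a coercive bulk term plus a total-derivative boundary term, and it is precisely the cancellation of these boundary terms that yields the clean conservation law \eqref{eq:clawWESch}. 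I expect \eqref{RelationGW1}--\eqref{RelationGW2} to be the gravitational analogue of $\mathcal{E}^T_u[\Psi]=2E^T_u[\Psi]-\int\mathcal{A}[\Psi]\,\slashed{\varepsilon}\big|_{v_0}^{v_1}$, and \eqref{MODCE2} the analogue of \eqref{eq:clawWESch}.

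For the flux identities I would set $h_1=h$ and $h_2=\mathcal{L}_Th$ with $T=\partial_t$ in \eqref{eqn:SymplecticCurrentLVEE}, and contract the resulting current $\mathcal{J}^T[h]^a$ with the appropriate null normal of $C_u$ to obtain the integrand of $\mathcal{E}^T_u[h]$. I would then decompose $h$ in double null gauge into its metric components $\mathrm{tr}\slashed{h}$, $\hat{\slashed{h}}_{AB}$, $\bmlin_A$, $\Olino$, using the key simplification that, since $T$ is Killing and the gauge is stationary, $\mathcal{L}_Th$ remains in double null gauge and its components are simply the $T$-derivatives of the components of $h$. Carrying out the contractions dictated by the tensor $P^{abcdef}$ of \eqref{eqn:P}, with indices raised by the Schwarzschild metric \eqref{SchTDN} and the sphere volume factors $r^2$ accounted for, produces a quadratic expression in the linearised metric quantities and their first derivatives. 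This is the most laborious step and the principal source of potential error: one must track the full index structure of $P$ and the frame normalisations with care.

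The raw integrand obtained this way is not coercive, exactly as in the wave case. To reveal coercivity I would substitute the linearised null structure equations, which express the $\partial_v$-derivatives of the metric components in terms of the linearised connection coefficients $\xlin$, $\eblin$, $\otx$, $\otxb$, $\olin$ and the quantity $\Olino/\Omega$, and then integrate by parts in $v$ along $C_u$. The terms integrating to the endpoints assemble into $-2\int_{\mathbb{S}^2_{u,v}}\mathcal{A}[h]\,\slashed{\varepsilon}\big|_{v_0}^{v_1}$ with $\mathcal{A}$ as in \eqref{BasicA}, while the remaining bulk collapses to the coercive expression $2F_u[\Gamma]$ of \eqref{F1}. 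The same computation on $\underline{C}_v$, integrating by parts in $u$, yields \eqref{RelationGW2}; the sign flip of the boundary term (from $-$ to $+$) arises from interchanging the roles of $u$ and $v$, mirroring the wave model. The main obstacle throughout is purely the algebraic complexity of this decomposition-and-integration-by-parts, not any conceptual difficulty.

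Finally, the conservation law \eqref{MODCE2} follows cleanly. Since $T$ is Killing, $\mathcal{J}^T[h]^a$ is divergence-free, so the divergence theorem on the region $\mathcal{R}$ of figure~\ref{fig1} gives \eqref{CEconlaw}. Substituting \eqref{RelationGW1} and \eqref{RelationGW2}, the boundary contributions are the values of $A(u,v)\doteq\int_{\mathbb{S}^2_{u,v}}\mathcal{A}[h]\,\slashed{\varepsilon}$ at the four corners of the rectangle. Writing these out, the $C_{u_1}$ and $\underline{C}_{v_1}$ boundary terms combine to $2A(u_1,v_0)-2A(u_0,v_1)$, and the $C_{u_0}$ and $\underline{C}_{v_0}$ terms combine to the identical quantity, so every corner contribution cancels. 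What remains is precisely \eqref{MODCE2}. Thus the genuine work lies entirely in the flux computation of the middle two steps; the conservation law itself is a bookkeeping consequence of the corner cancellation, in direct analogy with the cancellation of the $\mathcal{A}$ terms in the derivation of \eqref{eq:clawWESch}.
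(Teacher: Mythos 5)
Your proposal is correct and follows essentially the same route as the paper: a double null decomposition of $\mathcal{J}^T[h]$ using the components of $h$ and $\mathcal{L}_Th$ in the null frame, substitution of the linearised null structure equations (including, as the paper emphasises, the Gauss and Codazzi constraints) together with integration by parts on the spheres and along the null cones to extract the coercive fluxes $F[\Gamma]$ and the total-derivative boundary term $\mathcal{A}$, and finally the corner cancellation of $\mathcal{A}$ combined with \eqref{CEconlaw} to obtain \eqref{MODCE2}. The only content you leave implicit is the explicit identification of $\mathcal{A}$ as a sum of pieces $\mathcal{A}_1,\dots,\mathcal{A}_{10}$ whose $\slashed{\nabla}_3$ and $\slashed{\nabla}_4$ derivatives are computed term by term, which is exactly the laborious step you correctly flag as the principal source of potential error.
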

The equation~(\ref{MODCE2}) in this theorem follows the from the equations~(\ref{RelationGW1}) and~(\ref{RelationGW2}) in conjunction with the conservation law for the canonical energy~(\ref{CEconlaw}). This conservation law in~\eqref{MODCE2} is precisely Holzegel's conservation law from~\cite{Holzegel}. Therefore, one can view this theorem as a derivation of Holzegel’s conservation law from the canonical energy.

The proof of theorem~\ref{thm:CanEntoGusCon} (which can be found in section~\ref{TMR}) relies upon using the linearised null structure equations in propositions~\ref{LinMet}-\ref{LinCod} (but not the linearised Bianchi equations in proposition~\ref{LinBianchi}) to integrate by parts and produce the boundary term $\mathcal{A}$. Moreover, in proving this result the \textit{linearised Gauss and Codazzi constraint equations are key} (see proposition~\ref{LinGauss} and see proposition~\ref{LinCod}). This directly relates Hollands and Wald's admissible data criterion: the linearised contraint equations must be verified. However, one advantage of using the double null decomposition is that one can readily use these linearised constraint equations.

Section~\ref{betaconsproof} constructs a `higher order' $T$-canonical energy conservation law associated to $\mathcal{L}_{\Omega_i}h$ where $\{\Omega_i\}_{i=1}^3$ are the Killing vector fields associated to the $SO(3)$ symmetry of the Schwarzschild spacetime (see equation~\eqref{SO3} for explicit expressions for~$\{\Omega_i\}_{i=1}^3$). For convenience, define the following fluxes:
\begin{align}
F_u[\mathcal{R}](v_0,v_1)\doteq\int_{v_0}^{v_1}\int_{\mathbb{S}_{u,v}^2}\Big(&\frac{\Omega^2r^2}{2}|\blin |^2+\frac{3\Omega^2 r^2\rho}{2}|\eblin |^2+\frac{\Omega^2r^2}{2}\big(|\slin |^2+|\rlin |^2\big)-\frac{3r^2\rho}{2}\olin \otxb \label{FA1}\\
&\nonumber-\frac{3r^2\rho}{2}\Big[\frac{1}{2}(\Omega\mathrm{tr}\chi)-2\omega \Big]\Big(\frac{\Olino }{\Omega}\Big)\accentset{(1)}{(\Omega\mathrm{tr}{\chi})}\Big)dv\slashed{\varepsilon},
\end{align}
\begin{align}
F_v[\mathcal{R}](u_0,u_1)\doteq\int_{u_0}^{u_1}\int_{\mathbb{S}_{u,v}^2}\Big(&\frac{\Omega^2r^2}{2}|\bblin |^2+\frac{3\Omega^2 r^2\rho}{2}|\elin |^2+\frac{\Omega^2r^2}{2}\big(|\slin |^2+|\rlin |^2\big)-\frac{3r^2\rho}{2}\olinb \otx\label{FA2}\\
&\nonumber+\frac{3r^2\rho}{2}\Big[\frac{1}{2}(\Omega\mathrm{Tr}_{\slashed{g}}\chi)-2\omega \Big]\Big(\frac{\Olino }{\Omega}\Big)\otxb \Big)du\slashed{\varepsilon},
\end{align}

With these definitions in hand, the following theorem is proved in section~\ref{Results}.
\begin{theorem}\label{thm:CEntoGConb}
Suppose $h$ is a smooth solution of the linearised vacuum Einstein equation~(\ref{LVEE}) in double null gauge on the Schwarzschild black hole exterior. The $T$-canonical energy of $\mathcal{L}_{\Omega_k}h$ satisfies
\begin{align*}
\sum_k{\mathcal{E}}^T_u[\mathcal{L}_{\Omega_k}h](v_0,v_1)&=8F_u[\mathcal{R}](v_0,v_1)+4F_u[\Gamma](v_0,v_1)-2\int_{\mathbb{S}_{u,v}^2}\mathcal{B}(u,v,\theta,\phi)\slashed{\varepsilon}\Big|_{v_0}^{v_1},\\
\sum_k{\mathcal{E}}^T_v[\mathcal{L}_{\Omega_k}h](u_0,u_1)&=8F_v[\mathcal{R}](u_0,u_1)+4F_v[\Gamma](u_0,u_1)+2\int_{\mathbb{S}_{u,v}^2}\mathcal{B}(u,v,\theta,\phi)\slashed{\varepsilon}\Big|_{u_0}^{u_1}.
\end{align*}
with
\begin{align*}
\mathcal{B}[h]&\doteq {r^2}\Big(\otx \divs \elin -\otxb \divs \eblin -\rlin \big(\otx -\accentset{(1)}{(\Omega\mathrm{tr}{\underline{\chi}})}\big)+2\Omega\mathrm{tr}\chi\Big(\frac{\Olino }{\Omega}\Big)\rlin \\
&\nonumber+(\Omega\mathrm{tr}\chi)\langle\elin ,\eblin \rangle+\Big[\frac{\omega}{\Omega^2}-\frac{\mathrm{tr}\chi}{2\Omega}\Big]\otxb \otx \Big)+\sum_k\mathcal{A}[\mathcal{L}_{{\Omega}_k}h],
\end{align*}
where $\mathcal{A}[\mathcal{L}_{{\Omega}_k}h]$ results from expressing $\mathcal{A}[h]$ in equation~(\ref{BasicA}) in terms of $h$ and replacing it with $\mathcal{L}_{{\Omega}_k}h$. Moreover, the following conservation law is satisfied:
\begin{align}\label{Holzegel2}
F_{u_1}[\mathcal{R}](v_0,v_1)+F_{v_1}[\mathcal{R}](u_0,u_1)=F_{u_0}^T[\mathcal{R}](v_0,v_1)+F_{v_0}^T[\mathcal{R}](u_0,u_1).
\end{align}
\end{theorem}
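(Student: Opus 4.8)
The plan is to deduce Theorem~\ref{thm:CEntoGConb} from Theorem~\ref{thm:CanEntoGusCon} by applying the latter to the family of solutions $\mathcal{L}_{\Omega_k}h$ and then summing over $k$. Since each $\Omega_k$ is a Killing field of $(\mathrm{Schw}_4,g_s)$, the Lie derivative $\mathcal{L}_{\Omega_k}h$ again solves the linearised vacuum Einstein equation~\eqref{LVEE}; moreover, because the $\Omega_k$ generate the $SO(3)$ isometry preserving the double null foliation by the spheres $\mathcal{S}_{u,v}$, the perturbation $\mathcal{L}_{\Omega_k}h$ remains in double null gauge. I would first verify this gauge-preservation, as it is what licenses the direct application of Theorem~\ref{thm:CanEntoGusCon}. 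Granting it, the theorem yields, for each $k$,
\begin{align*}
\mathcal{E}^T_u[\mathcal{L}_{\Omega_k}h](v_0,v_1)&=2F_u[\Gamma(\mathcal{L}_{\Omega_k}h)](v_0,v_1)-2\int_{\mathbb{S}_{u,v}^2}\mathcal{A}[\mathcal{L}_{\Omega_k}h]\,\slashed{\varepsilon}\Big|_{v_0}^{v_1},
\end{align*}
and the analogous identity for $\mathcal{E}^T_v$, where $F_u[\Gamma(\mathcal{L}_{\Omega_k}h)]$ is the flux~\eqref{F1} built from the connection coefficients of $\mathcal{L}_{\Omega_k}h$. The essential observation is that the double null decomposition commutes with $\mathcal{L}_{\Omega_k}$, so those coefficients are precisely the angular Lie derivatives $\mathcal{L}_{\Omega_k}\xlin,\mathcal{L}_{\Omega_k}\eblin,\mathcal{L}_{\Omega_k}\olin,\mathcal{L}_{\Omega_k}\otx,\dots$ of the connection coefficients of $h$.

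The heart of the argument is then the evaluation of $\sum_k F_u[\Gamma(\mathcal{L}_{\Omega_k}h)]$. Substituting the angular-derivative expressions into~\eqref{F1} produces, on each sphere $\mathcal{S}_{u,v}$, terms of the schematic form $\sum_k|\mathcal{L}_{\Omega_k}\psi|^2$ and $\sum_k(\mathcal{L}_{\Omega_k}\psi_1)(\mathcal{L}_{\Omega_k}\psi_2)$. Because each $\Omega_k$ is Killing, $\mathcal{L}_{\Omega_k}$ is antisymmetric with respect to the $L^2$ inner product on $\mathcal{S}_{u,v}$, so integrating over the sphere turns each such sum into $-\int\langle\psi,\sum_k\mathcal{L}_{\Omega_k}\mathcal{L}_{\Omega_k}\psi\rangle$. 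I would then invoke the $SO(3)$ Casimir identity: $\sum_k\mathcal{L}_{\Omega_k}\mathcal{L}_{\Omega_k}$ equals $r^2\ds$ up to an explicit constant shift fixed by the $\mathcal{S}_{u,v}$-tensor type (zero for scalars, nonzero for one- and symmetric traceless two-tensors, descending from the round-sphere Gauss curvature $r^{-2}$). This cleanly splits the contribution in two: the constant-shift part reproduces the connection-level integrands of~\eqref{F1} and supplies the residual $4F_u[\Gamma]$, while the $r^2\ds$ part is what must be upgraded to curvature.

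I would carry out that upgrade by feeding the angular Laplacian of connection coefficients into the linearised null structure equations — in particular the linearised Codazzi and Gauss equations of propositions~\ref{LinCod} and~\ref{LinGauss} — together with the linearised Bianchi identities of proposition~\ref{LinBianchi}, thereby trading $\ds\Gamma$ for the curvature components $\blin,\eblin,\slin,\rlin$ and assembling the flux $F_u[\mathcal{R}]$ of~\eqref{FA1} with its coefficient $8$. It is exactly here that the Bianchi system enters, in contrast with the proof of Theorem~\ref{thm:CanEntoGusCon}, where the null structure equations sufficed. I expect the main obstacle to be the bookkeeping in this conversion: the precise numerical coefficients — the factor $8$ before $F_u[\mathcal{R}]$, the residual $4F_u[\Gamma]$, and the exact density $\mathcal{B}$ — must assemble from many background-dependent terms ($D$, $\rho=-2M/r^3$, $\Omega\mathrm{tr}\chi$, $\omega$ and their transport equations), with the characteristic $\rho$-weighted terms of~\eqref{FA1}--\eqref{FA2} descending from the Schwarzschild values of $K$, $\mathrm{tr}\chi$ and $\rho$. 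Moreover the null-direction integrations by parts needed to invoke the transport equations throw off boundary contributions on the end spheres $\mathcal{S}_{u,v_0}$ and $\mathcal{S}_{u,v_1}$; these, together with the explicit $\sum_k\mathcal{A}[\mathcal{L}_{\Omega_k}h]$ already present, are what must be shown to organise exactly into $\mathcal{B}$. Verifying that no stray terms survive is the delicate part, though conceptually the step is forced once the structure and Bianchi equations are used.

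Finally, the conservation law~\eqref{Holzegel2} follows formally, exactly as~\eqref{MODCE2} followed from~\eqref{CEconlaw}. Summing the canonical-energy conservation law~\eqref{CEconlaw} over the solutions $\mathcal{L}_{\Omega_k}h$ shows that $\sum_k\mathcal{E}^T[\mathcal{L}_{\Omega_k}h]$ is conserved across the characteristic rectangle of figure~\ref{fig1}. Inserting the decomposition just established rewrites this in terms of $8F[\mathcal{R}]+4F[\Gamma]$ and the $\mathcal{B}$ boundary terms. The $4F[\Gamma]$ contributions cancel by the already-proved conservation law~\eqref{MODCE2}, while the $\mathcal{B}$ terms, appearing with opposite signs on the $C_u$ and $\underline{C}_v$ families, telescope over the four corners of the rectangle and cancel in precisely the same way the $\mathcal{A}$ terms did in Theorem~\ref{thm:CanEntoGusCon}. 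What remains is the asserted conservation of $F[\mathcal{R}]$.
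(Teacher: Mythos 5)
Your proposal follows essentially the same route as the paper: apply the decomposition of Theorem~\ref{thm:CanEntoGusCon} to the solutions $\mathcal{L}_{\Omega_k}h$ (checking the double null gauge form is preserved), sum over $k$, reduce $\sum_k|\slashed{\mathcal{L}}_{\Omega_k}\cdot|^2$ to angular-derivative quantities via the $SO(3)$ structure of the spheres, trade $\divs\xlin$, $\divs\xblin$, $\curls\elin$ for curvature via the linearised Codazzi and torsion constraints together with the Bianchi identity for $\rlin$, and identify the boundary density $\mathcal{B}$ whose corner cancellations yield~\eqref{Holzegel2}. The only cosmetic difference is that you phrase the angular reduction via $L^2$-antisymmetry and the Casimir $\sum_k\mathcal{L}_{\Omega_k}\mathcal{L}_{\Omega_k}$, whereas the paper uses the equivalent pointwise identities $\sum_k|\slashed{\mathcal{L}}_{\Omega_k}\xi|^2=r^2|\ns\xi|^2+|\xi|^2$ and $\sum_k|\slashed{\mathcal{L}}_{\Omega_k}\Theta|^2=r^2|\ns\Theta|^2+4|\Theta|^2$ followed by integration by parts on $\mathbb{S}^2_{u,v}$.
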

The reader should note that Holzegel also identified the conservation law~(\ref{Holzegel2}) in~\cite{Holzegel} by eye (see equations (82) and (83) and proposition 8.1 in~\cite{Holzegel}).

Section~\ref{alphacons} constructs an additional `higher order' $T$-canonical energy conservation law associated to~$\mathcal{L}_Th$. For convenience define the following fluxes:
\begin{align*}
\dot{F}_u[\mathcal{R}](v_0,v_1)&\doteq\int_{v_0}^{v_1}\int_{\mathbb{S}_{u,v}^2}\Big(\frac{\Omega^4}{4}|\alin |^2+\frac{3}{2}\Omega^4\big(|\rlin |^2+|\slin |^2+|\blin |^2\big)+\frac{\Omega^4}{2}|\bblin |^2+f_2|\xblin |^2+f_1|\xlin |^2+f_3|\eblin |^2
\\
&\nonumber\qquad\qquad-\frac{f_3}{\Omega^2}\olin \otxb +\frac{2}{\Omega^2}\Big(\omega f_3+{2\Omega\mathrm{tr}\chi f_2}\Big)\Big(\frac{\Olino }{\Omega}\Big)\otx -\frac{f_1}{2\Omega^2}\otx ^2\\
&\nonumber\qquad\qquad-\frac{f_2}{2\Omega^2}\Big[\otxb +2(\Omega\mathrm{tr}\chi)\Big(\frac{\Olino }{\Omega}\Big)\Big]^2\Big)dv\slashed{\varepsilon},\\
\dot{F}_v[\mathcal{R}](u_0,u_1)&\doteq\int_{u_0}^{u_1}\int_{\mathbb{S}_{u,v}^2}\Big(\frac{\Omega^4}{4}|\ablin |^2+\frac{3}{2}\Omega^4\big(|\rlin |^2+|\slin |^2+|\bblin |^2\big)+\frac{\Omega^4}{2}|\blin |^2+f_1|\xblin |^2+f_2|\xlin |^2+ f_3|\elin |^2\\
&\nonumber\qquad\qquad-\frac{f_3}{\Omega^2}\olinb \otx  -\frac{2}{\Omega^2}\Big(\omega f_3+{2\Omega\mathrm{tr}\chi f_2}\Big)\Big(\frac{\Olino }{\Omega}\Big)\otxb -\frac{f_1}{2\Omega^2}\otxb ^2\\
&\nonumber\qquad\qquad-\frac{f_2}{2\Omega^2}\Big[\otx -2(\Omega\mathrm{tr}\chi)\Big(\frac{\Olino }{\Omega}\Big)\Big]^2\Big)du\slashed{\varepsilon}
\end{align*}
with
\begin{align*}
f_1&\doteq -\Omega^2\Big(\omega^2+\frac{5}{4}\Omega^2\rho\Big),\quad
f_2\doteq -\frac{3}{4}\Omega^4\rho,\quad
f_3\doteq 2\Omega^2(\Omega^2\rho-\omega^2).
\end{align*}
Additionally, the following theorem is proved in section~\ref{alphacons}:
\begin{theorem}\label{TH5}
Suppose $h$ is a smooth solution of the linearised vacuum Einstein equation~(\ref{LVEE}) in double null gauge on the Schwarzschild black hole exterior. Then the following conservation law is satisfied:
\begin{align}\label{alphaconservationlaw}
\dot{F}_{u_1}[\mathcal{R}](v_0,v_1)+\dot{F}_{v_1}[\mathcal{R}](u_0,u_1)=\dot{F}_{u_0}[\mathcal{R}](v_0,v_1)+\dot{F}_{v_0}[\mathcal{R}](u_0,u_1).
\end{align}
\end{theorem}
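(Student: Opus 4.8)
The plan is to run the same argument that underlies Theorems~\ref{thm:CanEntoGusCon} and~\ref{thm:CEntoGConb}, but applied to the solution $\mathcal{L}_T h$ in place of $h$. Since $T=\partial_t$ is Killing and $h$ solves~\eqref{LVEE}, the tensor $\mathcal{L}_T h$ is again a solution of~\eqref{LVEE}; moreover the flow of $T$ acts by equal translation in $u$ and $v$ and fixes the angular variables, so it preserves the double null foliation by the spheres $\mathcal{S}_{u,v}$ and hence $\mathcal{L}_T h$ remains in double null gauge. Consequently the canonical energy conservation law~\eqref{CEconlaw} holds verbatim with $h$ replaced by $\mathcal{L}_T h$. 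The theorem therefore reduces to establishing decomposition identities of the form
\[
\mathcal{E}^T_u[\mathcal{L}_T h](v_0,v_1)=2\dot{F}_u[\mathcal{R}](v_0,v_1)-2\int_{\mathbb{S}_{u,v}^2}\dot{\mathcal{A}}\,\slashed{\varepsilon}\Big|_{v_0}^{v_1},
\]
\[
\mathcal{E}^T_v[\mathcal{L}_T h](u_0,u_1)=2\dot{F}_v[\mathcal{R}](u_0,u_1)+2\int_{\mathbb{S}_{u,v}^2}\dot{\mathcal{A}}\,\slashed{\varepsilon}\Big|_{u_0}^{u_1},
\]
for an auxiliary boundary density $\dot{\mathcal{A}}$, in exact parallel with~\eqref{RelationGW1}--\eqref{RelationGW2}: once these hold, the conservation of $\mathcal{E}^T[\mathcal{L}_T h]$ together with the cancellation of the boundary terms around the characteristic rectangle yields~\eqref{alphaconservationlaw}.

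To produce this decomposition I would first feed $\mathcal{L}_T h$ into Theorem~\ref{thm:CanEntoGusCon}, which expresses $\mathcal{E}^T_u[\mathcal{L}_T h]$ through the connection flux $F_u[\Gamma]$ evaluated on the linearised connection and metric coefficients of $\mathcal{L}_T h$, namely $\mathcal{L}_T\xlin$, $\mathcal{L}_T\eblin$, $\mathcal{L}_T\olin$, $\mathcal{L}_T\otx$, $\mathcal{L}_T\otxb$ and $\mathcal{L}_T\Olino$. The decisive step is to trade these $\mathcal{L}_T$-derivatives for curvature using the linearised null structure equations of Propositions~\ref{LinMet}--\ref{LinCod}: because $T=\tfrac12(\partial_u+\partial_v)$ combines the two null directions, each such equation converts the $\mathcal{L}_T$-derivative of a connection coefficient into a curvature component plus lower-order connection terms. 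In particular, in $F_u$ the outgoing null derivative of the shear $\xlin$ supplies the Teukolsky component $\alin$, and symmetrically in $F_v$ the ingoing null derivative of $\xblin$ supplies $\ablin$, while the $\mathcal{L}_T$-derivatives of $\eblin,\elin$ and of the traces supply $\blin,\bblin,\rlin,\slin$. This is precisely the ascent of the Bianchi hierarchy responsible for the leading terms $\tfrac{\Omega^4}{4}|\alin|^2$ in $\dot{F}_u[\mathcal{R}]$ and $\tfrac{\Omega^4}{4}|\ablin|^2$ in $\dot{F}_v[\mathcal{R}]$. Substituting these relations, the lower-order parts of each structure equation (the terms proportional to $\omega$ and to the background $\rho$) survive and, after integrating by parts in $v$ (resp.\ $u$) now using the linearised Bianchi equations of Proposition~\ref{LinBianchi}, reassemble into the remaining contributions to $\dot{F}_u[\mathcal{R}]$; the weights $f_1,f_2,f_3$ emerge as exactly the coefficients this collection leaves behind.

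The main obstacle is this collection itself: one must track every lower-order term generated when replacing $\mathcal{L}_T$-connection coefficients by curvature and carry out the integration by parts so that the bulk integrand lands on $\dot{F}_u[\mathcal{R}]$ with no residue, every remaining piece being either one of the displayed squares or a total boundary term. Two features demand particular care. First, because $\mathcal{L}_T$ mixes the two null derivatives, the cross terms between $\alin$, $\xlin$ and the trace quantities must be completed into the squares appearing in $\dot{F}_u[\mathcal{R}]$, in particular $\big[\otxb+2(\Omega\mathrm{tr}\chi)(\Olino/\Omega)\big]^2$, and one must keep $\alin$ and $\ablin$ correctly sorted onto the outgoing and ingoing fluxes. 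Second, one must verify that the accumulated boundary contribution is a genuine $\mathcal{S}_{u,v}$-integral $\dot{\mathcal{A}}$ evaluated only at the endpoints, with opposite signs on the two families of cones, exactly as for $\mathcal{A}$ in~\eqref{RelationGW1}--\eqref{RelationGW2}. Granting this, the boundary terms cancel in pairs as one traverses the rectangle, precisely as the $\mathcal{A}$-terms cancel in the proof of~\eqref{MODCE2}, and~\eqref{alphaconservationlaw} follows from the conservation of $\mathcal{E}^T[\mathcal{L}_T h]$. Finally, I note that even should the decomposition produce, besides $\dot{F}[\mathcal{R}]$, multiples of the lower fluxes $F[\mathcal{R}]$ or $F[\Gamma]$, the conservation of $\dot{F}[\mathcal{R}]$ still follows, since those fluxes are already known to be conserved by Theorems~\ref{thm:CEntoGConb} and~\ref{thm:CanEntoGusCon}.
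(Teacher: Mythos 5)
Your proposal is correct and follows essentially the same route as the paper's proof: the paper's ``constructive'' argument likewise replaces $h$ by $\mathcal{L}_Th$ in the decomposition of Theorem~\ref{thm:CanEntoGusCon}, converts $\slashed{\mathcal{L}}_T\xlin,\ \slashed{\mathcal{L}}_T\xblin,\ \slashed{\mathcal{L}}_T\elin,\ \slashed{\mathcal{L}}_T\eblin$ into curvature via Propositions~\ref{LinShear} and~\ref{LinTorProp}, and identifies a boundary density (called $\mathcal{C}[h]$ there, your $\dot{\mathcal{A}}$) whose cancellation at the corner spheres, combined with conservation of $\mathcal{E}^T[\mathcal{L}_Th]$, yields~\eqref{alphaconservationlaw}. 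The paper also records a shorter alternative---directly checking that $\ns_3$ of the $\dot{F}_u$-density plus $\ns_4$ of the $\dot{F}_v$-density vanishes after integration by parts on the spheres---but the constructive route you describe is the one the author actually used to derive the fluxes.
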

\begin{remark}
To the best of the authors knowledge, the conservation law~\eqref{alphaconservationlaw} has not been previously derived. 
\end{remark}
\begin{remark}
It is reasonable to expect that one could apply this conservation law in the spirit of~\cite{HolzegelSC} to establish coercivity and, therefore, gain $L^2$-control of $\alin$ directly. This relies upon the precise transformation of $\dot{F}_{u}[\mathcal{R}](v_0,v_f)$ under a change of gauge, which one would have to work out. 
\end{remark}

\subsection{Outlook}\label{Outlook}
It is conjectured that the subextremal Kerr black hole spacetime is asymptotically stable as a solution to the vacuum Einstein equation (see section IV of the introduction of~\cite{DHRT} for a precise formulation of this conjecture and the works~\cite{KS3,GKS} on the non-linear stability in slowly rotating regime). In view of the works~\cite{DHR2} and~\cite{YakovRita1,YakovRita2} on quantitative boundedness and decay for the Teukolsky equation on subextremal Kerr, the full linear stability of the subextremal Kerr spacetime is within reach in analogy with~\cite{DHR}. However, this work in conjuction with~\cite{Holzegel,HolzegelSC} provides an alternative path to produce energy boundedness for the subextremal Kerr spacetime. The expectation in the Kerr case is that one could prove a restricted stability result for \textit{axisymmetric perturbations}. One should also note the work of Moncrief and Gudapati~\cite{Moncrief2} in this direction.

Another interesting open direction is the extension of these results to other matter models. The canonical energy arises naturally from the Einstein--Hilbert action for the Einstein vacuum equation by considering antisymmetrised variations of the action. The notion of canonical energy extends naturally to many theories with a Lagrangian formulation (see Keir~\cite{Joe}). Therefore, the canonical energy as formulated in this aper could be used to establish stability statements outside of vacuum. 

\section*{Acknowledgements}
First and foremost, the author would like to thank Gustav Holzegel for many insightful discussions during this project and for useful comments on the manuscript. Thanks also go to Mihalis Dafermos for the impetus for this project and helpful comments. Additionally, the auther is grateful to Martin Taylor and Claude Warnick who gave constructive feedback on an early version of this work that appeared in the author's Ph.D. thesis.

The author acknowledges that the majority of the research presented in this work was carried out as a Ph.D. student in the Cambridge Centre for Analysis at the University of Cambridge. In particular, support for this work came from UK Engineering and Physical Sciences Research Council (EPSRC) Grant No. EP/L016516/1 for the University of Cambridge Centre for Doctoral Training, the Cambridge Centre for Analysis.

\section{Canonical Energy and Conserved Currents}\label{CanEn}

In this section a pedestrian approach to the canonical energy is presented. The canonical energy for the linearised vacuum Einstein equation~(\ref{LVEE}) is defined in section~\ref{CanLVEE}. In section~\ref{sec:HOCE}, higher order canonical energies are discussed. Finally, in section~\ref{sec:NCurr}, a novel current is defined for the linearised vacuum Einstein equation~\eqref{LVEE}, which will see application in future work.

Note that the exposition in this section gives a different viewpoint on the canonical energy from that of the original work of Hollands and Wald. In particular, the canonical energy is simply viewed as a quantity for the linearised metric $h$ arising from a `vector field current' which can be evaluated \textit{locally}. This section does not discuss the nice connection of the canonical energy to black hole thermodynamics or the stability criterion associated to the canonical energy when evaluated on Cauchy hypersurfaces. The interested reader should consult~\cite{HolW}.

\subsection{Canonical Energy for the Linearised Vacuum Einstein Equation}\label{CanLVEE}
It turns out to be convenient to rewrite the linearised vacuum Einstein equation~(\ref{LVEE}) (plus its trace) in the following form:
\begin{prop}\label{PrincipleLVEE}
Suppose $(\mathcal{M},g)$ satisfies the vacuum Einstein equation~(\ref{VE}) and $h$ satisfies the linearised vacuum Einstein equation~(\ref{LVEE}). Then
\begin{align}\label{eqn:LVEEP}
{{P^{a}}_{(bc)}}^{def}\nabla_a\nabla_dh_{ef}=0,
\end{align}
where $P$ is defined in equation~\eqref{eqn:P}.
\end{prop}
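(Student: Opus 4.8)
The plan is to prove \eqref{eqn:LVEEP} by a direct tensorial computation, showing that its left-hand side is a fixed linear combination of the linearised vacuum Einstein operator and its metric trace, each of which vanishes under the hypotheses. No global or analytic input is needed; the statement is a pointwise algebraic rearrangement.

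First I would make the tensor ${P^a}_{(bc)}{}^{def}$ explicit by lowering the indices $b,c$ in \eqref{eqn:P} with the metric and symmetrising over the pair $(bc)$, producing five terms. Contracting each against $\nabla_a\nabla_d h_{ef}$, and using that $g$ is covariantly constant so that $g^{ef}$ passes freely through the derivatives (converting $h_{ef}$-contractions into $\tr h$), collapses the expression into a handful of recognisable pieces: a wave term $-\tfrac12\,g^{pq}\nabla_p\nabla_q h_{bc}$, a Hessian $-\tfrac12\nabla_b\nabla_c\tr h$, a symmetrised double-divergence term of the schematic form $\nabla^e\nabla_{(c}h_{b)e}$, and two pure-trace terms proportional to $g_{bc}$, namely $-\tfrac12 g_{bc}\nabla^e\nabla^f h_{ef}$ and $+\tfrac12 g_{bc}\,g^{pq}\nabla_p\nabla_q\tr h$.

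The crux is that the double-divergence piece carries its contracted derivative on the \emph{outside}, appearing as $\nabla^e\nabla_c h_{eb}$, whereas the linearised equation \eqref{LVEE} contains $\nabla_c(\mathrm{div}\,h)_b=\nabla_c\nabla^e h_{eb}$ with the divergence taken first. I would therefore commute these covariant derivatives using the Ricci identity. The commutator acting on the $(0,2)$-tensor $h$ produces two curvature contractions: a Ricci contraction, which vanishes by the vacuum hypothesis $\mathrm{Ric}[g]=0$, and a genuine Riemann term of the form ${R_b}{}^{d}{}_c{}^{e}h_{de}$. This is precisely the curvature term of \eqref{LVEE}, so after reordering the double-divergence contributes exactly the $-\nabla_b(\mathrm{div}\,h)_c-\nabla_c(\mathrm{div}\,h)_b$ and $+2{R_b}{}^{d}{}_c{}^{e}h_{de}$ pieces, up to the overall factor.

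Assembling everything, I expect to find that ${P^a}_{(bc)}{}^{def}\nabla_a\nabla_d h_{ef}=-\tfrac12 E_{bc}+\tfrac14 g_{bc}\,g^{pq}E_{pq}$, where $E_{bc}$ denotes the left-hand side of \eqref{LVEE}; the two pure-trace $g_{bc}$ terms reconstruct exactly $\tfrac14 g_{bc}$ times the trace $g^{pq}E_{pq}$, once more using $\mathrm{Ric}[g]=0$ to discard the Ricci contraction in that trace. Since $E_{bc}=0$ by hypothesis, both contributions vanish and the identity follows. The main obstacle is purely bookkeeping: matching the Riemann term from the commutator to the curvature term of \eqref{LVEE} requires careful use of the algebraic symmetries $R_{abcd}=-R_{bacd}=R_{cdab}$ together with $h_{ef}=h_{fe}$, and one must fix a sign convention for the Ricci identity at the outset so that the Ricci contraction killed by the vacuum condition is unambiguously identified.
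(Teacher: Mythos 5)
Your proposal is correct and follows essentially the same route as the paper: expand $P$ into the five terms, observe that the two pure-trace pieces combine into a multiple of the trace of the linearised equation, and use the Ricci identity together with $\mathrm{Ric}[g]=0$ to convert the outer-contracted double divergence $\nabla^a\nabla_{(c}h_{b)a}$ into $\nabla_{(c}(\mathrm{div}h)_{b)}$ plus the Riemann term of~\eqref{LVEE}. The only cosmetic difference is that the paper cancels the trace terms immediately rather than packaging them as $\tfrac14 g_{bc}\,g^{pq}E_{pq}$.
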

\begin{proof}
Computing directly, using equation~(\ref{eqn:P}), gives
\begin{align*}
{{P^{a}}_{(bc)}}^{def}\nabla_a\nabla_dh_{ef}&=\nabla^a\nabla_{(c}h_{b)a}-\frac{1}{2}\Box_gh_{bc}-\frac{1}{2}\nabla_b\nabla_c\mathrm{Tr}_{g}h-\frac{1}{2}\mathrm{div}(\mathrm{div}h)g_{bc}+\frac{1}{2}(\Box_g\mathrm{tr}_gh)g_{bc}.
\end{align*}
The last two terms are the trace of the linearised vacuum Einstein equation~(\ref{LVEE}) and, therefore, cancel. Moreover, one can apply the Ricci identity to the the first term to give
\begin{align*}
{{P^{a}}_{(bc)}}^{def}\nabla_a\nabla_dh_{ef}=\nabla_{(c}\mathrm{div}h_{b)}-{{{R_{(b}}^a}_{c)}}^dh_{ad}-\frac{1}{2}\Box_gh_{bc}-\frac{1}{2}\nabla_b\nabla_c\mathrm{tr}_{g}h,
\end{align*}
where one uses $\mathrm{Ric}[g]=0$. Therefore, using the symmetries of the Riemann tensor and the linearised vacuum Einstein equation~(\ref{LVEE}) one has the result. 
\end{proof}

\begin{prop}\label{closedexp}
Suppose $(\mathcal{M},g)$ satisfies the vacuum Einstein equation and $h_1$ and $h_2$ satisfy the linearised vacuum Einstein equation~(\ref{LVEE}). Then the symplectic current $j[h_1,h_2]$ defined in equation~\eqref{eqn:SymplecticCurrentLVEE} is divergence free.
\end{prop}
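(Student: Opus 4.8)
The plan is to compute the divergence $\nabla_a j[h_1,h_2]^a$ directly from the definition~\eqref{eqn:SymplecticCurrentLVEE} and show that it vanishes. Since $P$ in~\eqref{eqn:P} is built algebraically from the metric, it is covariantly constant, so the Leibniz rule yields
\[
\nabla_a j[h_1,h_2]^a = P^{abcdef}\big(\nabla_a(h_2)_{bc}\nabla_d(h_1)_{ef} - \nabla_a(h_1)_{bc}\nabla_d(h_2)_{ef}\big) + P^{abcdef}\big((h_2)_{bc}\nabla_a\nabla_d(h_1)_{ef} - (h_1)_{bc}\nabla_a\nabla_d(h_2)_{ef}\big).
\]
I would then treat the two groups separately: the second-derivative terms and the first-derivative (cross) terms.

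For the second-derivative terms, the key observation is that contracting $P^{abcdef}$ against the symmetric tensor $(h_2)_{bc}$ reproduces, after raising indices, exactly the operator appearing in Proposition~\ref{PrincipleLVEE}. Explicitly, since $(h_2)_{bc}$ is symmetric, $P^{abcdef}(h_2)_{bc}\nabla_a\nabla_d(h_1)_{ef} = (h_2)^{bc}\big({P^a}_{(bc)}{}^{def}\nabla_a\nabla_d(h_1)_{ef}\big)$, which vanishes because $h_1$ solves~\eqref{LVEE} and hence satisfies~\eqref{eqn:LVEEP}. The same argument with the roles of $h_1$ and $h_2$ exchanged kills the second term, so the entire second-derivative contribution is zero.

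The remaining first-derivative terms must cancel against one another, and this is where the main work lies. Relabelling the dummy indices in the first term by the triple swap $(a,b,c)\leftrightarrow(d,e,f)$ rewrites it as $P^{defabc}\nabla_a(h_1)_{bc}\nabla_d(h_2)_{ef}$, so the two terms cancel precisely when $P^{abcdef}$ and $P^{defabc}$ agree after symmetrizing over the pairs $(b,c)$ and $(e,f)$ — legitimate here because both are contracted against the symmetric objects $\nabla(h_1)$ and $\nabla(h_2)$. Thus the crux is to establish the pairwise exchange symmetry
\[
P^{a(bc)d(ef)} = P^{d(ef)a(bc)},
\]
which I expect to be the main obstacle. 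I would verify it by inspecting~\eqref{eqn:P} term by term: the terms $-\tfrac{1}{2}g^{ad}g^{be}g^{cf}$ and $+\tfrac{1}{2}g^{ad}g^{ef}g^{bc}$ are manifestly invariant under the swap, while the remaining three terms permute among themselves once one uses the symmetry of $g$ together with the symmetrizations over $(b,c)$ and $(e,f)$. With this exchange symmetry in hand the two first-derivative terms cancel identically, which completes the proof that $\nabla_a j[h_1,h_2]^a = 0$.
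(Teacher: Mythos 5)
Your proposal is correct and follows essentially the same route as the paper: the second-derivative terms vanish by Proposition~\ref{PrincipleLVEE}, and the first-derivative cross terms cancel by an index-exchange symmetry of $P$ (the paper records it in the exact form $P^{abcdef}=P^{dfeacb}$, which is equivalent to your symmetrized version $P^{a(bc)d(ef)}=P^{d(ef)a(bc)}$ once $P$ is contracted against the symmetric tensors $\nabla h_1$ and $\nabla h_2$). Your write-up simply spells out the "direct computation" that the paper's proof leaves implicit.
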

\begin{proof}
Follows from a direct computation using proposition~\ref{PrincipleLVEE} and noting that $P$ has the symmetry
\begin{align*}
P^{abcdef}=P^{dfeacb}.
\end{align*} 
\end{proof}
\begin{remark}
As Hollands and Wald illustrate in~\cite{HolW}, the current ${j}[h_1,h_2]$ arises naturally by considering antisymmeterised second variations of the Einstein--Hilbert action for a vacuum spacetime $(\mathcal{M},g)$ which has Lagrangian density
\begin{align*}
L[g]\doteq\frac{1}{16\pi}\mathrm{Scal}[g]\varepsilon,
\end{align*}
where $\mathrm{Scal}[g]$ is the Ricci scalar of $g$ and $\varepsilon$ is the volume form associated to $g$. 
\end{remark}
\begin{definition}[Canonical Energy]\label{CanonicalEnergyDefinition}
Let $(\mathcal{M},g)$ be a spacetime satisfying the vacuum Einstein equation~(\ref{VE}) and $h$ be a solution to the linearised vacuum Einstein equation~(\ref{LVEE}). Let $X$ be a vector field and $\Sigma$ be a hypersurface in $(\mathcal{M},g)$ with (unit) normal $n$. The $X$-canonical energy current is defined as
\begin{align}\label{XCEC}
\mathcal{J}^X[h]\doteq j[h,\mathcal{L}_Xh]
\end{align}
with the associated $X$-canonical energy on $\Sigma$ is defined as
\begin{align*}
\mathcal{E}^X_{\Sigma}[h]\doteq \int_{\Sigma}n(\mathcal{J}^X[h])\mathrm{dvol}_{\Sigma}.
\end{align*}
\end{definition}
\begin{remark}
This definition extends the definition of the canonical energy given by Hollands and Wald in the sense that~\cite{HolW} considers the case only where $(\mathcal{M},g)$ is stationary with stationary Killing field $T$ and then take~$X=T$.
\end{remark}

If $h$ is a smooth solution to the linearised vacuum Einstein equation~(\ref{LVEE}) and $X$ a Killing vector field applying the divergence theorem in conjunction with proposition~\ref{closedexp} to $\mathcal{J}^X[h]$ between two homologous hypersurfaces $\Sigma_1$ and $\Sigma_2$ in a vacuum spacetime yields a conservation law
\begin{align*}
\mathcal{E}^X_{\Sigma_2}[h]=\mathcal{E}^X_{\Sigma_1}[h].
\end{align*}
\begin{remark}
If $X$ is \textit{not} a Killing vector field then the above can be modified to include $\mathrm{div}(\mathcal{J}^X[h])\neq 0$ as a spacetime bulk term. This could be useful if one wanted to produce a Morawetz-type~\cite{Morawetz} spacetime estimate. 
\end{remark}
 
\subsection{Higher Order Canonical Energies}\label{sec:HOCE}
Suppose the stationary spacetime $(\mathcal{M},g)$ has a Killing field $k$ (not necessarily the stationary Killing vector field $T$). If $h$ is a solution to the linearised vacuum Einstein equation~(\ref{LVEE}) then $\mathcal{L}_kh$ is also a solution i.e. one can commute as many Lie derivatives $\mathcal{L}_k$ through the linearised vacuum Einstein operator as one likes. Hence, one can consider a `higher order' $X$-canonical energy $\mathcal{E}^X_{\Sigma}[\mathcal{L}^m_kh]$ resulting from
\begin{align*}
(\mathcal{J}^X[\mathcal{L}^m_kh])^a=P^{abcdef}\Big((\mathcal{L}_X\mathcal{L}^m_kh)_{bc}\nabla_d(\mathcal{L}_k^mh)_{ef}-(\mathcal{L}_k^mh)_{bc}\nabla_d(\mathcal{L}_X\mathcal{L}_k^mh)_{ef}\Big).
\end{align*}
For the $4$-dimensional Schwarzschild solution one has the Killing fields associated to spherical symmetry $\Omega_i$ $i=1,2,3$. In sections~\ref{betaconsproof} and~\ref{alphacons}, the $T$-canonical energies of $\mathcal{L}_{\Omega_i}h$ and $\mathcal{L}_{T}h$ will be evaluated.
\subsection{A Modified Canonical Energy Current}\label{sec:NCurr}
This section is slightly peripheral to the main topic of this paper. Namely, this section constructs a new current for the linearised vacuum Einstein equation in close analogy with the currents that arise from the energy--momentum tensor for the wave equation~\cite{Alinhac}. The claim is that the current defined in proposition~\ref{MorConserved} is computationally advantageous over the canonical energy current. See~\cite{MyThesis} for an application to Schwarzschild--Tangherlini.

A way around the problem of a lack of energy--momentum tensor is simply to abandon the view point that these currents arise from a energy--momentum tensor and approach with a `vector field multiplier' method. This method proceeds as follows. Let $X$ be a vector field and suppose $f$ is a scalar or a tensor on a spacetime $(\mathcal{M},g)$ which solves some linear equation
\begin{align}
    \mathfrak{D}_{g}f=0,\label{ExampleEquation}
\end{align}
where $\mathfrak{D}_g$ is some differential operator depending on the metric. Then one can try to construct an `$X$-energy' for the equation~(\ref{ExampleEquation}) by multiplying the equation by $\mathcal{L}_X(f)$ and trying to write the expression as a total divergence plus terms that vanish if $X$ is a Killing symmetry of the spacetime. For example, one can consider constructing an energy for solution $\Phi\in C^{\infty}(M)$ to the wave equation~\eqref{eq:WE} in this manner. Multiplying the equation by $\mathcal{L}_X(\Phi)=X(\Phi)$ and collecting derivatives gives
\begin{align}
    0&=X(\Phi)\Box_g\Phi=\mathrm{div}(J^X[\Phi])-({\Pi}^X)^{ab}\Big(\nabla_a\Phi\nabla_b\Phi-\frac{1}{2}g_{ab}|\nabla\Phi|^2_g\Big),\nonumber
\end{align}
where
\begin{align}\label{eqn:WECurrent}
J^X[\Phi]_a&\doteq X(\Phi)\nabla_a\Phi-\frac{1}{2}X_a|\nabla\Phi|^2_g,
\end{align}
and $\Pi^X=\frac{1}{2}\mathcal{L}_Xg$ is the deformation tensor. Here, $J^X[\Phi]_a$ is the usual current arising from the energy--momentum tensor. Therefore, if one had no knowledge of the energy--momentum tensor, one could produce the conservation law for arising from $J^X[\Phi]_a$ by this simple vector field multiplier method. 

It turns out that for the linearised vacuum Einstein equation~(\ref{LVEE}) one can perform an analogous computation to this vector field multiplier view point by expressing the linearised vacuum Einstein equation (plus its trace) in the form~\eqref{eqn:LVEEP}. If one now contracts with $(\mathcal{L}_Xh)^{bc}$ and tries to write the expression as a total divergence plus terms that vanish if $X$ is a Killing symmetry, one finds the following result. 
\begin{prop}\label{MorConserved}
Let $(\mathcal{M},g)$ solve the vacuum Einstein equation~(\ref{VE}) and $h$ be a solution to the linearised vacuum Einstein equation~(\ref{LVEE}) and $X$ a Killing vector field. Let 
\begin{align}
(\mathfrak{J}^X[h])^a\doteq P^{abcdef}(\mathcal{L}_Xh)_{bc}\nabla_dh_{ef}-\frac{1}{2}X^aP(\nabla h,\nabla h),\qquad P(\nabla h,\nabla h)&\doteq P^{abcdef}\nabla_ah_{bc}\nabla_dh_{ef},\label{MorCurrent}
\end{align}
where $P$ is defined in equation~\eqref{eqn:P}. Then if $X$ is Killing then $\mathfrak{J}^X[h]$ is divergence-free. 
\end{prop}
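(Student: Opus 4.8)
The plan is to compute $\nabla_a(\mathfrak{J}^X[h])^a$ directly and to show that every surviving term either vanishes by Proposition~\ref{PrincipleLVEE} or cancels once $X$ is Killing. This is the exact tensorial analogue of the scalar identity $0=X(\Phi)\Box_g\Phi=\mathrm{div}(J^X[\Phi])-(\Pi^X)^{ab}(\nabla_a\Phi\nabla_b\Phi-\tfrac12 g_{ab}|\nabla\Phi|^2_g)$ carried out for the wave equation in Section~\ref{sec:NCurr}, with $(\mathcal{L}_Xh)_{bc}$ playing the role of the multiplier $X(\Phi)$. Throughout I use that $P$ in~\eqref{eqn:P} is built algebraically from $g$, so $\nabla_a P^{bcdefg}=0$ by metric compatibility, and that $(\mathcal{L}_Xh)_{bc}$ inherits the symmetry of $h_{bc}$.

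First I would expand the divergence of the first term of~\eqref{MorCurrent}. By metric compatibility and the Leibniz rule,
\[
\nabla_a\big(P^{abcdef}(\mathcal{L}_Xh)_{bc}\nabla_dh_{ef}\big)=P^{abcdef}\nabla_a(\mathcal{L}_Xh)_{bc}\,\nabla_dh_{ef}+P^{abcdef}(\mathcal{L}_Xh)_{bc}\,\nabla_a\nabla_dh_{ef}.
\]
Because $(\mathcal{L}_Xh)_{bc}$ is symmetric in $b,c$, contracting it against $P^{abcdef}$ symmetrises the first index pair, so the second term equals $(\mathcal{L}_Xh)^{bc}\,{{P^{a}}_{(bc)}}^{def}\nabla_a\nabla_d h_{ef}$, which vanishes identically by Proposition~\ref{PrincipleLVEE} (equation~\eqref{eqn:LVEEP}). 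I denote the surviving contribution $T_1\doteq P^{abcdef}\nabla_a(\mathcal{L}_Xh)_{bc}\,\nabla_dh_{ef}$.

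Next I would treat the second term of~\eqref{MorCurrent}. Expanding its divergence gives $-\tfrac12(\nabla_aX^a)P(\nabla h,\nabla h)-\tfrac12 X^a\nabla_aP(\nabla h,\nabla h)$; the first piece drops because a Killing field is divergence-free, $\nabla_aX^a=0$. For the second piece, $P(\nabla h,\nabla h)$ is a scalar, so $X^a\nabla_a P(\nabla h,\nabla h)=\mathcal{L}_XP(\nabla h,\nabla h)$. Here the Killing hypothesis enters twice more: since $\mathcal{L}_Xg=0$ one has $\mathcal{L}_XP=0$, and because $\mathcal{L}_X$ commutes with $\nabla$ (as $\mathcal{L}_X\Gamma=0$ for Killing $X$) one has $\mathcal{L}_X(\nabla_d h_{ef})=\nabla_d(\mathcal{L}_Xh)_{ef}$. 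Applying the Leibniz rule therefore yields $\mathcal{L}_XP(\nabla h,\nabla h)=T_1+T_2$ with $T_2\doteq P^{abcdef}\nabla_ah_{bc}\,\nabla_d(\mathcal{L}_Xh)_{ef}$. The key step is to identify $T_1=T_2$: viewing $W_{abc}\doteq\nabla_ah_{bc}$ as a $3$-tensor symmetric in its last two indices, the symmetry $P^{abcdef}=P^{dfeacb}$ noted in the proof of Proposition~\ref{closedexp} makes the bilinear form $(W,V)\mapsto P^{abcdef}W_{abc}V_{def}$ symmetric, as one sees by relabelling the dummy indices and using the last-pair symmetry of $W$ and $V$. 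Hence $\mathcal{L}_XP(\nabla h,\nabla h)=2T_1$.

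Combining the two computations gives $\nabla_a(\mathfrak{J}^X[h])^a=T_1-\tfrac12\cdot 2T_1=0$, as claimed. The main obstacle is the penultimate step: verifying that the symmetry $P^{abcdef}=P^{dfeacb}$ together with the last-pair symmetry of $\nabla h$ genuinely forces $T_1=T_2$ (equivalently, that the quadratic form $P(\nabla h,\nabla h)$ is symmetric in the two factors, so that its Lie derivative splits evenly), since careless index bookkeeping would leave a spurious factor. Everything else is routine use of metric compatibility, Proposition~\ref{PrincipleLVEE}, and the two standard consequences of $X$ being Killing, namely $\nabla_aX^a=0$ and $[\mathcal{L}_X,\nabla]=0$.
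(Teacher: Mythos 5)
Your proof is correct, and its skeleton coincides with the paper's: expand the divergence of the first term of~\eqref{MorCurrent}, kill the second-derivative contribution using Proposition~\ref{PrincipleLVEE} together with the symmetry of $(\mathcal{L}_Xh)_{bc}$, and identify the surviving term $T_1=P^{abcdef}\nabla_a(\mathcal{L}_Xh)_{bc}\nabla_dh_{ef}$ with $\tfrac12 X\big(P(\nabla h,\nabla h)\big)$ so that it is absorbed by the divergence of $-\tfrac12 X^aP(\nabla h,\nabla h)$ (the $\mathrm{div}X$ term dropping since $X$ is Killing). Where you differ is in how that identification is carried out. The paper first converts $\nabla_a(\mathcal{L}_Xh)_{bc}=\mathcal{L}_X(\nabla h)_{abc}$ (its equation~\eqref{eq:Liecom}) into $\nabla_X(\nabla_ah)_{bc}$ plus three correction terms built from $\nabla_{[a}X_{p]}$, and then checks by an explicit index computation that these corrections cancel after contraction with $P^{abcdef}\nabla_dh_{ef}$; only then does it use the symmetry of the quadratic form to write the result as $\tfrac12\nabla_X\big(P(\nabla h,\nabla h)\big)$. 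You instead stay entirely at the level of the Lie derivative: since $P(\nabla h,\nabla h)$ is a scalar, $X^a\nabla_aP(\nabla h,\nabla h)=\mathcal{L}_X P(\nabla h,\nabla h)$, and the Leibniz rule together with $\mathcal{L}_XP=0$ and $[\mathcal{L}_X,\nabla]=0$ gives $T_1+T_2$ directly, with $T_1=T_2$ following from $P^{abcdef}=P^{dfeacb}$ and the last-pair symmetry of $\nabla h$ — a relabelling you carry out correctly. This buys you a shorter argument: the paper's page of term-by-term cancellations is precisely the discrepancy between the $\nabla_X$- and $\mathcal{L}_X$-expansions of the same scalar, which your formulation never has to confront. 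The cost is that your route leans on $\mathcal{L}_XP=0$ and the symmetry of the bilinear form as black boxes, whereas the paper's explicit computation makes the mechanism of the cancellation visible. Both are complete; just make sure, if you write this up, to state the verification of $P^{abcdef}=P^{dfeacb}$ (or cite the proof of Proposition~\ref{closedexp}, where it is asserted), since the whole argument hinges on it.
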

\begin{proof}
One can show, using the Ricci identity, that, if $X$ is Killing then,
\begin{align}\label{eq:Liecom}
\nabla_a(\mathcal{L}_Xh)_{bc}=\mathcal{L}_X(\nabla h)_{abc}.
\end{align}
Using equation~\eqref{eq:Liecom}, that $\nabla_aX_b=\nabla_{[a}X_{b]}$ for a Killing vector field and the definition of the Lie derivative gives
\begin{align*}
\nabla_a(\mathcal{L}_Xh)_{bc}&=\nabla_X(\nabla_a h)_{bc}+\nabla_a{h_{b}}^{p}\nabla_{[c}X_{p]}+\nabla^ph_{bc}\nabla_{[a}X_{p]}+\nabla_a{h_{c}}^p\nabla_{[b}X_{p]}.
\end{align*}
Contracting the last three terms with $P$, one can calculate that
\begin{align*}
P^{abcdef}\nabla_a{h_{b}}^{p}\nabla_{[c}X_{p]}\nabla_dh_{ef}&=\Big(\nabla^ah^{bc}\nabla_c{h^p}_b-\frac{1}{2}(\mathrm{div}h)^p\nabla^a\mathrm{Tr}h\Big)\nabla_{[a}X_{p]},\\
P^{abcdef}\nabla^ph_{bc}\nabla_{[a}X_{p]}\nabla_dh_{ef}&=\Big(\nabla^ph^{bc}\nabla_c{h^a}_b-\frac{1}{2}\nabla^ph^{ad}\nabla_d\mathrm{Tr}h-\frac{1}{2}(\mathrm{div}h)^a\nabla^p\mathrm{Tr}h\Big)\nabla_{[a}X_{p]}\\
P^{abcdef}\nabla_a{h_{c}}^p\nabla_{[b}X_{p]}\nabla_dh_{ef}&=-\frac{1}{2}\nabla^ah^{pd}\nabla_d\mathrm{Tr}h\nabla_{[a}X_{p]}.
\end{align*}
Therefore, by symmetry, the sum of these terms vanishes. Hence, denoting $Z^a=P^{abcdef}(\mathcal{L}_Xh)_{bc}\nabla_dh_{ef}$ one has
\begin{align*}
\mathrm{div}(Z)&=P^{abcdef}\nabla_X(\nabla_a h)_{bc}\nabla_dh_{ef}.
\end{align*}
Now, using that $P^{abcdef}\nabla_a h_{bc}\nabla_dh_{ef}=P^{defabc}\nabla_a h_{bc}\nabla_dh_{ef}$ one has
\begin{align}
P^{abcdef}\mathcal{L}_X(\nabla h)_{abc}\nabla_dh_{ef}&=\mathrm{div}\Big(\frac{X}{2}P(\nabla h,\nabla h)\Big)-\frac{1}{2}(\mathrm{div}X)P(\nabla h,\nabla h)=\mathrm{div}\Big(\frac{X}{2}P(\nabla h,\nabla h)\Big),\nonumber
\end{align}
since $X$ is Killing.
\end{proof}
\begin{remark}
Note the similarity between the familar current for the wave equation~\eqref{eqn:WECurrent} and the current in equation~\eqref{MorCurrent} for the linearised vacuum Einstein equation~\eqref{LVEE}.
\end{remark}

\subsubsection{Relation to the Canonical Energy Current}\label{CEMultiRelation}
It seems reasonable to expect that the $X$-canonical energy for the linearised vacuum Einstein equation~(\ref{LVEE}) on a spacetime is related to the $X$-energy associated to the current~$\mathfrak{J}^X[h]$ constructed here. The following proposition confirms this expectation.
\begin{prop}\label{MortoCE}
Suppose $X$ is a Killing field for a vacuum spacetime $(\mathcal{M},g)$ and $h$ solves the linearised vacuum equation~(\ref{LVEE}). Then the $X$-canonical energy current~\eqref{XCEC} can be expressed as
\begin{align*}
(\mathcal{J}^X[h])^a=2(\mathfrak{J}^X[h])^a+(\mathfrak{j}^X[h])^a,
\end{align*}
where $(\mathfrak{J}^X[h])_a$ is defined in equation~(\ref{MorCurrent}) and
\begin{align*}
(\mathfrak{j}^X[h])_a\doteq (\nabla^hA)_{ah},\qquad  A^{ah}\doteq X^{[a}P^{h]bcdef}h_{bc}\nabla_{d}h_{ef},
\end{align*}
i.e., $\mathcal{J}^X[h]$ and $\mathfrak{J}^X[h]$ are related by a divergence. Moreover, $(\mathfrak{j}^X[h])_a$ is divergence free.
\end{prop}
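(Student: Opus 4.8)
The plan is to reduce the whole identity to a statement about a single auxiliary vector field. Write $W^a\doteq P^{abcdef}h_{bc}\nabla_d h_{ef}$ and abbreviate $K^a\doteq P^{abcdef}(\mathcal{L}_Xh)_{bc}\nabla_d h_{ef}$. Then by definition $(\mathcal{J}^X[h])^a=K^a-P^{abcdef}h_{bc}\nabla_d(\mathcal{L}_Xh)_{ef}$ and $(\mathfrak{J}^X[h])^a=K^a-\tfrac12 X^aP(\nabla h,\nabla h)$, so the asserted relation is equivalent to $(\mathcal{J}^X[h])^a-2(\mathfrak{J}^X[h])^a=(\mathfrak{j}^X[h])^a$. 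First I would rewrite the left-hand side entirely in terms of $W$, $X$ and their covariant derivatives, and then recognise the result as the divergence of the antisymmetric tensor $A^{ah}=X^{[a}W^{h]}$.

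To do this I would establish the two structural facts that make the computation collapse. The first is that the terms without an explicit $P(\nabla h,\nabla h)$ assemble into a genuine Lie derivative of $W$: computing $\mathcal{L}_X W^a$ by the Leibniz rule, using that $\mathcal{L}_X P^{abcdef}=0$ (since $P$ is built algebraically from $g^{-1}$ and $X$ is Killing) together with the commutation relation \eqref{eq:Liecom}, namely $\nabla_d(\mathcal{L}_Xh)_{ef}=\mathcal{L}_X(\nabla h)_{def}$, yields $\mathcal{L}_X W^a=K^a+P^{abcdef}h_{bc}\nabla_d(\mathcal{L}_Xh)_{ef}$. The second fact is that $\mathrm{div}(W)=P(\nabla h,\nabla h)$: expanding $\nabla_a W^a$ by Leibniz produces $P(\nabla h,\nabla h)$ together with $P^{abcdef}h_{bc}\nabla_a\nabla_d h_{ef}$, and the latter vanishes by Proposition~\ref{PrincipleLVEE}, since $h_{bc}$ is symmetric and hence only the $(bc)$-symmetric part of $P$ survives the contraction. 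This second identity is the single place where the linearised vacuum Einstein equation genuinely enters.

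With these in hand the remainder is bookkeeping. Combining the two currents and using the first fact to replace $K^a+P^{abcdef}h_{bc}\nabla_d(\mathcal{L}_Xh)_{ef}$ by $\mathcal{L}_X W^a$, and the second to turn $X^aP(\nabla h,\nabla h)$ into $X^a\,\mathrm{div}(W)$, I obtain $(\mathcal{J}^X[h])^a-2(\mathfrak{J}^X[h])^a=-\mathcal{L}_X W^a+X^a\,\mathrm{div}(W)$. Writing $\mathcal{L}_X W^a=X^b\nabla_b W^a-W^b\nabla_b X^a$ and separately expanding $\nabla_h(X^aW^h-X^hW^a)$ while using $\mathrm{div}(X)=0$ for the Killing field, both expressions reduce to $W^b\nabla_b X^a+X^a\,\mathrm{div}(W)-X^b\nabla_b W^a$, so that the difference equals $\nabla_h A^{ah}=(\mathfrak{j}^X[h])^a$ as claimed (up to the normalisation convention in the antisymmetrisation bracket, which must be matched to the factor of $2$ between $\mathcal{J}^X$ and $\mathfrak{J}^X$). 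Divergence-freeness of $\mathfrak{j}^X$ is then immediate in two ways: it is the divergence of the antisymmetric tensor $A^{ah}$, whence $\nabla_a\nabla_h A^{ah}=\tfrac12[\nabla_a,\nabla_h]A^{ah}$ contracts the (symmetric) Ricci tensor against an antisymmetric tensor and vanishes identically; alternatively it follows at once from Propositions~\ref{closedexp} and~\ref{MorConserved}, since $\mathfrak{j}^X=\mathcal{J}^X-2\mathfrak{J}^X$ is a difference of two divergence-free currents.

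The main obstacle I anticipate is not any isolated calculation but keeping the Lie-derivative manipulations honest: one must use the Killing property in exactly the right places, first so that $\mathcal{L}_X$ commutes with $\nabla$ on $h$ (allowing the interior and covariant derivatives to be interchanged when forming $\mathcal{L}_X W$), and then so that $\mathcal{L}_X P=0$ and $\mathrm{div}(X)=0$. The only genuinely non-formal input is $\mathrm{div}(W)=P(\nabla h,\nabla h)$, which is where Proposition~\ref{PrincipleLVEE}, and therefore the field equation, is used; every other manipulation is an exact-divergence rearrangement valid for any symmetric $h$.
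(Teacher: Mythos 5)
Your proposal is correct and follows essentially the same route as the paper: the paper also introduces $Y^a=P^{abcdef}h_{bc}\nabla_dh_{ef}$ (your $W$), uses the commutation relation~\eqref{eq:Liecom} together with $\mathcal{L}_XP=0$ and $\mathrm{div}X=0$ to write $(\mathcal{J}^X[h])^a=2P^{abcdef}(\mathcal{L}_Xh)_{bc}\nabla_dh_{ef}-(\nabla_XY-\nabla_YX)^a$, invokes the linearised equation via Proposition~\ref{PrincipleLVEE} only to show $\mathrm{div}Y=P(\nabla h,\nabla h)$, and concludes divergence-freeness of $\mathfrak{j}^X$ from the antisymmetry of $A$ and the Ricci identity. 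Your packaging of $\nabla_XY-\nabla_YX$ as $\mathcal{L}_XW$ is the same computation in different notation, and your observation about the antisymmetrisation normalisation matches the paper's implicit convention.
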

\begin{proof}
Using that $X$ is Killing, we have
\begin{align*}
(\mathcal{J}^X[h])^a=P^{abcdef}\Big[(\mathcal{L}_Xh)_{bc}(\nabla_dh)_{ef}-h_{bc}\mathcal{L}_X(\nabla_dh)_{ef}\Big].
\end{align*}
Using the Leibniz rule for the Lie derivative (and that $X$ is Killing so $\mathcal{L}_XP=0$) gives
\begin{align*}
(\mathcal{J}^X[h])^a=2P^{abcdef}(\mathcal{L}_Xh)_{bc}(\nabla_dh)_{ef}-(\nabla_XY-\nabla_YX)^a,
\end{align*}
with $Y^a=P^{abcdef}h_{bc}\nabla_d h_{ef}$. Therefore, 
\begin{align*}
(\mathcal{J}^X[h])^a=2P^{abcdef}(\mathcal{L}_Xh)_{bc}(\nabla_dh)_{ef}-\nabla_h(X\otimes Y-Y\otimes X)^{ha}-(\mathrm{div}Y)X^a.
\end{align*}
where one uses that $\mathrm{div}X=0$ since $X$ is Killing. Now one can compute that
\begin{align*}
\mathrm{div}Y=P^{abcdef}\nabla_ah_{bc}\nabla_d h_{ef}+P^{abcdef}h_{bc}\nabla_a\nabla_d h_{ef}=P^{abcdef}\nabla_ah_{bc}\nabla_d h_{ef},
\end{align*}
where the last equality is by the linearised vacuum Einstein equation~(\ref{LVEE}). Now $X\otimes Y-Y\otimes X=A$ as defined in the proposition statement so then
\begin{align*}
(\nabla_a\nabla_hA)^{ha}=(\nabla_{[a}\nabla_{h]}A)^{ha}={R^h}_{bah}A^{ba}+{R^a}_{bah}A^{hb}=-2(\mathrm{Ric}(g))_{ab}A^{ab}=0.
\end{align*}
\end{proof}

\section{The Einstein Equation in Double Null Gauge}\label{DNG}

This section gives a brief introduction to the double null decomposition of a \underline{$4$-dimensional} spacetime following~\cite{FormBH} (see also the lecture notes~\cite{Aretakis}). In section~\ref{DNF} an introduction to double null canonical coordinates or `double null gauge' is presented. Some useful operations and calculations are introduced in section~\ref{DNO}. It proceeds with a double null decomposition of the Ricci coefficients and Weyl tensor in sections~\ref{DNR} and~\ref{DNW}. Section~\ref{DNS} discusses the double null foliations of the Schwarzschild spacetimes. Section~\ref{LinearDoubleNullGauge} discusses linearisation of the vacuum Einstein equation in the double null gauge around the Schwarzschild spacetime.
\subsection{Double Null Gauge for the Metric}\label{DNF}
A double null gauge is a set of coordinates $(u,v,\theta^A)$ on a manifold $\mathcal{M}$, with $A=1,2$, such that the metric takes the form
\begin{align}
g=-{{2}}\Omega^2(du\otimes dv+dv\otimes du)+\slashed{g}_{AB}(d\theta^A-b^Adv)\otimes (d\theta^B-b^Bdv).\label{DN}
\end{align}
The level sets of $u$ and $v$, denoted $C_u$ and $\underline{C}_v$ respectively, are null hypersurfaces and the sets $\mathcal{S}_{u,v}\doteq C_u\cap \underline{C}_v$ at fixed values of $(u,v)$ are homeomorphic to a $2$-sphere. One defines two normalised null vectors
\begin{align*}
e_3\doteq \frac{1}{\Omega}\partial_ u,\qquad e_4\doteq \frac{1}{\Omega}\big(\partial_v+b^Ae_A\big), \qquad 
\end{align*}
where $e_A\doteq \frac{\partial}{\partial\theta^A}\in \mathfrak{X}(\mathcal{S}_{u,v})$ for all $A=1,2$ and $u,v$. The symmetric $2$-tensor $\slashed{g}$ is the induced metric on $\mathcal{S}_{u,v}$. 

Associated submanifolds $\mathcal{S}_{u,v}$ are the notion of $\mathcal{S}_{u,v}$-tensors. These are defined as follows:
\begin{definition}[$\mathcal{S}_{u,v}$-tensors]
A vector field $X\in \mathfrak{X}(\mathcal{M})$ is an $\mathcal{S}_{u,v}$-vector field if
\begin{align}\label{tangent}
g(X,e_4)=0=g(X,e_3).
\end{align} Further, a $X\in \mathfrak{X}(\mathcal{S}_{u,v})$ can be viewed as a vector field in $\mathfrak{X}(\mathcal{M})$ satisfying~(\ref{tangent}). A one-form $\omega\in \Omega^1(M)$ is an $\mathcal{S}_{u,v}$-one-form if 
\begin{align}\label{tangent1f}
\omega(e_3)=0=\omega(e_4).
\end{align}
Similarly, one can view ${\omega}\in \Omega^1(\mathcal{S}_{u,v})$ as $\omega\in \Omega^1(M)$ satisfying~(\ref{tangent1f}). One extends these definitions naturally to arbitrary tensors. 
\subsection{Differential Operators on $\mathcal{S}_{u,v}$-Tensor Fields}\label{DNO}
In this section some useful operations on $\mathcal{S}_{u,v}$ tensors are defined. 
\begin{definition}[Operations]
For $\Theta_1$ and $\Theta_2$ be rank-$p$ $\mathcal{S}_{u,v}$ tensor fields, we define
\begin{align*}
\langle\Theta_1,\Theta_2\rangle=\slashed{g}^{A_1B_1}...\slashed{g}^{A_pB_p}(\Theta_1)_{A_1...A_p}(\Theta_2)_{B_1...B_p},\qquad |\Theta_1|^2=\langle\Theta_1,\Theta_1\rangle.
\end{align*}
For $\Phi$ be a $(0,2)$ $\mathcal{S}_{u,v}$-tensor field we define
\begin{align*}
\hat{\Phi}_{AB}&\doteq \frac{1}{2}\Big(\Phi_{AB}+\Phi_{BA}-(\mathrm{tr}\Phi )\slashed{g}_{AB}\Big).
\end{align*}
\end{definition}
\begin{definition}[Projected Covariant and Lie Derivatives]\label{Project34}
For $Y\in \mathfrak{X}(\mathcal{M})$, the projected covariant derivative $\ns_{Y}$ and the projected Lie derivative $\slashed{\mathcal{L}}_Y$ on a rank-$(0,p)$ $\mathcal{S}_{u,v}$-tensor field $T$ is defined as
\begin{align*}
(\ns_{Y}T)(X_1,...,X_p)&\doteq ({\nabla}_{Y}T)(X_1,...,X_p),\qquad (\slashed{\mathcal{L}}_{Y}T)(X_1,...,X_p)\doteq ({\mathcal{L}}_{Y}T)(X_1,...,X_p)
\end{align*}
for all $X_i\in \mathfrak{X}(\mathcal{S}_{u,v})$. In this work, the shorthand $\ns_{e_{\alpha}}=\ns_{\alpha}$ and $\slashed{\mathcal{L}}_{e_{\alpha}}=\slashed{\mathcal{L}}_{\alpha}$ for $\alpha=3,4,A$ will be adopted.

One defines $(p-1)$-covariant tensor fields $\divs T$ and $\slashed{\mathrm{curl}}T$ as
\begin{align*}
(\divs T)_{A_1...A_{p-1}}&\doteq\slashed{g}^{BC}(\ns_BT)_{CA_1...A_{p-1}},\qquad
(\slashed{\mathrm{curl}}T)_{A_1...A_{p-1}}\doteq\slashed{\varepsilon}^{BC}(\ns_BT)_{CA_1...A_{p-1}},
\end{align*}
where $\slashed{\varepsilon}$ is the induced volume form on $\mathcal{S}_{u,v}$. 
\end{definition}
\begin{remark}\label{Leibnizprojectder}
By the Leibniz rule for $\nabla$ one has the Leibniz rule for $\ns_{\alpha}$ for $\alpha=3,4,A$.
\end{remark}
\begin{definition}[Formal Adjoint Operators]
For a $\mathcal{S}_{u,v}^2$ one-form $\xi$ we define
\begin{align*}
\slashed{\mathcal{D}}_2^{\star}\xi&\doteq-\frac{1}{2}\Big[(\slashed{\nabla}_A\xi)_B+(\slashed{\nabla}_B\xi)_A-(\slashed{\mathrm{div}}\xi)\slashed{g}_{AB}\Big] \, ,
\end{align*}
which is the formal $L^2$-adjoint of $\divs$. For $f_1,f_2\in C^{\infty}(\mathcal{S}^2_{u,v})$ we define
\begin{align*}
    [\slashed{\mathcal{D}}_1^{\star}(f_1,f_2)]_A&\doteq-\slashed{\nabla}_Af_1+\slashed{\varepsilon}_{AB}\slashed{\nabla}^Bf_2,
\end{align*}
which is the formal $L^2$ adjoint of $\slashed{\mathcal{D}}_1$, which maps an $\mathcal{S}^2_{u,v}$-$1$-form to the pair of functions $(\divs\xi,\curls\xi)$.
\end{definition}

This section concludes with a collection of useful results.
\begin{lemma}[Useful Identities for $\mathcal{S}_{u,v}$]\label{Useful2DLemma}
Let $X\in \mathfrak{X}(\mathcal{S}_{u,v})$, $\xi\in \Omega^1\mathcal{S}_{u,v}$ and $\Phi,\Theta$ be symmetric-traceless $\mathcal{S}_{u,v}$ $2$-tensors. Then
\begin{align*}
    (\slashed{\mathcal{L}}_X\xi)_{A}&=(\ns_X\xi)_A-(\Dts \xi)_{AB}X^B+\frac{1}{2}(\divs X)\xi_A+\frac{1}{2}(\slashed{\mathrm{curl}}X)(\star\xi)_A,\\
    (\slashed{\mathcal{L}}_X\Theta)&=(\ns_X\Theta)-\langle\Dts X,\Theta\rangle\slashed{g}+(\divs X)\Theta+(\slashed{\mathrm{curl}}X)(\star\Theta).
\end{align*}
Further, the following integrated identities hold:
\begin{align*}
\int_{\mathcal{S}_{u,v}}|\ns\Theta|^2\slashed{\varepsilon}&=\int_{\mathcal{S}_{u,v}}\Big[ 2|\divs \Theta|^2-\slashed{\mathrm{Scal}}|{\Theta}|^2\Big]\slashed{\varepsilon},\qquad
\int_{\mathcal{S}_{u,v}}|\ns\xi|^2\slashed{\varepsilon}=\int_{\mathcal{S}_{u,v}}\Big[|\slashed{\mathrm{curl}}\xi|^2+|\divs \xi|^2-\frac{1}{2}\slashed{\mathrm{Scal}}|\xi|^2\Big]\slashed{\varepsilon}.
\end{align*}
\end{lemma}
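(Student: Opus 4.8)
The plan is to handle the four claims in three groups, since the two Lie-derivative identities are purely algebraic while the two integrated identities are Böchner-type formulae that require integration by parts. Throughout I work intrinsically on the closed surface $\mathcal{S}_{u,v}$ with its induced metric $\slashed{g}$ and Levi-Civita connection $\ns$, so that for tangential $X$ and $\xi$ the projected Lie derivative coincides with the intrinsic one and is governed by the torsion-free formula $(\slashed{\mathcal{L}}_X\xi)_A = (\ns_X\xi)_A + \xi_B\ns_A X^B$.

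For the one-form identity the entire content is the decomposition of the $(0,2)$-tensor $\ns_A X_B$ into its two-dimensional irreducible pieces,
\[
\ns_A X_B = -(\Dts X)_{AB} + \tfrac12(\divs X)\slashed{g}_{AB} + \tfrac12(\curls X)\slashed{\varepsilon}_{AB},
\]
where the first term is the symmetric-traceless part (equal to $-\Dts X$ by definition of $\Dts$), the second the pure-trace part, and the third the antisymmetric part; the identification of the last uses that every antisymmetric $2$-tensor in two dimensions equals $\tfrac12(\slashed{\varepsilon}^{CD}\omega_{CD})\slashed{\varepsilon}$, with the contraction of $\ns X$ against $\slashed{\varepsilon}$ being exactly $\curls X$. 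Contracting this decomposition with $\xi^B$ and using $\slashed{\varepsilon}_{AB}\xi^B=(\star\xi)_A$ yields the three stated terms, the symmetric-traceless piece producing $-(\Dts X)_{AB}\xi^B$; in the analogous tensorial identity this same piece reappears as $-\langle\Dts X,\Theta\rangle\slashed{g}$, confirming that $\Dts$ acts on $X$. For the tensor identity I start from $(\slashed{\mathcal{L}}_X\Theta)_{AB} = (\ns_X\Theta)_{AB} + \Theta_{CB}\ns_A X^C + \Theta_{AC}\ns_B X^C$ and substitute the same decomposition of $\ns X$. The trace part collapses to $(\divs X)\Theta_{AB}$, the antisymmetric part to $(\curls X)(\star\Theta)_{AB}$ (using that dualizing either index of a symmetric traceless tensor in two dimensions gives the same $\star\Theta$), and the symmetric-traceless part to $-\big[(\Dts X)_{AC}\Theta^{C}{}_{B}+(\Dts X)_{BC}\Theta^{C}{}_{A}\big]$, which equals $-\langle\Dts X,\Theta\rangle\slashed{g}_{AB}$ by the two-dimensional Clifford-type identity $P_{AC}Q^{C}{}_{B}+Q_{AC}P^{C}{}_{B}=\langle P,Q\rangle\slashed{g}_{AB}$ valid for symmetric traceless $P,Q$.

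For the integrated identities I use integration by parts on the boundaryless $\mathcal{S}_{u,v}$ together with the Ricci identity, which on a surface reduces all curvature to $\slashed{\mathrm{Ric}}_{AB}=\tfrac12\slashed{\mathrm{Scal}}\,\slashed{g}_{AB}$. For the one-form, I expand $|\curls\xi|^2$ using $\slashed{\varepsilon}^{AB}\slashed{\varepsilon}^{CD}=\slashed{g}^{AC}\slashed{g}^{BD}-\slashed{g}^{AD}\slashed{g}^{BC}$ to get $|\curls\xi|^2=|\ns\xi|^2-\ns_A\xi_B\ns^B\xi^A$, integrate the cross term by parts twice, and commute the two derivatives; the commutator contributes $\tfrac12\slashed{\mathrm{Scal}}|\xi|^2$, giving $\int\ns_A\xi_B\ns^B\xi^A=\int(\divs\xi)^2-\tfrac12\int\slashed{\mathrm{Scal}}|\xi|^2$, which rearranges to the claimed formula. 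The tensor identity follows the same scheme with $\Theta$ in place of $\xi$: now the Ricci identity is applied to a two-index object, so there are two curvature contractions and the coefficient becomes the full $\slashed{\mathrm{Scal}}$; the tracelessness and symmetry of $\Theta$, together with the Hodge relation $\curls\Theta=\divs(\star\Theta)$, merge the divergence and curl contributions into the single term $2|\divs\Theta|^2$.

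The algebraic identities are routine once the two-dimensional decomposition of $\ns X$ is in place; the delicate step is the integrated tensor identity. There the main work is tracking the two curvature contractions with correct signs and exploiting the surface-specific tensor algebra (the Clifford-type identity and the Hodge duality of symmetric traceless $2$-tensors) to arrive at the clean coefficient $2|\divs\Theta|^2$ rather than an asymmetric sum of divergence- and curl-norms. I would verify the sign and coefficient of the $\slashed{\mathrm{Scal}}$ term against the one-form computation as an independent consistency check before finalizing.
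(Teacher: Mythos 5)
Your proof is correct and is precisely the ``direct computation'' that the paper itself omits (the stated proof simply defers to the author's thesis): the algebraic identities follow from the irreducible decomposition of $\ns_AX_B$ into its symmetric-traceless, pure-trace and antisymmetric parts exactly as you describe, and the integrated identities from the $\slashed{\varepsilon}\slashed{\varepsilon}$-contraction, integration by parts, the Ricci identity with $\slashed{\mathrm{Ric}}=\tfrac{1}{2}\slashed{\mathrm{Scal}}\,\slashed{g}$, and the relation $\curls\Theta=\star\,\divs\Theta$ for symmetric traceless $\Theta$ (which is the precise fact behind your ``merging'' of the divergence and curl contributions into $2|\divs\Theta|^2$). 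One point worth recording: your derivation produces $-(\Dts X)_{AB}\xi^B$ in the one-form identity, whereas the lemma as printed has $-(\Dts\xi)_{AB}X^B$; your version is the correct one --- the term can only involve derivatives of $X$, and it is the version the paper actually invokes in the proof of Theorem~\ref{thm:CEntoGConb}, where it vanishes because $\Dts\Omega_k=0$ for the Killing fields --- so this is a typo in the statement rather than a gap in your argument.
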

\begin{proof}
Direct computations. See~\cite{MyThesis} for details. 
\end{proof}
\end{definition}\subsection{Null Decomposition of Ricci Coefficients}\label{DNR}
It is particularly convenient to decompose the Ricci coefficients in the normalised null frame. One makes the following definition:
\begin{definition}[Connection coefficients]\label{connectioncoeff}
Define the following $\mathcal{S}_{u,v}$-tensor fields:
\begin{equation}
    \begin{aligned}
       \chi_{AB}&\doteq g(\nabla_Ae_4,e_B),\\
       \eta_A&\doteq {\frac{1}{2}}g(\nabla_{3}e_4,e_A),\\
       \hat{\omega}&\doteq -{\frac{1}{2}}g(\nabla_4e_4,e_{3}),
    \end{aligned}
    \qquad
    \begin{aligned}
       \underline{\chi}_{AB}&\doteq g(\nabla_Ae_{3},e_B),\\
 \underline{\eta}_A&\doteq {\frac{1}{2}}g(\nabla_{4}e_{3},e_A),\\
 \underline{\hat{\omega}}&\doteq -{\frac{1}{2}}g(\nabla_{3}e_{3},e_{4})
    \end{aligned}
\end{equation}
and
\begin{align*}
\zeta_A&\doteq {\frac{1}{2}}g(\nabla_Ae_4,e_{3}).
\end{align*}
Extend these to tensor fields on $M$ by zero on $e_{3}$ and $e_4$. Additionally, it is useful to define
\begin{align*}
    \omega\doteq\Omega\hat{\omega},\qquad \underline{\omega}\doteq\Omega\hat{\underline{\omega}}.
\end{align*}
\end{definition}
\begin{remark}
One can check that the $\mathcal{S}_{u,v}$-tensors $\chi$ and $\underline{\chi}$ are symmetric. 
\end{remark}
It is conventional to decompose $\chi$ and $\underline{\chi}$ in the following manner:
\begin{definition}[Shear,Expansion]
The traceless part of $\chi$ is called the shear $\hat{\chi}$ and the trace of $\chi$ is called the expansion. Therefore,
\begin{align*}
\chi=\hat{\chi}+\frac{\mathrm{tr}\chi}{2}\slashed{g}.
\end{align*}
\end{definition}

\subsection{Null Decomposition of the Weyl Tensor}\label{DNW}
One can also decompose the Riemann tensor
\begin{align*}
    R_{\alpha\beta\gamma\delta}\doteq g(e_{\alpha},R(e_{\gamma},e_{\delta})e_{\beta}),
\end{align*}
with respect to the normalised null frame as follows.
\begin{definition}[Curvature Components]\label{curvcomp}
One defines the following $\mathcal{S}_{u,v}$ tensors:
\begin{equation*}
\begin{aligned}[c]
\alpha_{AB}&\doteq  R_{A4B4},\\
\beta_A&\doteq \frac{1}{2}R_{A434},\\
\rho& \doteq \frac{1}{4}R_{3434},
\end{aligned}
\qquad
\begin{aligned}[c]
\underline{\alpha}_{AB}&\doteq  R_{A3B3},\\
 \underline{\beta}_A &\doteq \frac{1}{2}R_{A334}\\
 \sigma&\doteq\frac{1}{4}(\star R)_{3434},
\end{aligned}
\end{equation*}
where $(\star R)_{abcd}\doteq \frac{1}{2}\varepsilon_{efab}{R^{ef}}_{cd}$ and $\varepsilon$ is the volume form for $(\mathcal{M},g)$. 
\end{definition}
\begin{remark}
Naively, the number of degrees of freedom encoded in ${W}\doteq (\alpha,\underline{\alpha},\beta,\underline{\beta},\rho,\sigma)$ is $12$. However, if $g$ satisfies the vacuum Einstein equation~\eqref{VE}, $\alpha$ and $\underline{\alpha}$ are trace-free and the null curvature components ${W}$ encode the $10$ Weyl degrees of freedom. All other components of the Riemann tensor are determined by $W$.
\end{remark}

\subsection{The Double Null Foliation of the $(\mathrm{Schw}_4,g_s)$ Exterior}\label{DNS}
The double null Eddington--Finkelstein coordinates $(u,v,\theta,\phi)$ provide a set of double null coordinates for the exterior region of the $4$-dimensional Schwarzschild black hole. The metric in these coordinates is given in equation~\eqref{SchTDN}. So one has
\begin{align*}
\Omega(u,v)^2=D(r(u,v)),\qquad b^A\equiv 0,\qquad \slashed{g}=r(u,v)^2\gamma_{2}.
\end{align*}
The normalised null frame is simply
\begin{align*}
e_3=\frac{1}{\Omega}\partial_u,\qquad e_4=\frac{1}{\Omega}\partial_v,\qquad e_A=\partial_{A}.
\end{align*}

One can calculate all Ricci coefficients and curvature components explicitly in terms of $r$. The only non-vanishing Ricci coefficients are
\begin{align*}
\quad{(\Omega\mathrm{tr}\chi)}=-{(\Omega\mathrm{tr}\underline{\chi})}=\frac{2\Omega^2}{r},\quad{{\omega}}=-{{\underline{\omega}}}=\frac{M}{r^{2}}
\end{align*}
and the only non-vanishing double null curvature component is
\begin{align*}
\quad{\rho}&=-\frac{2M}{r^{3}}.
\end{align*}
One has that the Riemann, Ricci and Scalar curvature of $\mathbb{S}^{2}_{u,v}$ are
\begin{align*}
\slashed{{R}}_{ABCD}&=\frac{1}{r^2}(\slashed{g}_{AC}\slashed{g}_{BD}-\slashed{g}_{AD}\slashed{g}_{BC}),\qquad \slashed{\mathrm{Ric}}=\frac{1}{r^2}\slashed{g},\qquad \slashed{\mathrm{Scal}}=\frac{2}{r^2}.
\end{align*}
Additionally, one has $\slashed{\mathrm{Scal}}=2\mathrm{K}$ where $\mathrm{K}$ is the Gauss curvature of $\mathbb{S}_{u,v}^2$. So, $\mathrm{K}=\frac{1}{r^2}$. \\

One can check using definition~\ref{connectioncoeff} the following proposition:
\begin{prop}\label{conncoeffSchw}
The Ricci coefficients for $(\mathrm{Schw}_4,g_s)$ in double null Eddington--Finkelstein coordinates satisfy the following relations:
\begin{align*}
\nabla_Ae_B&=\slashed{\Gamma}_{AB}^Ce_C+{\frac{1}{4}}\mathrm{tr}\chi(e_3-e_4)\slashed{g}_{AB}
\end{align*}
and
\begin{equation*}
    \begin{aligned}[c]
        \nabla_3e_A&=-\frac{\mathrm{tr}{\chi}}{2}e_A,\\
        \nabla_3e_4&={\hat{\omega}}e_4,
    \end{aligned}
    \qquad
    \begin{aligned}
       \nabla_{4}e_A&=\frac{\mathrm{tr}\chi }{2}e_A,\\
       \nabla_{4}e_3&=-\hat{\omega}e_3,
    \end{aligned}
    \qquad 
    \begin{aligned}
        \nabla_Ae_3&=-\frac{\mathrm{tr}\chi}{2} e_A,\\
        \nabla_3e_3&=-\hat{{\omega}}e_3,
    \end{aligned}
    \qquad 
    \begin{aligned}
       \nabla_Ae_4&=\frac{\mathrm{tr}\chi}{2} e_A,\\
       \nabla_4e_4&=\hat{\omega}e_4.
    \end{aligned}
\end{equation*}
\end{prop}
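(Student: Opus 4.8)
The plan is to compute the Levi-Civita connection of $g_s$ directly in the coordinate basis $(\partial_u,\partial_v,\partial_A)$, then re-express the result in the normalised null frame $(e_3,e_4,e_A)$, and finally read off the connection coefficients by comparison with Definition~\ref{connectioncoeff}. The statement is purely local, so everything reduces to a Christoffel-symbol computation.

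First I would record the coordinate data of the metric~\eqref{SchTDN}: the only nonvanishing components are $g_{uv}=g_{vu}=-2\Omega^2$ and $g_{AB}=r^2\gamma_{AB}$, with inverse $g^{uv}=-\tfrac{1}{2\Omega^2}$, $g^{AB}=r^{-2}\gamma^{AB}$. The essential input from the Eddington--Finkelstein structure is extracted by differentiating $r_\star=v-u$ together with $dr_\star/dr=1/D$, giving $\partial_v r=D=\Omega^2$ and $\partial_u r=-\Omega^2$. These yield $\partial_u\ln\Omega=-M/r^2$, $\partial_v\ln\Omega=M/r^2$, $\partial_u(r^2)=-2r\Omega^2$, $\partial_v(r^2)=2r\Omega^2$, and crucially $\partial_A\Omega=\partial_A r=0$, which annihilates every angular derivative of a metric coefficient.

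Next I would compute $\Gamma^c_{ab}=\tfrac12 g^{cd}(\partial_a g_{bd}+\partial_b g_{ad}-\partial_d g_{ab})$. Using the derivatives above, a routine computation shows that the only nonvanishing symbols are
\[
\Gamma^u_{uu}=-\frac{2M}{r^2},\quad \Gamma^v_{vv}=\frac{2M}{r^2},\quad \Gamma^A_{uB}=-\frac{\Omega^2}{r}\delta^A_B,\quad \Gamma^A_{vB}=\frac{\Omega^2}{r}\delta^A_B,\quad \Gamma^u_{AB}=\frac{r}{2}\gamma_{AB},\quad \Gamma^v_{AB}=-\frac{r}{2}\gamma_{AB},
\]
together with the intrinsic symbols $\slashed{\Gamma}^C_{AB}$ of the round sphere (the conformal factor $r^2$ being constant on each $\mathcal{S}_{u,v}$). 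In particular $\nabla_{\partial_u}\partial_v=0$, which is what forces the absence of cross terms in the frame relations.

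Finally I would pass to the frame via $e_3=\Omega^{-1}\partial_u$, $e_4=\Omega^{-1}\partial_v$, $e_A=\partial_A$, expanding each covariant derivative, e.g. $\nabla_{e_3}e_4=\Omega^{-1}\big((\partial_u\Omega^{-1})\partial_v+\Omega^{-1}\nabla_{\partial_u}\partial_v\big)$. The $\Omega^{-1}$-derivative terms produce exactly the factor $\pm M/(r^2\Omega)=\pm\hat{\omega}$, while the spherical Christoffels $\Gamma^A_{uB},\Gamma^A_{vB}$ produce $\pm\Omega/r=\pm\tfrac12\mathrm{tr}\chi$, using $\mathrm{tr}\chi=2\Omega/r$; checking each entry then reproduces the grid. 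For $\nabla_A e_B$ the two scalar symbols combine as $\tfrac{r}{2}\gamma_{AB}\,\Omega(e_3-e_4)=\tfrac14\mathrm{tr}\chi\,(e_3-e_4)\slashed{g}_{AB}$, giving the first displayed relation. One can equally verify the normalisations $\mathrm{tr}\chi=2\Omega/r$ and $\hat{\omega}=M/(r^2\Omega)$ self-consistently from Definition~\ref{connectioncoeff} rather than quoting them. The computation is entirely routine; the only point demanding care — and the one I would state explicitly — is the bookkeeping of the $\Omega$-normalisation, since it is the interplay between the coordinate symbols $\Gamma^u_{uu},\Gamma^v_{vv}$ and the $\Omega$-derivative terms arising from $e_{3/4}=\Omega^{-1}\partial_{u/v}$ that conspires to yield the correct $\hat{\omega}$ coefficients, and one must confirm these contributions combine with the right relative sign rather than cancel.
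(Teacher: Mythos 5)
Your computation is correct and is precisely the routine verification the paper intends — the paper gives no proof beyond the remark that one can check the proposition from Definition~\ref{connectioncoeff}, and your Christoffel symbols, the vanishing of $\nabla_{\partial_u}\partial_v$, and the conversion to the $\Omega$-normalised frame all check out, reproducing $\hat{\omega}=M/(r^2\Omega)$ and $\mathrm{tr}\chi=2\Omega/r$ consistently with the background values listed in Section~\ref{DNS}. No gap; same approach.
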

Additionally definition~\ref{Project34} combined with proposition~\ref{conncoeffSchw} gives
\begin{prop}[Projected Derivatives of $p$-covariant $\mathbb{S}^{2}_{u,v}$-Tensor Fields]\label{ProDTF2Sch}
Let $T$ be a $p$-covariant $\mathbb{S}^{2}_{u,v}$-tensor field. Then in double null Eddington--Finkelstein coordinates on the $4$-dimensional Schwarzschild exterior one has
\begin{align*}
(\ns_3T)_{A_1...A_p}&=\frac{1}{\Omega}\Big(\partial_u(T_{A_1...A_p})+\frac{p}{2}(\Omega\mathrm{tr}{\chi})T_{A_1...A_p}\Big),\\
(\ns_4T)_{A_1...A_p}&=\frac{1}{\Omega}\Big(\partial_v(T_{A_1...A_p})-\frac{p}{2}(\Omega\mathrm{tr}\chi)T_{A_1...A_p}\Big),\\
(\ns_AT)_{B_1...B_p}&=\partial_A(T_{B_1...B_p})-\sum_{i=1}^p\slashed{\Gamma}^C_{AB_i}T_{B_1...\hat{B}_iC...B_p}. 
\end{align*}
where $\hat{B}_i$ denotes removing the $i^{\mathrm{th}}$ index and replacing it by $C$.
\end{prop}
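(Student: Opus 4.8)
The plan is to read off all three formulas directly from Definition~\ref{Project34} by applying the Leibniz (product) rule for the ambient Levi-Civita connection $\nabla$ to the scalar functions $T(e_{A_1},\dots,e_{A_p})$, and then substituting the explicit frame covariant derivatives recorded in Proposition~\ref{conncoeffSchw}. The single structural fact I would lean on throughout is that, since $T$ is an $\mathcal{S}_{u,v}$-tensor, it annihilates any argument lying in the $e_3$ or $e_4$ directions. Recall also that on the Schwarzschild exterior $b^A\equiv 0$, so $e_3=\frac{1}{\Omega}\partial_u$ and $e_4=\frac{1}{\Omega}\partial_v$, while $e_A=\partial_A$.

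For the $\ns_4$ component I would begin from $(\ns_4T)_{A_1\dots A_p}=(\nabla_4T)(e_{A_1},\dots,e_{A_p})$ and expand:
\begin{align*}
(\nabla_4T)(e_{A_1},\dots,e_{A_p})=e_4\big(T_{A_1\dots A_p}\big)-\sum_{i=1}^pT(e_{A_1},\dots,\nabla_4e_{A_i},\dots,e_{A_p}).
\end{align*}
Substituting $\nabla_4e_A=\frac{\mathrm{tr}\chi}{2}e_A$ from Proposition~\ref{conncoeffSchw} collapses each summand to $\frac{\mathrm{tr}\chi}{2}T_{A_1\dots A_p}$, giving the zeroth-order factor $-\frac{p}{2}\mathrm{tr}\chi$; writing $e_4(T_{A_1\dots A_p})=\frac{1}{\Omega}\partial_v(T_{A_1\dots A_p})$ and rewriting $\frac{p}{2}\mathrm{tr}\chi=\frac{1}{\Omega}\cdot\frac{p}{2}(\Omega\mathrm{tr}\chi)$ yields the stated formula. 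The $\ns_3$ case is identical, now using $\nabla_3e_A=-\frac{\mathrm{tr}\chi}{2}e_A$ and $e_3=\frac{1}{\Omega}\partial_u$, which simply flips the sign of the zeroth-order term.

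The $\ns_A$ component is where the only genuine subtlety sits. Here I would expand $(\nabla_AT)(e_{B_1},\dots,e_{B_p})$ by Leibniz and insert the full expression $\nabla_Ae_B=\slashed{\Gamma}^C_{AB}e_C+\frac{1}{4}\mathrm{tr}\chi(e_3-e_4)\slashed{g}_{AB}$ from Proposition~\ref{conncoeffSchw}. The crucial point is that the normal part $\frac{1}{4}\mathrm{tr}\chi(e_3-e_4)\slashed{g}_{AB}$ contributes nothing: when it is fed into $T$ as one of the arguments, the $e_3$ and $e_4$ directions are annihilated precisely because $T$ is an $\mathcal{S}_{u,v}$-tensor, so only the tangential Christoffel piece survives. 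Combined with $e_A(T_{B_1\dots B_p})=\partial_A(T_{B_1\dots B_p})$, this recovers the intrinsic covariant-derivative formula with the Christoffel symbols of $\slashed{g}$. I do not anticipate any real obstacle beyond this bookkeeping; the entire content is the interplay between the Leibniz rule, the explicit Schwarzschild frame connection, and the annihilation of the normal directions by $\mathcal{S}_{u,v}$-tensors, with the verification of the $\Omega$-weight conversions being the most error-prone routine step.
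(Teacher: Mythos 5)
Your proposal is correct and is precisely the argument the paper intends: the paper derives this proposition directly by combining Definition~\ref{Project34} (the projected derivative as the ambient $\nabla$ evaluated on tangential arguments) with the frame covariant derivatives of Proposition~\ref{conncoeffSchw}, exactly as you do, including the key observation that the normal part $\frac{1}{4}\mathrm{tr}\chi(e_3-e_4)\slashed{g}_{AB}$ of $\nabla_Ae_B$ is annihilated because $T$ is an $\mathcal{S}_{u,v}$-tensor. The sign bookkeeping and $\Omega$-weight conversions in your write-up all check out against the stated formulas.
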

Finally this subsection concludes with the following commutation lemma
\begin{lemma}[Commutation Lemma]\label{comlem}
Let $T$ be a $p$-covariant $\mathbb{S}^{2}_{u,v}$-tensor field. Then in double null Eddington--Finkelstein coordinates on the $n$-dimensional Schwarzschild--Tangherlini exterior one has
\begin{align*}
(\ns_3\ns_BT-\ns_B\ns_3T)_{A_1...A_p}&=\frac{\mathrm{tr}\chi}{2} \ns_BT_{A_1...A_p},\\
(\ns_4\ns_BT-\ns_B\ns_4T)_{A_1...A_p}&=-\frac{\mathrm{tr}\chi }{2}\ns_BT_{A_1...A_p},\\
(\ns_3\ns_4T-\ns_4\ns_3T)_{A_1...A_p}&=\hat{\omega}(\ns_3T+\ns_4T)_{A_1...A_p}.
\end{align*}
\end{lemma}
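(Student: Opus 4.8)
The plan is to prove all three identities by direct computation from the explicit coordinate expressions for $\ns_3$, $\ns_4$ and $\ns_A$ supplied by proposition~\ref{ProDTF2Sch}, together with the Schwarzschild values of the geometric quantities recorded in section~\ref{DNS}. The single most important structural fact I would isolate at the outset is that on the Schwarzschild exterior the factor $\Omega$, the connection coefficient $(\Omega\mathrm{tr}\chi)=2\Omega^2/r$, and the intrinsic Christoffel symbols $\slashed{\Gamma}^C_{AB}$ are all constant along the spheres $\mathbb{S}^2_{u,v}$, since they depend only on $r=r(u,v)$; moreover, because $\slashed{g}=r^2\gamma_{2}$ with $r$ angularly constant, $\slashed{\Gamma}^C_{AB}$ coincides with the Christoffel symbols of $\gamma_{2}$ and is therefore independent of $u$ and $v$ as well. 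Consequently $\partial_A\Omega=\partial_A(\Omega\mathrm{tr}\chi)=0$ and $\partial_u\slashed{\Gamma}=\partial_v\slashed{\Gamma}=0$, and the coordinate partials $\partial_u,\partial_v,\partial_A$ mutually commute. I would also fix the rank bookkeeping carefully: the full angular derivative $\ns T$ (keeping the $B$-slot free) is a $(p+1)$-covariant tensor, so $\ns_3\ns_B T$ must be computed with the rank-$(p+1)$ weight $\tfrac{p+1}{2}$, whereas $\ns_3 T$ appearing in $\ns_B\ns_3 T$ carries the rank-$p$ weight $\tfrac{p}{2}$.

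For the first two identities I would expand both orderings using proposition~\ref{ProDTF2Sch}, inserting $(\ns_B T)_{A_1\ldots A_p}=\partial_B T_{A_1\ldots A_p}-\sum_i\slashed{\Gamma}^C_{BA_i}T_{\ldots C\ldots}$ and $(\ns_3 T)_{A_1\ldots A_p}=\Omega^{-1}\big(\partial_u T+\tfrac{p}{2}(\Omega\mathrm{tr}\chi)T\big)$. The second-order coordinate terms cancel because $\partial_u\partial_B=\partial_B\partial_u$ and $\Omega^{-1}$ passes through $\partial_B$; the terms containing $\partial_u\slashed{\Gamma}$ vanish; and the surviving contributions are precisely the weight mismatch, $\big(\tfrac{p+1}{2}-\tfrac{p}{2}\big)\tfrac{(\Omega\mathrm{tr}\chi)}{\Omega}(\ns_B T)=\tfrac{\mathrm{tr}\chi}{2}(\ns_B T)$. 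The $\ns_4$ version is identical except the weight carries the opposite sign, yielding $-\tfrac{\mathrm{tr}\chi}{2}(\ns_B T)$. These two are essentially bookkeeping.

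The mixed commutator $(\ns_3\ns_4-\ns_4\ns_3)T$ is where the real work lies, and I expect it to be the main obstacle, since here there is no saving from commuting angular and null derivatives and the geometric factors are genuinely differentiated. Writing $\mu\doteq\tfrac{p}{2}(\Omega\mathrm{tr}\chi)=p\Omega^2/r$, I would expand $\ns_3(\ns_4 T)$ and $\ns_4(\ns_3 T)$ fully, both being rank-$p$ operations; the second-order terms $\Omega^{-2}\partial_u\partial_v T$ and the $\mu^2$-terms cancel pairwise. To evaluate the first- and zeroth-order remainders I would use $\partial_u r=-D$ and $\partial_v r=D$ (from $r_{\star}=v-u$ with $dr/dr_{\star}=D$), which yield the key identities $\partial_u(\Omega^{-1})=\hat{\omega}=-\partial_v(\Omega^{-1})$ and $\partial_u\mu+\partial_v\mu=0$. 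The first pair collapses the first-order part to $\tfrac{\hat{\omega}}{\Omega}(\partial_u+\partial_v)T$, while the second forces the undifferentiated coefficient of $T$ to vanish. Recognising that $\ns_3 T+\ns_4 T=\Omega^{-1}(\partial_u+\partial_v)T$ (the $\mu T$ contributions cancelling), this is exactly $\hat{\omega}(\ns_3 T+\ns_4 T)$, completing the proof. The delicate points are therefore the precise computation of $\partial_u\Omega,\partial_v\Omega$ and their identification with $\hat{\omega}$, and the clean cancellation $\partial_u\mu+\partial_v\mu=0$ that removes the zeroth-order term.
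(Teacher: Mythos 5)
Your proposal is correct and is exactly the direct computation the paper intends (the paper states Lemma~\ref{comlem} without printing a proof, but the preceding Proposition~\ref{ProDTF2Sch} is set up precisely so that the lemma follows by the expansion you carry out): the first two identities reduce to the rank-weight mismatch $\frac{p+1}{2}-\frac{p}{2}$, and the mixed commutator follows from $\partial_u(\Omega^{-1})=\hat{\omega}=-\partial_v(\Omega^{-1})$ together with $\partial_u\big(\Omega^2/r\big)+\partial_v\big(\Omega^2/r\big)=0$, all of which you verify correctly. No gaps.
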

\subsection{Linearisation in Double Null Gauge}\label{LinearDoubleNullGauge}
To find a definition of linearised metric in double null gauge, consider a one-parameter family of metrics $g(\epsilon)$ in double null canonical coordinates of the form
\begin{align*}
g(\epsilon)&=-{{2}}\Omega^2(\epsilon)(du\otimes dv+dv\otimes du)+b^A\slashed{g}_{AB}(\epsilon)(d\theta^B\otimes dv+dv\otimes d\theta^B)\\
&\nonumber\quad+b^A(\epsilon)b^B(\epsilon)\slashed{g}_{AB}(\epsilon)dv\otimes dv+\slashed{g}_{AB}(\epsilon)d\theta^A\otimes d\theta^B\nonumber
\end{align*}
where ${\Omega}(0)$, ${\slashed{g}}(0)$ and ${b}(0)$ are the background values for the spacetime one wants to linearise around. Therefore, one takes
\begin{align*}
\Omega(\epsilon)&={\Omega}(0)+\epsilon\Olino ,\qquad\slashed{g}(\epsilon)={\slashed{g}}(0)+\epsilon\slashed{h},\qquad b(\epsilon)={b}(0)+\epsilon\bmlin
\end{align*}
where the quantities with a superscript `$(1)$' denote linear perturbations. In general, expanding the the metric $g(\epsilon)$ to linear order as $g(\epsilon)=g(0)+\epsilon h+\mathcal{O}(\epsilon^2)$,
where $h$ is a symmetric $2$-tensor leads to the following definition a linearised metric $h$ to be in double null gauge:
\begin{definition}[Double Null Gauge for $h$]\label{DNGDefinition}
A solution $h$ to the linearised vacuum Einstein equation~(\ref{LVEE}) is said to be in double null gauge if there exists a function $\Olino :M\rightarrow \mathbb{R}$, a vector $\bmlin^A\in \mathfrak{X}(\mathcal{S}_{u,v})$ and a symmetric $\mathcal{S}_{u,v}$ $2$-tensor $\slashed{h}$ such that
\begin{align*}
h&=-\frac{4\Olino }{\Omega}(f^3\otimes f^4+f^4\otimes f^3)-\frac{\bmlin_A}{\Omega}(f^4\otimes f^A+ f^A\otimes f^4)+\slashed{h}_{AB}f^A\otimes f^B,
\end{align*}
where $(f^3,f^4,f^A)$ is the dual basis to $(e_3,e_4,e_A)$ for the background metric $g(0)$.
\end{definition}
\begin{remark}
The paper~\cite{DHR} uses the notation $\glin_{AB}$ for $\slashed{h}_{AB}$. 
\end{remark}

\subsubsection{The Linearised Equations of Gravity Around $(\mathrm{Schw}_4,g_s)$}\label{sec:LNSSchw}
The linearised vacuum Einstein equation~(\ref{LVEE}) can be encoded in a set of equations called the linearised null structure equations and the linearised Bianchi identities. The former result from linearising the Levi-Civita connection condition and the definition of the Riemann tensor \textit{assuming} that $g$ satisfies the vacuum Einstein equation~\eqref{VE}. More detail can be found in~\cite{DHR} and~\cite{MyThesis}. All linearised quantities are denoted with `$(1)$'.
\begin{remark}\label{LVEEtoLNS}
Henceforth, we will use the terminology that $h$ satisfies the {linearised} vacuum Einstein equation~\eqref{LVEE} in double null gauge with the linearised null structure equations and Bianchi identities being satisfied. 
\end{remark}

\begin{prop}[Linearised First Variation Formulas]\label{LinMet}
The linearised metric coefficients $\Olino $, $\bmlin$ and $\slashed{h}$ satisfy:
\begin{equation*}
    \begin{aligned}[c]
        \ns_3(\mathrm{tr}\slashed{h})&=\frac{2}{\Omega}\otxb ,\\
\ns_3\hat{\slashed{h}}_{AB}&=2\xblin _{AB},\\
 e_3\big(\Olin\big)&=\Omega^{-1}\olinb ,\\
 \partial_u\bmlin^A&=2\Omega^2(\elin -\eblin )^A,
    \end{aligned}
    \qquad
     \begin{aligned}[c]
        \ns_4(\mathrm{tr}\slashed{h})&=\frac{2}{\Omega}\Big(\otx -\divs \bmlin\Big)\\
        \ns_4\hat{\slashed{h}}_{AB}&=2\xlin _{AB}+\frac{2}{\Omega}(\Dts \bmlin)_{AB},\\
         e_4\big(\Olin\big)&=\Omega^{-1}\olin,\\
         2\ns\big(\Olin\big)&=(\elin +\eblin ).
    \end{aligned}
\end{equation*}
\end{prop}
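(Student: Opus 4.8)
The plan is to recognise this proposition as the \emph{linearisation of the nonlinear first variation formulas} of a double null foliation, and to obtain each of the eight identities by expanding the defining relations between the metric quantities $(\Omega,\bmlin,\slashed{h})$ and the Ricci coefficients to first order in $\epsilon$. The nonlinear inputs I would assemble from the definitions in Section~\ref{DNR} are: the two deformation identities $\slashed{\mathcal{L}}_{e_3}\slashed{g}_{AB}=2\underline{\chi}_{AB}$ and $\slashed{\mathcal{L}}_{e_4}\slashed{g}_{AB}=2\chi_{AB}$; the two conformal-factor relations $e_3(\log\Omega)=\hat{\underline{\omega}}$ and $e_4(\log\Omega)=\hat{\omega}$ (which one checks hold on the background using the data of Section~\ref{DNS}); and the two shift relations $\eta_A+\underline{\eta}_A=2\ns_A\log\Omega$ together with the transport equation $\partial_u b^A=2\Omega^2(\eta-\underline{\eta})^A$. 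Each displayed identity is the $\mathcal{O}(\epsilon)$ part of one of these, the two $\slashed{g}$-relations each splitting into a trace and a trace-free piece.

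The linearisation proper proceeds by inserting $\Omega=\Omega(0)+\epsilon\Olino$, $\slashed{g}=\slashed{g}(0)+\epsilon\slashed{h}$ and $b=\epsilon\bmlin$ (so $b(0)=0$) and reading off the coefficient of $\epsilon$. The computation stays clean because of the background Schwarzschild data recorded in Section~\ref{DNS}: on the background $\hat{\chi}=\underline{\hat{\chi}}=\eta=\underline{\eta}=b=0$, while $\Omega(0)$ and $\slashed{g}(0)=r^2\gamma_{2}$ are spherically symmetric so that $\partial_A\Omega(0)=0$. For instance, writing the first relation as $\partial_u\slashed{g}_{AB}=2\Omega\,\underline{\hat{\chi}}_{AB}+(\Omega\mathrm{tr}\underline{\chi})\slashed{g}_{AB}$ and extracting the linear part gives $\partial_u\slashed{h}_{AB}=2\Omega\xblin_{AB}+(\Omega\mathrm{tr}\underline{\chi})(0)\slashed{h}_{AB}+\otxb\,\slashed{g}(0)_{AB}$; the background vanishing of $\underline{\hat{\chi}}$ is exactly what lets $\xblin$ appear linearly, and the spherical symmetry of $\Omega(0)$ kills the frame-variation corrections.

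I would then convert $\partial_u,\partial_v$ into the projected derivatives $\ns_3,\ns_4$ via Proposition~\ref{ProDTF2Sch} and split each $2$-tensor identity into trace and trace-free parts. Taking the $\slashed{g}(0)$-trace of the displayed equation and using $\slashed{g}(0)^{AB}\xblin_{AB}=0$ yields $\partial_u(\mathrm{tr}\slashed{h})=2\otxb$, i.e.\ $\ns_3(\mathrm{tr}\slashed{h})=\tfrac{2}{\Omega}\otxb$; subtracting the trace and absorbing the $(\Omega\mathrm{tr}\underline{\chi})(0)$-weight into the definition of $\ns_3$ on a symmetric $2$-tensor leaves $\ns_3\hat{\slashed{h}}_{AB}=2\xblin_{AB}$. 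The $e_4$ identities are structurally identical except that $\slashed{\mathcal{L}}_{e_4}$ carries the shift, so $(\slashed{\mathcal{L}}_{\bmlin}\slashed{g}(0))_{AB}=\ns_A\bmlin_B+\ns_B\bmlin_A$ enters: its trace $2\divs\bmlin$ produces the $-\divs\bmlin$ in the $\ns_4(\mathrm{tr}\slashed{h})$ equation and its trace-free part $-2\Dts\bmlin_{AB}$ produces the $+\tfrac{2}{\Omega}\Dts\bmlin$ in the $\ns_4\hat{\slashed{h}}$ equation. The four scalar/one-form identities for $\Olin$, $\elin\pm\eblin$ and $\bmlin$ follow the same pattern from the conformal-factor and shift relations, the $b$-equation being especially clean since $b(0)=0$.

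The hard part will be the careful bookkeeping of the frame variation and the background-connection correction terms, rather than any conceptual difficulty. Because the null frame $(e_3,e_4,e_A)$ is itself built from $(\Omega,b)$, every nonlinear relation must be differentiated in $\epsilon$ together with the frame, and one must verify that all such frame-variation contributions are $\mathcal{O}(\epsilon^2)$. This is precisely where the special structure of Schwarzschild is used: $\partial_A\Omega(0)=0$ and the vanishing of $\eta(0),\underline{\eta}(0),\hat{\chi}(0),\underline{\hat{\chi}}(0),b(0)$ ensure that the would-be corrections (e.g.\ the discrepancy between $\slashed{\mathcal{L}}_{\Omega^{-1}\partial_v}$ and $\Omega^{-1}\slashed{\mathcal{L}}_{\partial_v}$, or the variation of $e_4$ acting on $\log\Omega(0)$ in the $e_4(\Olin)$ identity) drop out at first order. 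A secondary point requiring attention is that the trace-free projection is taken with the $(u,v)$-dependent background metric $\slashed{g}(0)$, so I would check that $\partial_u$ and the hat-operation commute up to exactly the weight term already built into $\ns_3$ in Proposition~\ref{ProDTF2Sch}; this is short but necessary to land the trace-free identities as stated.
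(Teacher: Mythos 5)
Your proposal is correct and is precisely the derivation the paper intends: the paper states Proposition~\ref{LinMet} without proof, noting only that the linearised null structure equations ``result from linearising the Levi-Civita connection condition'' and deferring details to~\cite{DHR} and~\cite{MyThesis}, and your route --- linearising the nonlinear first variation formulas $\slashed{\mathcal{L}}_{e_3}\slashed{g}=2\underline{\chi}$, $\slashed{\mathcal{L}}_{e_4}\slashed{g}=2\chi$, $e_3(\log\Omega)=\hat{\underline{\omega}}$, $e_4(\log\Omega)=\hat{\omega}$, $\eta+\underline{\eta}=2\ns\log\Omega$ and the shift transport equation about the Schwarzschild background --- is exactly that computation. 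Your bookkeeping of the shift term (trace $2\divs\bmlin$ versus trace-free part $-2\Dts\bmlin$), of the vanishing background quantities, and of the weight in $\ns_3,\ns_4$ from Proposition~\ref{ProDTF2Sch} that absorbs the $(\Omega\mathrm{tr}\chi)\hat{\slashed{h}}$ terms all checks out against the stated identities.
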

\begin{prop}[Linearised Transversal Propagation Equations for Expansions]\label{LinPropExp}
The linearised expansions $\otx $ and  $\otxb $ satisfy
\begin{align*}
\ns_4\otxb &=2\Omega\Big[\divs \eblin +\Big(\rlin +2\Big(\frac{\Olino }{\Omega}\Big)\rho\Big)\Big]-\frac{\mathrm{tr}\chi}{2}\Big(\otxb -\otx \Big),\\
\ns_3\otx &=2\Omega\Big[\divs \elin +\Big(\rlin +2\Big(\frac{\Olino }{\Omega}\Big)\rho\Big)\Big]-\frac{\mathrm{tr}\chi}{2}\Big(\otxb -\otx \Big).
\end{align*}
\end{prop}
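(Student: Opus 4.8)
The plan is to derive both identities by linearising the nonlinear null structure equation that propagates the expansion transversally, evaluated on the Schwarzschild background. For the first identity I would start from the nonlinear equation for $\ns_4(\Omega\mathrm{tr}\underline{\chi})$, whose right-hand side (see \cite{FormBH,DHR}) is built from $\divs\underline{\eta}$, the curvature scalar $\rho$, the product $-\tfrac{1}{2}\mathrm{tr}\chi\,(\Omega\mathrm{tr}\underline{\chi})$, and terms quadratic in the shears $\hat{\chi},\hat{\underline{\chi}}$ and the torsions $\eta,\underline{\eta}$; the virtue of using the weighted expansion $\Omega\mathrm{tr}\underline{\chi}$ rather than $\mathrm{tr}\underline{\chi}$ is that the $\omega$-contributions are largely absorbed. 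First I would record from Section~\ref{DNS} that on Schwarzschild the shears $\hat{\chi},\hat{\underline{\chi}}$, the torsions $\eta,\underline{\eta}$ and the curvature components $\beta,\underline{\beta},\sigma,\alpha,\underline{\alpha}$ all vanish, while $\Omega\mathrm{tr}\chi=-\Omega\mathrm{tr}\underline{\chi}=2\Omega^2/r$, $\omega=-\underline{\omega}=M/r^2$ and $\rho=-2M/r^3$ are the only nontrivial entries.

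I would then linearise term by term. Every term quadratic in a quantity vanishing on the background (such as $|\underline{\eta}|^2$ or $\langle\hat{\chi},\hat{\underline{\chi}}\rangle$) drops at first order, so the curvature and divergence terms produce $2\Omega\divs\eblin$ and $2\Omega\rlin$ cleanly; here one uses that $\divs$ applied to the background-vanishing $\underline{\eta}$ linearises using only the background spherical connection, so no linearised Christoffel symbols appear. The coupling $-\tfrac{\mathrm{tr}\chi}{2}\otxb$ comes simply from the background factor multiplying $\otxb$. The more delicate point is that the \emph{other} expansion $\otx$ appears on the right-hand side precisely because the weighted product $-\tfrac{1}{2}\mathrm{tr}\chi\,(\Omega\mathrm{tr}\underline{\chi})$ contains the \emph{unweighted} $\mathrm{tr}\chi$: linearising $\mathrm{tr}\chi=\Omega^{-1}(\Omega\mathrm{tr}\chi)$ yields $\Omega^{-1}\otx$ against the non-vanishing background value $(\Omega\mathrm{tr}\underline{\chi})_0=-(\Omega\mathrm{tr}\chi)_0$, producing $+\tfrac{\mathrm{tr}\chi}{2}\otx$. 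The leftover $\Olino$-pieces from this linearisation combine with the $\Omega$-weight of the curvature term to assemble $4\Omega(\Olino/\Omega)\rho$; throughout I would use the first variation formulas of Proposition~\ref{LinMet} to rewrite the linearised derivatives of $\Omega$.

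The step demanding the most care — and the one I expect to be the main obstacle — is the linearisation of the operator $\ns_4$ itself. Since $e_4=\Omega^{-1}(\partial_v+b^Ae_A)$ depends on the metric through $\Omega$ and $b$, the variation of the frame generates extra contributions when acting on the non-constant background scalar $\Omega\mathrm{tr}\underline{\chi}$, namely a term proportional to $-(\Olino/\Omega^2)\,\partial_v(\Omega\mathrm{tr}\underline{\chi})_0$ that must be transported to the right-hand side, while the $\bmlin$-contribution drops only because $(\Omega\mathrm{tr}\underline{\chi})_0$ is spherically symmetric. Reconciling these frame-linearisation terms with the curvature weights, and confirming that no net $\olin,\olinb$-contributions survive, is exactly the bookkeeping that produces the stated right-hand side. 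Once the first identity is established, the second follows by the interchange symmetry $e_3\leftrightarrow e_4$, $\chi\leftrightarrow\underline{\chi}$, $\eta\leftrightarrow\underline{\eta}$, $\omega\leftrightarrow\underline{\omega}$ of the double null formalism (under which $\mathrm{tr}\chi\to-\mathrm{tr}\chi$ and $\otxb-\otx\to-(\otxb-\otx)$, leaving the coupling term invariant), with no further computation required.
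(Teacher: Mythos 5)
Your proposal is correct and is precisely the derivation the paper relies on: Proposition~\ref{LinPropExp} is stated without proof, with Section~\ref{sec:LNSSchw} deferring to~\cite{DHR} and~\cite{MyThesis}, where these equations are obtained by linearising the nonlinear transversal propagation equation for the expansion about the Schwarzschild background exactly as you outline, and your interchange-symmetry argument for the second identity is also sound since $\mathrm{tr}\chi\to-\mathrm{tr}\chi$ and $\otxb-\otx\to-(\otxb-\otx)$ leave the coupling term invariant. The delicate points you flag (the frame variation in $\ns_4$ and the $\Olino$-bookkeeping in the quadratic and curvature terms) are genuine but close up as you expect; they are avoided almost entirely if one linearises the equation in the form $\partial_v(\Omega\mathrm{tr}\underline{\chi})=2\Omega^2\big(\divs\underline{\eta}+|\underline{\eta}|^2+\rho\big)-\tfrac{1}{2}(\mathrm{tr}\chi)(\Omega\mathrm{tr}\underline{\chi})\Omega+\ldots$ with the coordinate derivative $\partial_v$ (which requires no linearisation) and both expansions $\Omega$-weighted, after which each term varies termwise to the stated right-hand side.
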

\begin{prop}[Linearised Raychauduri Equations]\label{LinRay}
The linearised expansions $\otx $ and  $\otxb $ satisfy
\begin{align*}
\ns_4\otx &=-\mathrm{tr}\chi\otx +2\olin\mathrm{tr}\chi+\frac{2}{\Omega}{\omega}\otx ,\qquad
\ns_3\accentset{(1)}{(\Omega\mathrm{tr}{\underline{\chi}})}=\mathrm{tr}{\chi}\accentset{(1)}{(\Omega\mathrm{tr}{\underline{\chi}})}-2\olinb \mathrm{tr}{\chi}-\frac{2}{\Omega}{{\omega}}\otxb .
\end{align*}
\end{prop}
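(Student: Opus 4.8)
The plan is to obtain both identities by linearising the \emph{nonlinear} Raychaudhuri equations of the double null formalism around the Schwarzschild background, exploiting the explicit background data recorded in Section~\ref{DNS} --- in particular that Schwarzschild is shear-free ($\hat{\chi}=\hat{\underline{\chi}}=0$) and spherically symmetric. I would treat the outgoing equation for $\otx$ in detail; the incoming one for $\otxb$ follows by the symmetry $3\leftrightarrow 4$, $u\leftrightarrow v$.

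First I would write the outgoing Raychaudhuri equation as a transport equation for the scalar $\Omega\mathrm{tr}\chi$. Starting from the frame form and multiplying through by $\Omega$, using $e_4(\Omega)=\omega$ (equivalently $\Omega\hat{\omega}=\omega$), one arrives at
\begin{align*}
\partial_v(\Omega\mathrm{tr}\chi)=-\tfrac12(\Omega\mathrm{tr}\chi)^2-\Omega^2|\hat{\chi}|^2+2\omega\,(\Omega\mathrm{tr}\chi)-b^A e_A(\Omega\mathrm{tr}\chi),
\end{align*}
where I have converted the frame derivative $e_4=\tfrac{1}{\Omega}(\partial_v+b^Ae_A)$ into the coordinate derivative $\partial_v$. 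The point of this form is that the left-hand side is a genuine coordinate derivative, hence carries no metric dependence and linearises simply to $\partial_v\otx$; all geometric dependence has been isolated on the right.

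Next I would linearise the right-hand side. Two of the four terms drop at first order: the shear term $\Omega^2|\hat{\chi}|^2$ is quadratic in $\hat{\chi}$, whose background value is zero, and the shift term $b^A e_A(\Omega\mathrm{tr}\chi)$ vanishes because $b$ is zero on the background while $e_A(\Omega\mathrm{tr}\chi)=0$ there by spherical symmetry. The remaining two terms linearise by the product rule to $-(\Omega\mathrm{tr}\chi)\,\otx$ and $2\big(\olin\,(\Omega\mathrm{tr}\chi)+\omega\,\otx\big)$, where unadorned symbols now denote Schwarzschild background values. Dividing by the background $\Omega$, using $\ns_4=\tfrac{1}{\Omega}\partial_v$ on sphere-scalars (Proposition~\ref{ProDTF2Sch}) and $\tfrac{1}{\Omega}(\Omega\mathrm{tr}\chi)=\mathrm{tr}\chi$, gives exactly $\ns_4\otx=-\mathrm{tr}\chi\,\otx+2\olin\,\mathrm{tr}\chi+\tfrac{2}{\Omega}\omega\,\otx$.

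For the incoming equation I would run the identical argument with $e_3$, $\partial_u$ and $\Omega\mathrm{tr}\underline{\chi}$, obtaining $\ns_3\otxb=-\mathrm{tr}\underline{\chi}\,\otxb+2\olinb\,\mathrm{tr}\underline{\chi}+\tfrac{2}{\Omega}\underline{\omega}\,\otxb$, and then rewrite it with the Schwarzschild identities $\mathrm{tr}\underline{\chi}=-\mathrm{tr}\chi$ and $\underline{\omega}=-\omega$ to match the stated form. The one genuine subtlety --- and the step I would be most careful about --- is precisely the passage from the frame derivative to the coordinate derivative \emph{before} linearising: had I linearised $e_4(\Omega\mathrm{tr}\chi)$ directly, the perturbation $\Olino$ of the frame acting on the ($r$-dependent, hence non-constant) background $\Omega\mathrm{tr}\chi$ would generate extra $\Olino$-proportional terms which cancel only after the right-hand side is also linearised. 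Routing through $\partial_v$ sidesteps this bookkeeping entirely and makes the vanishing of the shear and shift contributions transparent.
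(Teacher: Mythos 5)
Your argument is correct and is essentially the paper's (delegated) proof: Proposition~\ref{LinRay} is stated without proof here, the derivation being deferred to~\cite{DHR} and~\cite{MyThesis}, where it is obtained exactly as you propose --- by linearising the coordinate form of the Raychaudhuri equation $\partial_v(\Omega\mathrm{tr}\chi)=-\tfrac{1}{2}(\Omega\mathrm{tr}\chi)^2-|\Omega\hat{\chi}|^2+2\omega\,(\Omega\mathrm{tr}\chi)-b^Ae_A(\Omega\mathrm{tr}\chi)$ about Schwarzschild and discarding the quadratic shear term and the shift term. Your linearised identities are consistent with the background relations~\eqref{backgroundcomp}, and your point about converting the frame derivative $e_4$ to the coordinate derivative $\partial_v$ \emph{before} linearising is precisely the right precaution against spurious $\Olino$-proportional terms.
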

\begin{prop}[Linearised Equations for the Shears]\label{LinShear}
The linearised shears $\xlin $ and  $\xblin $ satisfy
\begin{equation*}
    \begin{aligned}[c]
         \ns_4\xlin &=\Big(\hat{\omega}-\mathrm{tr}\chi\Big)\xlin -\alin ,\\
         \ns_4\xblin &=-2\Dts \eblin +\frac{\mathrm{tr}\chi}{2}\big(\xlin -\xblin \big)-\hat{\omega}\xblin ,
    \end{aligned}
    \qquad
    \begin{aligned}
        \ns_3\xblin &=-\Big(\hat{\omega}-\mathrm{tr}\chi\Big)\xblin -\ablin ,\\
    \ns_3\xlin &=-2\Dts \elin +\frac{\mathrm{tr}\chi}{2}\big(\xlin -\xblin \big)+\hat{\omega}\xlin .
    \end{aligned}
\end{equation*}
\end{prop}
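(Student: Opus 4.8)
The plan is to obtain these four identities as the linearisations of the four nonlinear null structure equations governing the shears $\hat{\chi}$ and $\hat{\underline{\chi}}$, which I would quote from the references on the double null formalism (\cite{FormBH}, \cite{DHR}). Two of these are second-variation (Raychaudhuri-type) equations, namely the traceless parts of the $\ns_4\chi$ and $\ns_3\underline{\chi}$ evolution equations, which read schematically $\ns_4\hat{\chi} = (\hat{\omega} - \mathrm{tr}\chi)\hat{\chi} - \alpha + (\text{quadratic})$ and its $3\leftrightarrow 4$ conjugate. The other two are the transversal equations for $\ns_3\hat{\chi}$ and $\ns_4\hat{\underline{\chi}}$, derived from the Codazzi relations, which carry the $\Dts$-terms: e.g.\ $\ns_3\hat{\chi} = -2\Dts\eta + \tfrac{\mathrm{tr}\chi}{2}(\hat{\chi}-\hat{\underline{\chi}}) + \hat{\omega}\hat{\chi} + (\text{quadratic})$. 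These would be quoted in the conventions of definition~\ref{connectioncoeff}.

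The structural simplification that makes the linearised equations exactly linear with \emph{background} coefficients is that, by the computation in section~\ref{DNS}, the Schwarzschild background has $\hat{\chi}=\hat{\underline{\chi}}=0$, $\eta=\underline{\eta}=0$ and $\alpha=\underline{\alpha}=0$, with only $\mathrm{tr}\chi,\mathrm{tr}\underline{\chi},\omega,\underline{\omega},\rho$ nonvanishing. Consequently: (i) every quadratic term, being a product of two quantities each vanishing in the background (such as $\hat{\chi}\cdot\hat{\chi}$ or an $\eta\otimes\eta$ piece), drops at linear order; (ii) a product $f\hat{\chi}$ with $f$ any Ricci coefficient linearises to $f^{(0)}\xlin$ since $\hat{\chi}^{(0)}=0$, so $(\hat{\omega}-\mathrm{tr}\chi)\hat{\chi}\mapsto(\hat{\omega}-\mathrm{tr}\chi)\xlin$ and $\tfrac{\mathrm{tr}\chi}{2}\hat{\underline{\chi}}\mapsto\tfrac{\mathrm{tr}\chi}{2}\xblin$; and (iii) $\alpha\mapsto\alin$ and $\Dts\eta\mapsto\Dts\elin$, where crucially the variation of the operator $\Dts$ acts on $\eta^{(0)}=0$ and contributes nothing, leaving the background $\Dts$ applied to $\elin$.

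To make the linearisation of the derivatives rigorous, I would use the explicit coordinate formula of proposition~\ref{ProDTF2Sch}, writing $(\ns_4\hat{\chi})_{AB}=\tfrac{1}{\Omega}\big(\partial_v\hat{\chi}_{AB}-(\Omega\mathrm{tr}\chi)\hat{\chi}_{AB}\big)$ and differentiating in $\epsilon$. Because $\hat{\chi}^{(0)}=0$, the variations of the prefactor $\Omega^{-1}$ and of $(\Omega\mathrm{tr}\chi)$ both multiply vanishing background quantities and disappear, so $(\ns_4\hat{\chi})^{(1)}=\ns_4\xlin$ with $\ns_4$ the background operator; the same holds along $e_3$. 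Assembling (i)--(iii) then yields the four displayed identities. The sign pattern of the conjugate ($\underline{\phantom{x}}$) equations emerges from the background relations $\mathrm{tr}\underline{\chi}=-\mathrm{tr}\chi$ and $\hat{\underline{\omega}}=-\hat{\omega}$ recorded in section~\ref{DNS}: for instance the conjugate of the first equation gives $\ns_3\xblin=(\hat{\underline{\omega}}^{(0)}-\mathrm{tr}\underline{\chi}^{(0)})\xblin-\ablin=-(\hat{\omega}-\mathrm{tr}\chi)\xblin-\ablin$, as claimed.

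The main obstacle is the bookkeeping of the precise numerical coefficients of the $\hat{\omega}$- and $\mathrm{tr}\chi$-terms. This is delicate because of the normalisation conventions in play: the distinction between $\omega$ and $\hat{\omega}=\omega/\Omega$, the $\Omega$-weights built into $\ns_3,\ns_4$, and the background identities $\omega=-\underline{\omega}$, $\mathrm{tr}\chi=-\mathrm{tr}\underline{\chi}$, all mean one must track exactly which connection coefficient multiplies $\hat{\chi}$ in each nonlinear structure equation before substituting its Schwarzschild value. Once the nonlinear equations are fixed in the conventions of definition~\ref{connectioncoeff} and proposition~\ref{conncoeffSchw}, the remaining computation is routine and parallels the proofs of propositions~\ref{LinMet}--\ref{LinRay}.
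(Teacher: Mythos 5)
Your proposal is correct and coincides with the derivation the paper intends: the paper states proposition~\ref{LinShear} without proof, noting only that the linearised null structure equations ``result from linearising the Levi-Civita connection condition and the definition of the Riemann tensor'' and deferring details to~\cite{DHR} and~\cite{MyThesis}, and your argument --- linearise the nonlinear shear equations about Schwarzschild, where $\hat{\chi}=\hat{\underline{\chi}}=\eta=\underline{\eta}=\alpha=\underline{\alpha}=0$ kills all quadratic terms, freezes the remaining coefficients at their background values, and reduces the varied operators $\ns_3,\ns_4,\slashed{\mathcal{D}}_2^{\star}$ to their background versions --- is exactly that derivation. The only caveat is the one you already flag yourself: the precise coefficients of the $\hat{\omega}$- and $\mathrm{tr}\chi$-terms must be read off from the nonlinear equations in the conventions of definition~\ref{connectioncoeff}, and your sketch handles this consistently (the conjugate pairs obtained via $\mathrm{tr}\underline{\chi}=-\mathrm{tr}\chi$ and $\hat{\underline{\omega}}=-\hat{\omega}$ reproduce the stated signs).
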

\begin{prop}[Linearised Torsion Propagation Equations]\label{LinTorProp}
The linearised torsions $\elin $ and  $\accentset{(1)}{{\underline{\eta}}}$ satisfy
\begin{equation*}
    \begin{aligned}[c]
      \ns_4\elin &=-\blin -\frac{1}{2}\mathrm{tr}{\chi}(\elin -\eblin ),\\
      \ns_3\elin &=\frac{2}{\Omega}\ns\olinb +(\mathrm{tr}\chi)\elin -\bblin ,
    \end{aligned}
    \qquad 
    \begin{aligned}[c]
      \ns_3\eblin &=\bblin -\frac{1}{2}\mathrm{tr}{\chi}(\elin -\eblin ),\\
      \ns_4\eblin &=\frac{2}{\Omega}\ns\olin-(\mathrm{tr}\chi)\eblin +\blin .
    \end{aligned}
\end{equation*}
\end{prop}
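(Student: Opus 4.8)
The plan is to obtain all four identities by linearising the corresponding \emph{nonlinear} transport equations for the torsions $\eta$ and $\underline\eta$ about the Schwarzschild background, in keeping with the strategy described in Section~\ref{sec:LNSSchw}: the linearised null structure equations arise from linearising the Levi-Civita condition and the definition of the Riemann tensor around a background solving~\eqref{VE}. The decisive structural input is the list of vanishing background quantities recorded in Section~\ref{DNS}: on $(\mathrm{Schw}_4,g_s)$ one has $\eta=\underline\eta=\zeta=0$, $\hat\chi=\hat{\underline\chi}=0$ and $\beta=\underline\beta=\sigma=0$, while the only surviving coefficients are $(\Omega\mathrm{tr}\chi)=-(\Omega\mathrm{tr}\underline\chi)=2\Omega^2/r$, $\omega=-\underline\omega=M/r^2$ and $\rho=-2M/r^3$. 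This abundance of vanishing background data is exactly what collapses the linearised equations to the stated form.

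First I would fix conventions by re-deriving the nonlinear transport equations for $\eta_A=\tfrac12 g(\nabla_3 e_4,e_A)$ and $\underline\eta_A=\tfrac12 g(\nabla_4 e_3,e_A)$ directly from Definition~\ref{connectioncoeff}: differentiate along $e_4$ and $e_3$, commute the resulting derivatives using Lemma~\ref{comlem}, and re-express the curvature contractions that appear through the null decomposition of Definition~\ref{curvcomp}. These equations express $\ns_4\eta$, $\ns_3\eta$ and their conjugates in terms of the curvature components $\beta,\underline\beta$, the angular derivatives of the scalars $\omega,\underline\omega$, and quadratic products of Ricci coefficients, among which the relevant structures are $\mathrm{tr}\chi\,(\eta-\underline\eta)$, the shear contractions $\hat\chi\cdot(\eta-\underline\eta)$, and terms proportional to $\eta$ or $\underline\eta$ alone.

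The heart of the argument is a term-by-term linearisation in which two observations carry the load. Because $\eta^{(0)}=\underline\eta^{(0)}=0$, the linearisation of the projected operator $\ns_4$ (respectively $\ns_3$) --- which would otherwise contribute a piece built from the variation of the connection and of the null frame --- acts only on a vanishing background torsion and hence drops, leaving $(\ns_4\eta)^{(1)}=\ns_4\elin$ with the \emph{background} operator of Proposition~\ref{ProDTF2Sch}. For the same reason every product whose background factor is one of $\eta,\underline\eta,\hat\chi,\beta,\underline\beta$ linearises to zero: the shear contraction $\hat\chi\cdot(\eta-\underline\eta)$ disappears outright, the coefficient multiplying $(\eta-\underline\eta)$ need only be retained at background order as $-\tfrac12\mathrm{tr}\chi$, any term proportional to $\eta$ with a nonvanishing background coefficient survives as that coefficient times $\elin$ (the origin of the $+\mathrm{tr}\chi\,\elin$ in the $\ns_3\elin$ identity), and the curvature enters simply as $\blin$ or $\bblin$ with the appropriate sign. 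Note in particular that the linearisation of $\beta_A=\tfrac12 R_{A434}$ carries no frame-variation correction, since the background component $R_{A434}$ vanishes together with $\beta$. Collecting the survivors yields $\ns_4\elin=-\blin-\tfrac12\mathrm{tr}\chi(\elin-\eblin)$, and the remaining three equations follow identically.

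I expect the main obstacle to lie not in the vector-valued terms, which are tamed by the vanishing background torsion, but in the scalar transport terms $\tfrac{2}{\Omega}\ns\olinb$ and $\tfrac{2}{\Omega}\ns\olin$ and the attendant $\Omega$-weight bookkeeping. On the background $\underline\omega=-M/r^2$ is nonzero but spherically symmetric, so $\ns\underline\omega=0$ and only its linearisation $\ns\olinb$ enters; however, since $\omega=\Omega\hat\omega$ and the operators $\ns_3,\ns_4$ carry explicit factors of $1/\Omega$ (Proposition~\ref{ProDTF2Sch}), a careless linearisation easily misplaces a power of $\Omega$ or a sign. As a final consistency check I would feed the four identities into the linearised first variation formulas of Proposition~\ref{LinMet}, namely $2\ns(\Olin)=\elin+\eblin$ and $\partial_u\bmlin^A=2\Omega^2(\elin-\eblin)^A$, which already determine $\elin\pm\eblin$ in terms of the linearised metric; applying $\ns_3,\ns_4$ to these and comparing with the appropriate sums and differences of the derived equations, via the commutators of Lemma~\ref{comlem}, should reproduce Propositions~\ref{LinPropExp} and~\ref{LinRay} and confirm that all weights and signs have been assigned correctly.
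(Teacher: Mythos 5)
Your proposal is correct and coincides with the paper's (implicit) approach: the paper states Proposition~\ref{LinTorProp} as one of the linearised null structure equations obtained by linearising the Levi-Civita condition and the nonlinear torsion transport equations about Schwarzschild, deferring the details to~\cite{DHR} and~\cite{MyThesis}, and your term-by-term linearisation exploiting the vanishing of the background torsion, shear and $\beta,\underline{\beta},\sigma$ is exactly that computation. The closing consistency check against $2\ns(\Olin)=\elin+\eblin$ and $\partial_u\bmlin^A=2\Omega^2(\elin-\eblin)^A$ from Proposition~\ref{LinMet} is a sensible safeguard on the $\Omega$-weights and signs, which are indeed the only delicate points.
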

\begin{prop}\label{Linomega}
The functions $\accentset{(1)}{\underline{{\omega}}}$ and $\accentset{(1)}{{{\omega}}}$ satisfy
\begin{align*}
\ns_4\accentset{(1)}{\underline{{\omega}}}&=-\Omega\Big(2\Big(\frac{\Olino }{\Omega}\Big)\rho+\rlin \Big),\qquad
\ns_3\olin=-\Omega\Big(2\Big(\frac{\Olino }{\Omega}\Big)\rho+\rlin \Big).
\end{align*}
\end{prop}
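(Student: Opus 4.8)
The plan is to obtain both identities by linearising, around the Schwarzschild background of Section~\ref{DNS}, the nonlinear null structure equations that propagate $\omega$ along $e_3$ and $\underline\omega$ along $e_4$ — exactly the scheme used for the preceding Propositions~\ref{LinMet}--\ref{LinTorProp}. In the vacuum double null formalism these equations take the schematic form $\nabla_3\omega = -\Omega\rho + Q$ and $\nabla_4\underline\omega = -\Omega\rho + \underline{Q}$, where $Q,\underline{Q}$ are quadratic in the Ricci coefficients $\hat\chi,\hat{\underline\chi},\eta,\underline\eta,\omega,\underline\omega$. Since the two equations are interchanged by the discrete symmetry $e_3\leftrightarrow e_4$, $u\leftrightarrow v$, $\omega\leftrightarrow\underline\omega$, it suffices to treat the $\ns_3\olin$ equation and deduce the other by symmetry.

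First I would insert the expansion $g(\epsilon)=g+\epsilon h+\mathcal{O}(\epsilon^2)$ from Definition~\ref{DNGDefinition} and collect the $\mathcal{O}(\epsilon)$ part of $\nabla_3\omega = -\Omega\rho + Q$. Because $\omega$ is a scalar one has $\nabla_3\omega = e_3\omega = \Omega^{-1}\partial_u\omega$ (no $\bmlin$ enters $e_3$, and by Proposition~\ref{ProDTF2Sch} the projected and coordinate derivatives agree on functions). Hence the left-hand side linearises to $\ns_3\olin - (\Olino/\Omega^2)\,\partial_u\omega$, the second term coming from perturbing the factor $\Omega^{-1}$ and $\partial_u\omega$ denoting the background value $\omega=M/r^2$ of Section~\ref{DNS}. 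Substituting the $\mathcal{O}(1)$ part of the \emph{same} equation, namely $\partial_u\omega = -\Omega^2\rho$, turns this into $\ns_3\olin + \Olino\rho$. On the right-hand side, $-\Omega\rho$ linearises to $-\Olino\rho - \Omega\rlin$, while each term in $Q$ is a product at least one factor of which vanishes on the background (the shears $\hat\chi,\hat{\underline\chi}$ and torsions $\eta,\underline\eta$), so $Q$ contributes nothing at first order. Equating and cancelling the operator term gives $\ns_3\olin = -2\Olino\rho - \Omega\rlin = -\Omega\big(2(\Olino/\Omega)\rho + \rlin\big)$, and the symmetry then yields the $\ns_4\olinb$ identity (the would-be $\bmlin$ contribution to $e_4$ dropping out because the background $\underline\omega$ is spherically symmetric, so $\partial_A\underline\omega=0$).

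The main obstacle is the interplay between the linearisation of the differential operator and the background transport equation: the factor $\Omega^{-1}$ in $e_3$ produces the term $-(\Olino/\Omega^2)\partial_u\omega$ which, after using $\partial_u\omega=-\Omega^2\rho$, supplies a \emph{second} copy of $\Olino\rho$, and this is precisely what doubles the coefficient of $(\Olino/\Omega)\rho$ in the final expression. The remaining care is in confirming that $Q$ contributes nothing at linear order; this requires the precise vacuum form of the quadratic terms together with the fact, recorded in Section~\ref{DNS}, that on Schwarzschild the shears, torsions and all null curvature components other than $\rho$ vanish. In particular one must check that $Q$ contains no product of the background-nonvanishing coefficients $\omega,\underline\omega,\mathrm{tr}\chi,\mathrm{tr}\underline\chi$ that would leave an uncancelled $\olin$, $\olinb$ or $\otx$ term in the result, since no such term appears on the stated right-hand side.
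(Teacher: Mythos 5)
Your derivation is correct and is exactly the route the paper relies on: Proposition~\ref{Linomega} is stated without proof as one of the linearised null structure equations obtained by linearising the nonlinear transport equations (the paper defers the details to the cited references), and the nonlinear equation does have the form $\Omega\nabla_3\omega=\Omega^2\big(2\langle\eta,\underline\eta\rangle-|\underline\eta|^2-\rho\big)$ with the quadratic part vanishing to second order on Schwarzschild. Your bookkeeping of the factor of $2$ — one copy of $\Olino\rho$ from perturbing $-\Omega\rho$ and one from perturbing the $\Omega^{-1}$ in $e_3$ against the background identity $\partial_u\omega=-\Omega^2\rho$ — checks out and reproduces the stated right-hand side.
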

\begin{prop}[Linearised Torsion Constraints]\label{LinTorCon}
The linearised torsions $\elin $ and  $\accentset{(1)}{{\underline{\eta}}}$ satisfy
\begin{align*}
    \slashed{\mathrm{curl}}\eblin &=-\slin ,\qquad
\slashed{\mathrm{curl}}\elin =\slin .
\end{align*}
\end{prop}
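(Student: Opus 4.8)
The plan is to obtain the two torsion constraints as the linearisation on Schwarzschild of the nonlinear torsion constraint, a Gauss--Codazzi-type null structure equation relating the curl of the torsion to $\sigma$. With the sign conventions fixed in Definitions~\ref{connectioncoeff} and~\ref{curvcomp}, this nonlinear identity reads
\begin{align*}
\curls\eta=\sigma+\tfrac12\,\hat{\underline{\chi}}\wedge\hat{\chi},\qquad \curls\underline{\eta}=-\sigma-\tfrac12\,\hat{\underline{\chi}}\wedge\hat{\chi},\qquad \hat{\underline{\chi}}\wedge\hat{\chi}\doteq\slashed{\varepsilon}^{AB}\hat{\underline{\chi}}_A{}^C\hat{\chi}_{CB},
\end{align*}
and the two equations are mutually consistent with $\eta+\underline{\eta}=2\ns\log\Omega$, since the curl of a gradient vanishes.

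To establish this nonlinear identity I would compute the frame component $R_{34AB}=g(e_3,R(e_A,e_B)e_4)$ (equal to $R_{AB34}$ by the symmetries of the Riemann tensor) in two ways. On one hand, Definition~\ref{curvcomp} together with $\sigma=\tfrac14(\star R)_{3434}$ ties its antisymmetric-in-$AB$ part to $\sigma\,\slashed{\varepsilon}_{AB}$. On the other hand, expanding $R(e_A,e_B)e_4=\nabla_A\nabla_Be_4-\nabla_B\nabla_Ae_4-\nabla_{[e_A,e_B]}e_4$ and substituting the connection coefficients of Definition~\ref{connectioncoeff} produces, after antisymmetrising in $A,B$, the curl of the torsion $\slashed{\varepsilon}^{AB}\ns_A\zeta_B=\curls\zeta$ together with a quadratic contraction $\hat{\chi}\wedge\hat{\underline{\chi}}$ of the shears (the trace parts of $\chi,\underline{\chi}$ contributing only symmetric terms that drop out of the antisymmetrisation). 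Since $\eta$ and $\zeta$ differ by the curl-free gradient $\ns\log\Omega$, one has $\curls\zeta=\curls\eta$; equating the antisymmetric parts then gives the displayed constraint. This is a standard computation (cf.~\cite{DHR,MyThesis}) and I would treat it as routine.

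I would then linearise about Schwarzschild, where the decisive observation is that every background factor entering the correction terms vanishes: by Proposition~\ref{conncoeffSchw} one has $\hat{\chi}=\hat{\underline{\chi}}=0$ and $\eta=\underline{\eta}=\zeta=0$, while $\sigma=0$ because $\rho$ is the only nonvanishing background curvature component. Consequently the quadratic term linearises as $(\hat{\underline{\chi}}\wedge\hat{\chi})^{(1)}=\xblin\wedge\hat{\chi}^{(0)}+\hat{\underline{\chi}}^{(0)}\wedge\xlin=0$, and the linearisation of the operator $\curls$---which differentiates the perturbed $\slashed{g}$ through $\slashed{\varepsilon}^{AB}$ and the sphere Christoffel symbols---acts only on the background torsion $\eta=0$ and therefore drops out, leaving $(\curls\eta)^{(1)}=\curls\elin$ with the background operator. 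Since $\sigma^{(1)}=\slin$ by definition, the two equations become $\curls\elin=\slin$ and $\curls\eblin=-\slin$, as claimed.

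The main obstacle is bookkeeping rather than conceptual: fixing the nonlinear structure equation and, above all, its signs under this paper's conventions for $\sigma$, for $\eta$ versus $\zeta$, and for the orientation of $\slashed{\varepsilon}$. As an independent sign and consistency check I would take $\curls$ of the relation $2\ns(\Olin)=\elin+\eblin$ from Proposition~\ref{LinMet}; since the curl of a gradient vanishes this yields $\curls\elin+\curls\eblin=0$, which is precisely the compatibility of the two stated identities and pins down the relative sign between them.
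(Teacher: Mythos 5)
Your proposal is correct and follows the same route the paper itself takes: the paper states this proposition without an explicit proof, noting only that the linearised null structure equations arise from linearising the corresponding nonlinear identities (deferring details to~\cite{DHR} and~\cite{MyThesis}), and your derivation---linearising the nonlinear torsion constraint $\curls\eta=\sigma+\tfrac12\hat{\underline{\chi}}\wedge\hat{\chi}$ about Schwarzschild, where the vanishing of the background shears, torsions and $\sigma$ kills both the quadratic terms and the perturbation of the $\curls$ operator---is exactly that standard computation. Your sign check via $\curls$ of $2\ns(\Olin)=\elin+\eblin$ from Proposition~\ref{LinMet} is a sound way to pin down the relative sign and is consistent with the stated identities.
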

\begin{prop}[Linearised Gauss Equations]\label{LinGauss}
The linearised scalar curvature satisfies
\begin{align*}
\accentset{(1)}{\slashed{\mathrm{Scal}}}&=-2\rlin -\frac{1}{2\Omega}\mathrm{tr}\chi\Big[\otxb -\otx \Big]-(\mathrm{tr}\chi)^2\Big(\frac{\Olino }{\Omega}\Big).
\end{align*}
\end{prop}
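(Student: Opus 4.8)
The plan is to obtain the linearised Gauss equation by linearising the \emph{non-linear} Gauss equation for the double null foliation, consistent with the fact that the linearised null structure equations arise from linearising the Levi--Civita connection condition and the definition of the Riemann tensor. In the double null decomposition of a vacuum spacetime, the intrinsic Gauss curvature of $\mathcal{S}_{u,v}$ (equivalently $\slashed{\mathrm{Scal}}=2K$) is related to the null curvature component $\rho$ and the null second fundamental forms by an identity of the form
\[
\slashed{\mathrm{Scal}} = -2\rho - \tfrac{1}{2}\,\mathrm{tr}\chi\,\mathrm{tr}\underline{\chi} + \langle\hat{\chi},\hat{\underline{\chi}}\rangle .
\]
First I would record this identity and check it on the Schwarzschild background, using the explicit values $\mathrm{tr}\chi=-\mathrm{tr}\underline{\chi}=2\Omega/r$, $\hat{\chi}=\hat{\underline{\chi}}=0$ and $\rho=-2M/r^{3}$ collected in section~\ref{DNS}; a short substitution reproduces $\slashed{\mathrm{Scal}}=2/r^{2}$, confirming the identity and fixing conventions.

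Next I would linearise term by term around this background. The left-hand side gives $\accentset{(1)}{\slashed{\mathrm{Scal}}}$ by definition, and the term $-2\rho$ gives $-2\rlin$. The shear term $\langle\hat{\chi},\hat{\underline{\chi}}\rangle$ is a product of two quantities that both vanish on the shear-free Schwarzschild background, so its first variation vanishes identically and contributes nothing; this is the one genuinely clean cancellation, and it is precisely why no shear terms survive in the statement.

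The crux is the expansion product $-\tfrac{1}{2}\mathrm{tr}\chi\,\mathrm{tr}\underline{\chi}$. Here one must respect that the linearised variables are $\otx$ and $\otxb$, namely the variations of $\Omega\mathrm{tr}\chi$ and $\Omega\mathrm{tr}\underline{\chi}$, rather than of $\mathrm{tr}\chi$ and $\mathrm{tr}\underline{\chi}$ individually. Writing $\mathrm{tr}\chi\,\mathrm{tr}\underline{\chi}=(\Omega\mathrm{tr}\chi)(\Omega\mathrm{tr}\underline{\chi})\,\Omega^{-2}$ and differentiating, the variation $\Olino$ of $\Omega$ enters through the $\Omega^{-2}$ factor. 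Substituting the background values then produces one term proportional to $\otxb-\otx$ and one proportional to $\Olino$; converting the numerical coefficients back into $\tfrac{1}{2\Omega}\mathrm{tr}\chi$ and $(\mathrm{tr}\chi)^{2}\Omega^{-1}$ respectively (via $\mathrm{tr}\chi=2\Omega/r$) yields exactly the two remaining terms in the proposition.

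The main obstacle is purely this $\Omega$-bookkeeping in the expansion product: one must track the factors of $\Omega$ and the variation $\Olino$ with care, since they are exactly the terms that distinguish the linearised Gauss equation from a naive linearisation in which $\Omega$ is held fixed. Everything else reduces to direct substitution of the Schwarzschild background values, so once the expansion product is handled correctly the stated identity follows immediately.
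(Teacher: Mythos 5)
Your proposal is correct and follows essentially the route the paper intends: the paper states this proposition without proof as one of the linearised null structure equations obtained by linearising the non-linear (Gauss) equation, deferring details to~\cite{DHR} and~\cite{MyThesis}, and your term-by-term linearisation of $\slashed{\mathrm{Scal}}=-2\rho-\tfrac{1}{2}\mathrm{tr}\chi\,\mathrm{tr}\underline{\chi}+\langle\hat{\chi},\hat{\underline{\chi}}\rangle$ around the shear-free Schwarzschild background, with the $\Omega^{-2}$ bookkeeping producing the $\Olino$ term, reproduces the stated identity exactly.
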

\begin{corollary}\label{Gausscor}
The linearised metric coefficient $\hat{\slashed{h}}$ satisfies
\begin{align*}
\divs \divs \hat{\slashed{h}}&=-2\rlin +\frac{1}{2\Omega}\mathrm{tr}\chi\Big[\otx -\otxb \Big]-(\mathrm{tr}\chi)^2\Big(\frac{\Olino }{\Omega}\Big)+\frac{1}{2}\slashed{\Delta}\mathrm{tr}\slashed{h}+\frac{1}{2}\slashed{\mathrm{Scal}}(\slashed{g})\mathrm{tr}\slashed{h}.
\end{align*}
\end{corollary}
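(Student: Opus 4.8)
The plan is to recognise Corollary~\ref{Gausscor} as nothing more than the classical first-variation formula for the scalar curvature of the induced sphere metric $\slashed{g}$, combined with the expression for the linearised scalar curvature already supplied by the linearised Gauss equation (proposition~\ref{LinGauss}). In double null gauge the perturbation of the induced metric on $\mathcal{S}_{u,v}$ is precisely $\slashed{h}$, so $\accentset{(1)}{\slashed{\mathrm{Scal}}}$ is the linearisation, under the symmetric perturbation $\slashed{h}$, of the scalar curvature of a two-dimensional Riemannian metric. Thus the corollary is really the assertion that this linearised scalar curvature admits two equivalent expressions, one intrinsic in $\slashed{h}$ and one in null-decomposed quantities.

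First I would record the general variation formula. For a one-parameter family $\slashed{g}(\epsilon)=\slashed{g}+\epsilon\slashed{h}$ on $\mathcal{S}_{u,v}$ one has
\begin{align*}
\accentset{(1)}{\slashed{\mathrm{Scal}}}=\divs\divs\slashed{h}-\ds(\mathrm{tr}\slashed{h})-\langle\slashed{\mathrm{Ric}},\slashed{h}\rangle,
\end{align*}
where $\ds=\slashed{g}^{AB}\ns_A\ns_B$ is the (negative-spectrum) Laplacian of section~\ref{DNO} and $\divs\divs\slashed{h}=\ns^A\ns^B\slashed{h}_{AB}$. This is a purely Riemannian computation obtained by linearising the Christoffel symbols and contracting; I would either cite it or rederive it, fixing the sign conventions to match those of $\ds$ and $\divs$ used throughout the paper (a quick sanity check against a conformal perturbation $\slashed{h}=2\phi\,\slashed{g}$ pins down the signs).

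Next I would exploit that $\mathcal{S}_{u,v}$ is two-dimensional, so $\slashed{\mathrm{Ric}}=\tfrac12\slashed{\mathrm{Scal}}\,\slashed{g}$ and hence $\langle\slashed{\mathrm{Ric}},\slashed{h}\rangle=\tfrac12\slashed{\mathrm{Scal}}\,\mathrm{tr}\slashed{h}$. I would then split $\slashed{h}$ into its trace and trace-free parts, $\slashed{h}_{AB}=\hat{\slashed{h}}_{AB}+\tfrac12(\mathrm{tr}\slashed{h})\slashed{g}_{AB}$, and use $\ns\slashed{g}=0$ to compute $\divs\divs\big((\mathrm{tr}\slashed{h})\slashed{g}\big)=\ds(\mathrm{tr}\slashed{h})$, so that $\divs\divs\slashed{h}=\divs\divs\hat{\slashed{h}}+\tfrac12\ds(\mathrm{tr}\slashed{h})$. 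Feeding this into the variation formula yields
\begin{align*}
\accentset{(1)}{\slashed{\mathrm{Scal}}}=\divs\divs\hat{\slashed{h}}-\tfrac12\ds(\mathrm{tr}\slashed{h})-\tfrac12\slashed{\mathrm{Scal}}\,\mathrm{tr}\slashed{h}.
\end{align*}

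Finally I would rearrange this for $\divs\divs\hat{\slashed{h}}$ and substitute the right-hand side of proposition~\ref{LinGauss} for $\accentset{(1)}{\slashed{\mathrm{Scal}}}$; using $-\tfrac{1}{2\Omega}\mathrm{tr}\chi[\otxb-\otx]=\tfrac{1}{2\Omega}\mathrm{tr}\chi[\otx-\otxb]$ reproduces the claimed identity verbatim. There is no substantive analytic obstacle here; the only real care required is sign bookkeeping, namely getting the orientation of the Laplacian $\ds$ and of the codifferential correct in the first-variation formula, since an error there would flip the sign of the $\tfrac12\ds\,\mathrm{tr}\slashed{h}$ term and spoil the match with proposition~\ref{LinGauss}.
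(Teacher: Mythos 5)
Your proposal is correct and is essentially the derivation the paper intends: the corollary is stated without explicit proof as an immediate consequence of proposition~\ref{LinGauss} via the standard first-variation formula for the scalar curvature of $\slashed{g}(\epsilon)=\slashed{g}+\epsilon\slashed{h}$, together with the two-dimensional identity $\slashed{\mathrm{Ric}}=\tfrac12\slashed{\mathrm{Scal}}\,\slashed{g}$ and the trace/trace-free split giving $\divs\divs\slashed{h}=\divs\divs\hat{\slashed{h}}+\tfrac12\ds(\mathrm{tr}\slashed{h})$. Your sign bookkeeping reproduces the stated identity exactly.
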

\begin{prop}[Linearised Codazzi Constraints]\label{LinCod}
The linearised shears $\xlin $ and  $\xblin $ satisfy
\begin{align*}
\divs \xlin &=\frac{1}{2\Omega}\ns\otx -\frac{1}{2}\mathrm{tr}\chi\eblin -\blin ,\qquad
\divs \xblin =\frac{1}{2\Omega}\ns\otxb +\frac{1}{2}\mathrm{tr}{\chi}\elin +\bblin .
\end{align*}
\end{prop}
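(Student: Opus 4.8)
The plan is to obtain both linearised Codazzi constraints by linearising the nonlinear Codazzi (Mainardi) equations of the double null formalism about the Schwarzschild background, rather than recomputing the linearised Ricci identity from scratch. For any double null foliation of a vacuum spacetime, the shear and the torsion $\zeta_A \doteq \tfrac12 g(\nabla_A e_4, e_3)$ satisfy a structure equation of the schematic shape
\[
\divs\hat\chi = \tfrac12\ns(\mathrm{tr}\chi) + \tfrac12(\mathrm{tr}\chi)\zeta - \hat\chi\cdot\zeta - \beta,
\]
together with its $e_3\leftrightarrow e_4$ counterpart for $\hat{\underline\chi}$, $\mathrm{tr}\underline\chi$ and $\underline\beta$. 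These relate the mixed tangential--normal components of the ambient curvature to angular derivatives of the second fundamental forms of $\mathcal{S}_{u,v}$, and they hold before any linearisation is imposed.

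First I would record the background values entering the linearisation: on Schwarzschild $\hat\chi = \hat{\underline\chi} = 0$, $\zeta = \eta = \underline\eta = 0$ (because $\Omega$ is angle-independent, so $\ns\log\Omega$ vanishes), $\beta = \underline\beta = 0$, while $\mathrm{tr}\chi = -\mathrm{tr}\underline\chi$ and $\rho$ are nonzero and constant on each sphere. Because so many zeroth-order quantities vanish, the first variation collapses: every term quadratic in background-vanishing objects (e.g. $\hat\chi\cdot\zeta$) drops, and the variation of the operator $\divs$ acting on the background-zero tensor $\hat\chi$ also drops, so that $(\divs\hat\chi)^{(1)} = \divs\xlin$ with the background angular connection. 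Only the $\tfrac12\ns(\mathrm{tr}\chi)$, $\tfrac12(\mathrm{tr}\chi)\zeta$ and $-\beta$ terms survive at first order.

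The heart of the computation is converting these surviving linear terms into the paper's preferred variables. For the torsion term I would use $\eta - \underline\eta = 2\zeta$, whose first-order part reads $\zeta^{(1)} = \tfrac12(\elin - \eblin)$. For the expansion term the subtlety is that the natural gauge quantity is $\otx = (\Omega\mathrm{tr}\chi)^{(1)}$ rather than $(\mathrm{tr}\chi)^{(1)}$; using angle-independence of the background $\Omega$ and $\mathrm{tr}\chi$ gives $\tfrac{1}{2\Omega}\ns\otx = \tfrac12\ns(\mathrm{tr}\chi)^{(1)} + \tfrac{\mathrm{tr}\chi}{2\Omega}\ns\Olino$, and then the first variation formula $2\ns(\Olin) = \elin + \eblin$ from Proposition~\ref{LinMet} turns the last contribution into $\tfrac{\mathrm{tr}\chi}{4}(\elin + \eblin)$. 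Substituting both relations into the linearised equation, the $\elin$ contributions from $\tfrac12(\mathrm{tr}\chi)\zeta^{(1)}$ and from $\ns\otx$ cancel while the $\eblin$ contributions reinforce, leaving precisely $-\tfrac12(\mathrm{tr}\chi)\eblin$; the $\underline\chi$ identity follows in the same way after exchanging the two null directions and using $\mathrm{tr}\underline\chi = -\mathrm{tr}\chi$.

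The main obstacle will be exactly this reshuffling between the genuine torsion $\zeta$ and the $\ns\log\Omega$-type piece hidden inside $\ns\otx$: producing the asymmetric output $-\tfrac12(\mathrm{tr}\chi)\eblin$ (rather than a symmetric combination of $\elin$ and $\eblin$) hinges on fixing the sign conventions in the nonlinear Codazzi equation and in the definition of $\zeta$ consistently with Definition~\ref{connectioncoeff}. A useful consistency check is to verify that no linearised metric quantities ($\slashed h$, $\Olino$, $\bmlin$) are generated by the variation of $\divs$ or of the trace operation; this is guaranteed here only because the backgrounds of $\hat\chi$ and $\mathrm{tr}\chi$ are respectively zero and spherically symmetric, and it would fail on a less symmetric background such as Kerr.
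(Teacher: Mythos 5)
Your proposal is correct: linearising the nonlinear Codazzi equation about Schwarzschild (where $\hat\chi$, $\zeta$, $\beta$ all vanish), splitting $\tfrac{1}{2\Omega}\ns\otx$ into $\tfrac12\ns(\mathrm{tr}\chi)^{(1)}+\tfrac{\mathrm{tr}\chi}{4}(\elin+\eblin)$ via $2\ns(\Olin)=\elin+\eblin$, and combining with $\zeta^{(1)}=\tfrac12(\elin-\eblin)$ does reproduce the asymmetric terms $-\tfrac12\mathrm{tr}\chi\,\eblin$ and $+\tfrac12\mathrm{tr}\chi\,\elin$ exactly as stated. This is essentially the same route the paper intends — it states the proposition without proof, deferring to the linearisation of the null structure equations carried out in the cited references — so there is nothing to add beyond your (correctly flagged) need to pin down the sign conventions in the nonlinear Codazzi equation against Definition~\ref{connectioncoeff}.
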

\begin{prop}\label{LinBianchi}
Suppose $h$ in double null gauge satisfies the linearised vacuum Einstein equation~(\ref{LVEE}). Then, the linearised null Weyl curvature components satisfy the following (null-decomposed) linearised Bianchi identities on $(\mathrm{Schw}_4,g_s)$:
\begingroup
\allowdisplaybreaks
\begin{equation*}
\begin{aligned}[c]
\ns_4\rlin &=-\frac{3}{2}\Big(\rlin \mathrm{tr}{\chi}+\frac{1}{\Omega}\rho\otx \Big)+\divs \blin ,\\
\ns_4\slin &=-\slashed{\mathrm{curl}}\blin -\frac{3}{2}(\mathrm{tr}\chi)\slin ,\\
\ns_4\blin &=\hat{\omega}\blin +\divs \alin -2(\mathrm{tr}\chi)\blin ,\\
\ns_4\bblin &=-3\rho\eblin +\slashed{\mathcal{D}}_1^{\star}(\rlin ,\slin )-\big(\mathrm{tr}\chi+\hat{\omega}\big)\bblin ,\\
\ns_4\ablin &=-\Big(2\hat{{\omega}}+\frac{1}{2}\mathrm{tr}\chi\Big)\ablin +2\slashed{\mathcal{D}}^{\star}_{2}\bblin -3\rho\xblin .
\end{aligned}
\qquad
\begin{aligned}[c]
\ns_3\rlin &=\frac{3}{2}\Big(\rlin \mathrm{tr}{\chi}-\frac{1}{\Omega}\rho\otxb \Big)-\divs \bblin ,\\
\ns_3\slin &=-\slashed{\mathrm{curl}}\bblin +\frac{3}{2}(\mathrm{tr}{\chi})\slin \\
\ns_3\blin &=3\rho\elin +\slashed{\mathcal{D}}_1^{\star}(-\rlin ,\slin )+\big(\mathrm{tr}{\chi}+\hat{{\omega}}\big)\blin ,\\
\ns_3\bblin &=\hat{\underline{\omega}}\bblin -\divs \ablin -2\mathrm{tr}\underline{\chi}\bblin ,\\
\ns_3\alin &=\Big(2\hat{{\omega}}+\frac{1}{2}(\mathrm{tr}{\chi})\Big)\alin -2\slashed{\mathcal{D}}^{\star}_{2}\blin -3\rho\xlin , \\
\end{aligned}
\end{equation*}
\endgroup
\end{prop}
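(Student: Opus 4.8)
The plan is to obtain the listed identities by \emph{formal linearisation} of the nonlinear null Bianchi equations for the Weyl curvature in double null gauge. Concretely, I would regard $h$ as the first variation of a one-parameter family $g(\epsilon)$ of double-null metrics (as in Section~\ref{LinearDoubleNullGauge}) and apply $\frac{d}{d\epsilon}\big|_{\epsilon=0}$ to each nonlinear null-decomposed Bianchi identity; these identities are standard (see~\cite{FormBH,DHR,Aretakis}) and are themselves consequences of the second Bianchi identity $\nabla_{[a}R_{bc]de}=0$. The essential structural input is the explicit Schwarzschild background computed in Section~\ref{DNS}: of the null curvature components $(\alpha,\underline\alpha,\beta,\underline\beta,\rho,\sigma)$ only $\rho=-2M/r^3$ is nonzero, and of the connection coefficients only $(\Omega\mathrm{tr}\chi)$, $(\Omega\mathrm{tr}\underline\chi)$, $\omega$, $\underline\omega$ are nonzero (the background shear, torsion, and $\zeta$ all vanish).

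First I would record the decisive consequence that every term in the nonlinear equations which is \emph{quadratic} in the small quantities (curvature$\times$connection, curvature$\times$curvature, and so on) has at least one factor whose background value vanishes, so it does not contribute at order $\epsilon$. Hence the only surviving contributions to each linearised equation are of four types: (i) the background operator $\ns_3$ or $\ns_4$ acting on a linearised curvature component; (ii) a background connection coefficient (one of $\mathrm{tr}\chi$, $\mathrm{tr}\underline\chi$, $\hat\omega$) times a linearised curvature component; (iii) the background $\rho$ multiplied by a \emph{linearised connection coefficient} --- this is where $\eblin$, $\xlin$, $\elin$, $\xblin$ enter, as in the $-3\rho\eblin$ and $-3\rho\xblin$ terms; and (iv) the linearisation of the operator $\ns_3$ or $\ns_4$ itself, applied to the nonvanishing background $\rho$. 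Because $\rho$ is spherically symmetric, the shift perturbation $\bmlin$ drops out of (iv), leaving only the lapse perturbation $\Olino$; using Proposition~\ref{ProDTF2Sch} and the background values, this produces precisely the $\tfrac{1}{\Omega}\rho\,\otx$ and $\tfrac{1}{\Omega}\rho\,\otxb$ terms together with the $2(\Olino/\Omega)\rho$ contributions.

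To carry this out I would linearise the spin-weighted derivative formulas of Proposition~\ref{ProDTF2Sch} for the relevant tensor ranks (rank $0$ for $\rlin,\slin$, rank $1$ for $\blin,\bblin$, rank $2$ for $\alin,\ablin$), retaining the corrections coming from the perturbed expansion $\accentset{(1)}{\mathrm{tr}\chi}$ and from $\delta(\Omega^{-1})=-\Omega^{-2}\Olino$. I would then convert between $\accentset{(1)}{\mathrm{tr}\chi}$ and the normalised variable $\otx=\accentset{(1)}{(\Omega\mathrm{tr}\chi)}$ (and analogously for $\underline\chi$, $\omega$, $\underline\omega$) via $\accentset{(1)}{(\Omega\mathrm{tr}\chi)}=\Olino\,\mathrm{tr}\chi+\Omega\,\accentset{(1)}{\mathrm{tr}\chi}$, and substitute the Schwarzschild values $\mathrm{tr}\chi=2\Omega/r$, $\hat\omega=M/(r^2\Omega)$, $\rho=-2M/r^3$. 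Whenever the operator linearisation in (iv) generates a $\ns_4\Olino$ or $\ns_3\Olino$ contribution, I would eliminate it using the linearised first variation formulas of Proposition~\ref{LinMet} and Proposition~\ref{Linomega}, so that the final right-hand sides contain only the quantities displayed. The $\ns_3$ column is then obtained either by running the same computation on the $\ns_3$ Bianchi equations or, as a cross-check, from the $\ns_4$ column under the barred$\leftrightarrow$unbarred discrete symmetry of the background.

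I expect the main obstacle to be step (iv): the careful first variation of the projected null derivatives $\ns_3,\ns_4$ acting on $\rho$. These operators depend on $\Omega$, on the shift $b$, and on $\slashed g$, so their linearisation is nontrivial, and it is exactly this term --- rather than any quadratic term --- that is responsible for all couplings of linearised curvature to the linearised metric and lapse in the stated identities; the commutation relations of Lemma~\ref{comlem} must also be tracked to keep the ordering of derivatives consistent. Once this term is correctly isolated and rewritten through Propositions~\ref{LinMet} and~\ref{Linomega}, the remaining work is routine substitution of the background values from Proposition~\ref{conncoeffSchw} and algebraic simplification.
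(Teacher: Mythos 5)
Your proposal is correct and matches the derivation the paper relies on: the paper itself states these identities without proof, deferring to~\cite{DHR} and~\cite{MyThesis}, where they are obtained exactly as you describe --- by formally linearising the nonlinear null-decomposed Bianchi identities about the Schwarzschild background, discarding terms quadratic in background-vanishing quantities, and accounting for the variation of $\ns_3,\ns_4$ (and of $\Omega^{-1}$) acting on the nonvanishing background $\rho$, which is precisely what converts $\rho\,\accentset{\scalebox{.6}{\mbox{\tiny (1)}}}{\mathrm{tr}\chi}$ into the displayed $\tfrac{1}{\Omega}\rho\,\otx$ terms. Your identification of the surviving contribution types, including the $-3\rho\eblin$ and $-3\rho\xblin$ terms arising from background curvature times linearised connection coefficients, is the same bookkeeping used in the cited sources.
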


\section{Canonical Energy in Double Null Gauge}\label{Results}
In this section, the proofs of theorems~\ref{thm:CanEntoGusCon},~\ref{thm:CEntoGConb} and~\ref{TH5} are given. In section~\ref{setup} the computation of the canonical energy in double null gauge is setup and summaried. Section~\ref{PreComp} collects some preliminary computations which will be useful in the proof of the theorems~\ref{thm:CanEntoGusCon}-\ref{TH5}. The intensive parts of the computations for the canonical energy in double null gauge are then given in sections~\ref{TMR},~\ref{betaconsproof} and~\ref{alphacons} as the proofs of theorems~\ref{thm:CanEntoGusCon},~\ref{thm:CEntoGConb} and~\ref{TH5}.
\subsection{The Setup and Summary}\label{setup}
We evaluate the $T$-canonical energy conservation law (see section~\ref{CanLVEE}) for a smooth solution~$h$ of the linearised vacuum Einstein equation~(\ref{LVEE}) on the characteristic rectangle depicted in figure~\ref{fig1}. This yields
\begin{align}
\mathcal{E}_{u_0}^T[h](v_0,v_1)+\mathcal{E}^T_{v_0}[h](u_0,u_1)=\mathcal{E}^T_{u_1}[h](v_0,v_1)+\mathcal{E}^T_{v_1}[h](u_0,u_1),\label{CanConSch}
\end{align}
where 
\begin{align}
\mathcal{E}_{u}^T[h](v_0,v_1)&\doteq \mathcal{E}^T_{C_u\cap\{ v_0\leq v\leq v_1\}}[h]=2\int_{v_0}^{v_1}\int_{\mathbb{S}^2_{u,v}}du(\mathcal{J}^T[h])\Omega^2dv\slashed{\varepsilon}=2\int_{v_0}^{v_1}\int_{\mathbb{S}^2_{u,v}}(\mathcal{J}^T[h])^3\Omega dv\slashed{\varepsilon},\label{CanonF1a}\\
\mathcal{E}^T_{v}[h](u_0,u_1)&\doteq \mathcal{E}^T_{\underline{C}_v\cap\{ u_0\leq u\leq u_1\}}[h]=2\int_{u_0}^{u_1}\int_{\mathbb{S}^2_{u,v}}dv(\mathcal{J}^T[h])\Omega^2du\slashed{\varepsilon}=2\int_{u_0}^{u_1}\int_{\mathbb{S}^2_{u,v}}(\mathcal{J}^T[h])^4\Omega du\slashed{\varepsilon},\label{CanonF1b}
\end{align}
and $\mathcal{J}^T[h]$ is the vector defined in equation~\eqref{XCEC}. 

Recall that the Schwarzschild black hole spacetime $(\mathrm{Schw}_4,g_s)$ has three additional Killing fields associated to the spherical symmetry of the spacetime. Let $\Omega_k$ be the Killing fields on the sphere~$\mathbb{S}^2$, i.e.
\begin{align}\label{SO3}
\Omega_1&=\partial_{\phi},\qquad
\Omega_2=\sin\phi\partial_{\theta}+\cot\theta\cos\phi\partial_{\phi},\qquad
\Omega_3=\cos\phi\partial_{\theta}-\cot\theta\sin\phi\partial_{\phi}.
\end{align}
As discussed in section~\ref{sec:HOCE}, one has a canonical energy conservation law for $\mathcal{L}_{{\Omega}_k}h$ for each $k=1,2,3$. In fact, the more appropriate conservation law is the sum of $\mathcal{E}^T[\mathcal{L}_{\Omega_k}h]$. Denote
\begin{align*}
    \slashed{\mathcal{E}}^T_u[h](v_0,v_1)\doteq \sum_{i=1}^3\mathcal{E}_{u}[\mathcal{L}_{\Omega_i}h](v_0,v_1),\qquad \slashed{\mathcal{E}}^T_u[h](v_0,v_1)\doteq \sum_{i=1}^3\mathcal{E}_{v}[\mathcal{L}_{\Omega_i}h](u_0,u_1).
\end{align*}
Then $\slashed{\mathcal{E}}^T$ satisfies
\begin{align}
\slashed{\mathcal{E}}^T_{u_0}[h](v_0,v_1)+\slashed{\mathcal{E}}^T_{v_0}[h](u_0,u_1)=\slashed{\mathcal{E}}^T_{u_1}[h](v_0,v_1)+\slashed{\mathcal{E}}^T_{v_1}[h](u_0,u_1).\label{CanConSch2}
\end{align}
One can write the terms in this conservation law~(\ref{CanConSch}) explicitly as
\begin{align*}
\slashed{\mathcal{E}}^T_{u}[h](v_0,v_1)&=2\sum_{i=1}^3\int_{v_0}^{v_1}\int_{\mathbb{S}^2}(\mathcal{J}[\mathcal{L}_{\Omega_i}h])^3\Omega dv\slashed{\varepsilon},\qquad
\slashed{\mathcal{E}}^T_{v}[h](u_0,u_1)=2\sum_{i=1}^3\int_{u_0}^{u_1}\int_{\mathbb{S}^2}(\mathcal{J}[\mathcal{L}_{\Omega_i}h])^4\Omega du\slashed{\varepsilon}.
\end{align*}
In the following sections the currents $\mathcal{J}^T[h]$, $\sum_k\mathcal{J}^T[\mathcal{L}_{{\Omega}_k}h]$ and $\mathcal{J}^T[\mathcal{L}_Th]$ are computed in explicitly. The reader should note that the proof of the statements in theorems~\ref{thm:CanEntoGusCon}-\ref{TH5} are extremely computationally involved. 

It is instructive for the reader to revisit the computation of the sympletic current~\eqref{eq:sympf} for the wave equation in the section~\ref{sec:IntroCL}. This computation illustrates the proofs of theorems~\ref{thm:CanEntoGusCon}-\ref{TH5} nicely. The key ideas are as follows:
\begin{enumerate}
    \item[(i)] The wave equation~(\ref{eq:WE}), in conjunction with integration by parts on $\mathbb{S}^2_{u,v}$ can be used to simplify the fluxes. In this case the wave equation can be used to remove $\partial_u\partial_v\Psi$ in exchange for first order derivative terms and an angular Laplacian of the solution, which, again, can be integrated by parts. 
    \item[(ii)] If one adds $\frac{1}{\Omega^2r^2}\partial_v(r^2\mathfrak{F})$ to $(\mathcal{J}^T)^u$ and subtracts $\frac{1}{\Omega^2r^2}\partial_u(r^2\mathfrak{F})$ from $(\mathcal{J}^T)^v$ for some $\mathfrak{F}$ then one maintains a conservation law on hypersurfaces since the terms on the spheres at the corners of the characteristic rectangle cancel. 
    \item[(iii)] There are second order derivatives of $\Psi$ that cannot be exchanged for first order derivative terms via the wave equation (as in point (i)). For example $\Psi\partial_v^2\Psi$ appears in $\mathcal{J}^T$. One can use point (ii) to remove these terms. This is precisely what allows one to identify $\mathfrak{F}=\mathcal{A}$. 
\end{enumerate}

With this discussion of the wave equation in hand, some intuition for why the main result in theorem~\ref{thm:CanEntoGusCon} is true can be given. First one should note that the Schwarzschild spacetime only has a limited number of symmetries so there cannot be arbitrarily many \textit{independent} conservation laws. This means that \textit{a priori} there is likely some relation between the canonical energy conservation law and Holzegel's conservation law. Further, observe that the linearised null structure equations of section~\ref{sec:LNSSchw} have the form
\begin{align*}
\nabla h&=\Gamlin ,\qquad
\nabla\Gamlin =\Gamma\Gamlin +\Wlin,
\end{align*}
where $\Gamma$ is the background Ricci coefficients and $\Gamlin $ is the linear perturbations to the Ricci coefficients and $\Wlin$ denotes the linearised Weyl curvature.
Therefore, the flux densities $\mathcal{J}^T[h]$ involved in the canonical energy of $h$ are of the schematic form
\begin{align}
\mathcal{J}^T[h]=\mathcal{L}_Th\cdot\nabla h-h\cdot \nabla\mathcal{L}_Th=\Gamlin \cdot\Gamlin +\Gamma h\cdot \Gamlin +h\cdot \Wlin.\label{FDE}
\end{align} 
It turns out that, in analogy with the computation for the symplectic current~\eqref{eq:sympf} for the wave equation in the section~\ref{sec:IntroCL}, by using~\underline{only} the linearised null structure equations~(\ref{sec:LNSSchw}), this last term involving curvature in equation~(\ref{FDE}) can be replaced (exactly like $\Psi\partial_u\partial_v\Psi$, $\Psi\partial_v^2\Psi$ and $\Psi\partial_u^2\Psi$ for the wave equation) by the boundary term $\pm\mathcal{A}$ (defined in theorem~\ref{thm:CanEntoGusCon}) on the spheres $\mathbb{S}^2_{u_0,v_0}$, $\mathbb{S}^2_{u_1,v_0}$, $\mathbb{S}^2_{u_0,v_1}$ and $\mathbb{S}^2_{u_1,v_1}$. 

The intuition behind theorem~\ref{thm:CEntoGConb} is the following. If $h$ in double null gauge solves the linearised vacuum Einstein equation~(\ref{LVEE}) then so does $\mathcal{L}_{\Omega_k}h$. So if one writes the conservation law~\eqref{MODCE2} in terms of $h$ then one can replace it everywhere with $\mathcal{L}_{\Omega_k}h$. Due to the identity $[T,\Omega_k]=0$ for all $k=1,2,3$, this operation commutes through each term in equations~(\ref{F1}) and~(\ref{F2}). Therefore, one can replace each linearised Ricci coefficient $\Gamlin $ with $\mathcal{L}_{\Omega_k}\Gamlin $. Roughly speaking, $\sum_k\slashed{\mathcal{L}}_{\Omega_k}\xlin $ is similar to the divergence operator $\divs $ on $\mathbb{S}^2_{u,v}$ acting on the linearised shear $\xlin $. From linearised Codazzi equations in proposition~\ref{LinCod} one can see that
\begin{align*}
\divs \xlin &=-\blin +\ldots,\qquad
\divs \xblin =\bblin +\ldots.
\end{align*}
Using the linearised null structure equations of propositions~\ref{LinMet}-\ref{LinCod} in section~\ref{sec:LNSSchw}, the linearised Bianchi equations of proposition~\ref{LinBianchi} and integration by parts one can then establish theorem~\ref{thm:CEntoGConb}.

Finally, the intuition behind theorem~\ref{TH5} is the following. Following the same reasoning as discussed above for theorem~\ref{thm:CEntoGConb} one can replace each metric coefficient $h$ and each linearised Ricci coefficient $\Gamlin $ in equations~(\ref{F1}) and~(\ref{F2}) with $\mathcal{L}_Th$ and $\mathcal{L}_{T}\Gamlin $, respectively. From linearised shear equations in proposition~\ref{LinShear} one can see that
\begin{align*}
\slashed{\mathcal{L}}_T\xlin &=\ns_3\xlin +\ns_4\xlin =-\alin +\ldots,\qquad
\slashed{\mathcal{L}}_T\xblin =\ns_3\xblin +\ns_4\xblin =-\ablin +\ldots.
\end{align*}
Using the linearised null structure equations of propositions~\ref{LinMet}-\ref{LinCod} in section~\ref{sec:LNSSchw}, the linearised Bianchi equations of proposition~\ref{LinBianchi} and integration by parts one can then establish theorem~\ref{TH5}.

\subsection{Preliminary Computations}\label{PreComp}
One should note that from the definition of double null Eddington--Finkelstein coordinates in section~\ref{DNS} one has the following relations
\begin{equation}\label{backgroundcomp}
    \begin{aligned}[c]
    \frac{1}{r^2}\ns_4r^2&=(\mathrm{tr}\chi),\\
        \ns_4\Omega&=\omega,\\
        \ns_4(\Omega\mathrm{tr}\chi)&=2\omega\mathrm{tr}\chi-\frac{\Omega}{2}(\mathrm{tr}\chi)^2,\\
        \ns_4\rho&=-\frac{3}{2}\rho\mathrm{tr}\chi,\\
        \ns_4\omega&=\Omega\rho,
    \end{aligned}
    \qquad
     \begin{aligned}[c]
     \frac{1}{r^2}\ns_3r^2&=-(\mathrm{tr}\chi),\\
        \ns_3\Omega&=-\omega,\\
        \ns_3(\Omega\mathrm{tr}\chi)&=\frac{\Omega}{2}(\mathrm{tr}\chi)^2-2\omega\mathrm{tr}\chi,\\
        \ns_3\rho&=\frac{3}{2}\rho\mathrm{tr}\chi,\\
        \ns_3\omega&=-\Omega\rho
    \end{aligned}
\end{equation}
and
\begin{align}\label{rhotoomega}
    \rho=-\hat{\omega}\mathrm{tr}\chi.
\end{align}
The relations~\eqref{backgroundcomp} and\eqref{rhotoomega} will be used liberally throughout the rest of this section and the next. 

For the canonical energy calculation one needs to compute $\nabla_{\alpha}h_{\beta\gamma}$ and $\nabla_{\alpha}(\mathcal{L}_Th)_{\beta\gamma}$ in the double null basis. The following lemma is useful for this. 
\begin{lemma}\label{lemmacompsch}
Let $\mathrm{S}$ be a symmetric $2$-tensor on the Schwarzschild black hole exterior $(\mathrm{Schw}_4,g_s)$ with
\begin{align*}
\mathrm{S}_{44}=\mathrm{S}_{33}=\mathrm{S}_{3A}=0,
\end{align*} 
in the normalised null basis $(e_3,e_4,\partial_{A})$ associated to the double null Eddington--Finkelstein coordinates. Further, denote $\mathrm{v}^{\mathrm{S}}_A\doteq \mathrm{S}_{4A}$ and $\slashed{\mathrm{S}}_{AB}\doteq\mathrm{S}_{AB}$ which are considered as the components of a $\mathbb{S}^2_{u,v}$-covector and symmetric $\mathbb{S}^2_{u,v}$ $2$-tensor. Then the \underline{non-zero} components of $(\nabla_{\alpha}\mathrm{S})_{\beta\gamma}$ have the following decomposition:
\begin{equation}
    \begin{aligned}[c]
        (\nabla_{3}\mathrm{S})_{43}&=e_3(\mathrm{S}_{43}),\\
        (\nabla_{4}\mathrm{S})_{A4}&=\ns_4\mathrm{v}^{\mathrm{S}}_A-\hat{\omega}\mathrm{v}^{\mathrm{S}}_A,\\
        (\nabla_A\mathrm{S})_{44}&=-(\mathrm{tr}\chi)\mathrm{v}^{\mathrm{S}}_A,\\
        (\nabla_4\mathrm{S})_{AB}&=(\ns_4\slashed{\mathrm{S}})_{AB},\\
        (\nabla_A\mathrm{S})_{3B}&=\frac{1}{2}(\mathrm{tr}\chi)\Big(\mathrm{S}_{AB}+\frac{1}{2}\mathrm{S}_{34}\slashed{g}_{AB}\Big),
    \end{aligned}
    \qquad
    \begin{aligned}[c]
        (\nabla_{4} \mathrm{S})_{43}&=e_4(\mathrm{S}_{43}),\\
        (\nabla_{3}\mathrm{S})_{A4}&=\ns_3\mathrm{v}^{\mathrm{S}}_A-\hat{\omega}\mathrm{v}^{\mathrm{S}}_A,\\
        (\nabla_A\mathrm{S})_{34}&=\partial_A(\mathrm{S}_{34})+\frac{1}{2}(\mathrm{tr}\chi)\mathrm{v}^{\mathrm{S}}_A,\\
        (\nabla_3\mathrm{S})_{AB}&=(\ns_3\slashed{\mathrm{S}})_{AB},\\
        (\nabla_A\mathrm{S})_{4B}&=\ns_A\mathrm{v}^{\mathrm{S}}_B-\frac{1}{2}(\mathrm{tr}\chi)\Big(\mathrm{S}_{AB}+\frac{1}{2}\mathrm{S}_{34}{\slashed{g}}_{AB}\Big)
    \end{aligned}
\end{equation}
and
\begin{align*}
(\nabla_A\mathrm{S})_{BC}&=(\ns_A\slashed{\mathrm{S}})_{BC}+\frac{1}{4}(\mathrm{tr}{\chi}){\slashed{g}}_{AC}\mathrm{v}^{\mathrm{S}}_{B}+\frac{1}{4}(\mathrm{tr}{\chi}){\slashed{g}}_{AB}\mathrm{v}^{\mathrm{S}}_C.
\end{align*}
\end{lemma}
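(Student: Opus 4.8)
The plan is to compute every frame component $(\nabla_\alpha \mathrm{S})_{\beta\gamma}$ directly from the Leibniz rule for the Levi-Civita connection acting on a $(0,2)$-tensor,
\[
(\nabla_{e_\alpha}\mathrm{S})(e_\beta,e_\gamma)=e_\alpha\big(\mathrm{S}(e_\beta,e_\gamma)\big)-\mathrm{S}(\nabla_{e_\alpha}e_\beta,e_\gamma)-\mathrm{S}(e_\beta,\nabla_{e_\alpha}e_\gamma),
\]
feeding in the explicit frame derivatives $\nabla_{e_\alpha}e_\beta$ recorded in proposition~\ref{conncoeffSchw} and then repeatedly invoking the hypotheses $\mathrm{S}_{44}=\mathrm{S}_{33}=\mathrm{S}_{3A}=0$ to kill terms. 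Since $\mathrm{S}$ is symmetric, its only surviving entries are $\mathrm{S}_{34}$, $\mathrm{v}^{\mathrm{S}}_A=\mathrm{S}_{4A}$ and $\slashed{\mathrm{S}}_{AB}=\mathrm{S}_{AB}$, so after substitution each of the component equations collapses to at most a handful of terms.

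First I would organise the computation by the frame type of the differentiating index $\alpha\in\{3,4,A\}$. For the transversal derivatives $\alpha=3,4$ the frame rules $\nabla_3 e_A=-\tfrac{\mathrm{tr}\chi}{2}e_A$, $\nabla_4 e_A=\tfrac{\mathrm{tr}\chi}{2}e_A$, $\nabla_3 e_4=\hat\omega e_4$, $\nabla_4 e_3=-\hat\omega e_3$ (together with $\nabla_3 e_3=-\hat\omega e_3$, $\nabla_4 e_4=\hat\omega e_4$) produce only diagonal rescalings, so the Leibniz terms combine cleanly. For the angular derivative $\alpha=A$ one must additionally use $\nabla_A e_B=\slashed{\Gamma}^C_{AB}e_C+\tfrac14\mathrm{tr}\chi(e_3-e_4)\slashed{g}_{AB}$, whose transversal part is exactly what generates the $\mathrm{v}^{\mathrm{S}}$-dependent correction terms appearing in $(\nabla_A\mathrm{S})_{BC}$, $(\nabla_A\mathrm{S})_{34}$ and $(\nabla_A\mathrm{S})_{4B}$.

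The one genuinely non-mechanical step is recognising that the bare frame-derivative combinations which emerge are precisely the \emph{projected} derivatives $\ns_3,\ns_4,\ns_A$ rather than plain frame derivatives. For instance, in $(\nabla_3\mathrm{S})_{A4}$ the Leibniz rule yields $e_3(\mathrm{v}^{\mathrm{S}}_A)+\tfrac{\mathrm{tr}\chi}{2}\mathrm{v}^{\mathrm{S}}_A-\hat\omega\,\mathrm{v}^{\mathrm{S}}_A$, and the first two terms reassemble into $(\ns_3\mathrm{v}^{\mathrm{S}})_A$ exactly because the weight $\tfrac{p}{2}\mathrm{tr}\chi$ with $p=1$ in proposition~\ref{ProDTF2Sch} matches the connection coefficient. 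The same bookkeeping with $p=2$ identifies $(\ns_3\slashed{\mathrm{S}})_{AB}$ and $(\ns_4\slashed{\mathrm{S}})_{AB}$, while in the fully angular case the three-term combination $e_A(\slashed{\mathrm{S}}_{BC})-\slashed{\Gamma}^D_{AB}\slashed{\mathrm{S}}_{DC}-\slashed{\Gamma}^D_{AC}\slashed{\mathrm{S}}_{BD}$ is recognised as $(\ns_A\slashed{\mathrm{S}})_{BC}$ by the same proposition.

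I expect the main obstacle to be completeness and sign discipline rather than conceptual difficulty: one must verify that every component not listed (such as $(\nabla_3\mathrm{S})_{44}$, $(\nabla_3\mathrm{S})_{3A}$, and so on) genuinely vanishes, which amounts to checking that the vanishing hypotheses on $\mathrm{S}$ are preserved under each connection contraction. Each such verification is a single line, but there are many of them, so the care lies in tabulating all frame indices systematically and tracking the $\pm\hat\omega$ and $\pm\tfrac{\mathrm{tr}\chi}{2}$ signs coming from proposition~\ref{conncoeffSchw} without error.
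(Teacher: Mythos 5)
Your proposal is correct and follows exactly the paper's own (very brief) proof: apply the Leibniz formula $(\nabla_{\mu}\mathrm{S})_{\alpha\beta}=e_{\mu}(\mathrm{S}_{\alpha\beta})-\mathrm{S}(\nabla_{\mu}e_{\alpha},e_{\beta})-\mathrm{S}(e_{\alpha},\nabla_{\mu}e_{\beta})$ together with the frame connection relations of proposition~\ref{conncoeffSchw}, use the vanishing components to discard terms, and recognise the surviving combinations as the projected derivatives of proposition~\ref{ProDTF2Sch}. The only difference is that you spell out the bookkeeping the paper leaves implicit.
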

\begin{proof}
One can calculate the above results using the formula
\begin{align*}
(\nabla_{\mu} \mathrm{S})_{\alpha\beta}=e_{\mu}( \mathrm{S}_{\alpha\beta})-\mathrm{S}(\nabla_{\mu}e_{\alpha},e_{\beta})-\mathrm{S}(e_{\alpha},\nabla_{\mu}e_{\beta}),
\end{align*}
in conjunction with the proposition~\ref{conncoeffSchw}.
\end{proof}
It will turn out that for calculating the canonical energy in double null gauge only the following non-zero components of $\nabla_{\alpha}h_{\beta\gamma}$ will be required:
\begin{prop}\label{hcalc4D}
Let $h$ be a smooth solution to the linearised vacuum Einstein equation~(\ref{LVEE}) in double null gauge on the $4$-dimensional Schwarzschild exterior. Then, in the normalised null basis $(e_3,e_4,\partial_{A})$ associated to the double null Eddington--Finkelstein coordinates, one has
\begin{equation}
    \begin{aligned}[c]
        (\nabla_{3} h)_{43}&=-\frac{4}{\Omega}\underline{\olin},\\
        (\nabla_{4} h)_{A4}&=-\frac{1}{\Omega}(\ns_4\bmlin)_A+2\hat{\omega}\frac{\bmlin_A}{\Omega},\\
        \widehat{(\nabla_4h)}_{AB}&=2\xlin _{AB}+\frac{2}{\Omega}(\Dts \bmlin)_{AB},
    \end{aligned}
    \qquad
    \begin{aligned}[c]
        (\nabla_{4} h)_{43}&=-\frac{4}{\Omega}{\olin},\\
        (\nabla_{3} h)_{A4}&=\frac{\bmlin_A}{2\Omega}(\mathrm{tr}\chi)-2(\elin -\eblin )_A,\\
        \widehat{(\nabla_3h)}_{AB}&=2\xblin _{AB}.
    \end{aligned}
\end{equation}
\end{prop}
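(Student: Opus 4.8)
The plan is to reduce everything to a direct application of Lemma~\ref{lemmacompsch}, whose structural hypotheses $h$ satisfies once one reads off its null-frame components from the double null gauge form. Indeed, Definition~\ref{DNGDefinition} gives $h_{34}=-4\Olino/\Omega$, $h_{4A}=-\bmlin_A/\Omega$ and $h_{AB}=\slashed{h}_{AB}$, with $h_{44}=h_{33}=h_{3A}=0$, so Lemma~\ref{lemmacompsch} applies with $\mathrm{S}_{34}=-4\Olino/\Omega$, $\mathrm{v}^{\mathrm{S}}_A=-\bmlin_A/\Omega$ and $\slashed{\mathrm{S}}_{AB}=\slashed{h}_{AB}$. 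Each of the six claimed components is then the corresponding line of that lemma, and the remaining work is to convert the resulting projected derivatives of the metric coefficients into linearised connection coefficients using Proposition~\ref{LinMet} and the background relations~\eqref{backgroundcomp}.

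For the two ``$43$'' components I would read off $(\nabla_3 h)_{43}=e_3(h_{43})=-4\,e_3(\Olin)$ and $(\nabla_4 h)_{43}=-4\,e_4(\Olin)$ from the lemma, then substitute the first-variation equations $e_3(\Olin)=\Omega^{-1}\olinb$ and $e_4(\Olin)=\Omega^{-1}\olin$ of Proposition~\ref{LinMet}. For the ``$A4$'' components I would apply the Leibniz rule to $\mathrm{v}^{\mathrm{S}}_A=-\bmlin_A/\Omega$ and use $\ns_4\Omega=\omega=\Omega\hat{\omega}$, $\ns_3\Omega=-\omega$ from~\eqref{backgroundcomp}; the term $-\hat{\omega}\,\mathrm{v}^{\mathrm{S}}_A$ in the lemma then combines with the derivative of $\Omega^{-1}$ to produce $-\tfrac{1}{\Omega}\ns_4\bmlin_A+2\hat{\omega}\,\bmlin_A/\Omega$ in the $e_4$-direction, which is already the stated form.

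The delicate step is the $e_3$-direction ``$A4$'' component, which first reduces to $-\tfrac{1}{\Omega}\ns_3\bmlin_A$. The care point is that Proposition~\ref{LinMet} evolves the \emph{vector} $\bmlin^A$ through $\partial_u\bmlin^A=2\Omega^2(\elin-\eblin)^A$, whereas $h_{4A}$ involves the \emph{covector} $\bmlin_A$. I would lower the index with $\slashed{g}_{AB}=r^2\gamma_{AB}$, picking up $\partial_u r^2=-\Omega r^2\mathrm{tr}\chi$ from~\eqref{backgroundcomp}, to obtain $\partial_u\bmlin_A=-\Omega\mathrm{tr}\chi\,\bmlin_A+2\Omega^2(\elin-\eblin)_A$; feeding this into the $p=1$ case of Proposition~\ref{ProDTF2Sch} yields $\ns_3\bmlin_A=-\tfrac{1}{2}\mathrm{tr}\chi\,\bmlin_A+2\Omega(\elin-\eblin)_A$ and hence the claimed $\tfrac{\bmlin_A}{2\Omega}\mathrm{tr}\chi-2(\elin-\eblin)_A$. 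The angular components are the cleanest: Lemma~\ref{lemmacompsch} gives $(\nabla_4 h)_{AB}=(\ns_4\slashed{h})_{AB}$ and $(\nabla_3 h)_{AB}=(\ns_3\slashed{h})_{AB}$, and since the projected connection is metric compatible ($\ns_3\slashed{g}=\ns_4\slashed{g}=0$, which one checks from Proposition~\ref{ProDTF2Sch} and~\eqref{backgroundcomp}) the hat operation commutes with $\ns_3,\ns_4$; thus the symmetric-traceless parts equal $\ns_4\hat{\slashed{h}}_{AB}$ and $\ns_3\hat{\slashed{h}}_{AB}$, which Proposition~\ref{LinMet} identifies with $2\xlin_{AB}+\tfrac{2}{\Omega}(\Dts\bmlin)_{AB}$ and $2\xblin_{AB}$.

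I expect the main obstacle to be precisely the bookkeeping in the $e_3$-direction ``$A4$'' component---keeping the vector/covector distinction for $\bmlin$ straight and correctly collecting the $\mathrm{tr}\chi$ contributions from both the index-lowering and the $\ns_3$ formula---while the remaining five components are direct substitutions into Lemma~\ref{lemmacompsch}.
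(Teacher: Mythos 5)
Your proposal is correct and follows essentially the same route as the paper: the paper's proof likewise reads off the null-frame components of $h$ from the double null gauge form, feeds them into Lemma~\ref{lemmacompsch}, and converts the resulting projected derivatives via Proposition~\ref{LinMet} (noting the trace/trace-free split of $\ns_{3},\ns_4\slashed{h}$). Your treatment of the $e_3$-direction ``$A4$'' component --- lowering the index on $\bmlin$ with $\slashed{g}=r^2\gamma$ before applying Proposition~\ref{ProDTF2Sch} --- correctly supplies the bookkeeping the paper leaves implicit.
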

\begin{proof}
To prove this statement note that for $h$ in double null gauge
\begin{align}\label{h}
h_{44}=h_{33}=h_{3A}=0,\quad h_{34}=-4\Big(\frac{\Olino }{\Omega}\Big),\quad  \mathrm{v}^h_B=h_{4B}=-\frac{\bmlin_B}{\Omega},\quad h_{AB}=\slashed{h}_{AB}.
\end{align}
The results follow directly from lemma~\ref{lemmacompsch} and the linearised null structure equations (in particular, proposition~\ref{LinMet}). The reader should note the decomposition
\begin{align*}
(\ns_3\slashed{h})_{AB}&=\ns_3(\mathrm{tr}\slashed{h})\slashed{g}_{AB}+(\ns_3\hat{\slashed{h}})_{AB},\qquad
(\ns_4\slashed{h})_{AB}=\ns_4(\mathrm{tr}\slashed{h})\slashed{g}_{AB}+(\ns_4\hat{\slashed{h}})_{AB}.
\end{align*}
\end{proof}
The following computation gives the components of $\mathcal{L}_Th$ in the normalised null frame.
\begin{prop}\label{lieder}
Let $T=\partial_t$ be the Killing field associated to stationarity of $(\mathrm{Schw}_4,g_s)$. Further, let $h$ a smooth solution to the linearised vacuum Einstein equation~(\ref{LVEE}) in double null gauge on the $4$-dimensional Schwarzschild exterior. Then, in the basis $(e_3,e_4,\partial_{A})$ associated to the double null Eddington--Finkelstein coordinates, $\mathcal{L}_Th$ has the following components
\begin{align*}
(\mathcal{L}_Th)_{44}&=0,\quad(\mathcal{L}_Th)_{33}=0,\quad (\mathcal{L}_Th)_{A3}=0,
\quad (\mathcal{L}_Th)_{34}=-2(\olin+\olinb ),\\  (\mathcal{L}_Th)_{4A}&=\mathrm{v}^{\mathcal{L}_Th}_A=-\frac{1}{2}(\ns_4\bmlin)_A-\Omega(\elin -\eblin )_A+\frac{1}{4}(\mathrm{tr}\chi)\bmlin_A,\\
(\mathcal{L}_Th)_{AB}&=(\slashed{\mathcal{L}}_T\slashed{h})_{AB}=\frac{1}{2}(\mathcal{L}_T\mathrm{tr}\slashed{h})\slashed{g}_{AB}+\Omega\xlin _{AB}+\Omega\xblin _{AB}+(\Dts \bmlin)_{AB}.
\end{align*}
\end{prop}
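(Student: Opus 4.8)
The plan is to exploit the especially simple form of $T$ in double null Eddington--Finkelstein coordinates. First I would record that, since $t=u+v$ and $r_\star=v-u$, one has $T=\partial_t=\tfrac{1}{2}(\partial_u+\partial_v)=\tfrac{\Omega}{2}(e_3+e_4)$, using $b\equiv 0$ and the frame $e_3=\Omega^{-1}\partial_u$, $e_4=\Omega^{-1}\partial_v$ from section~\ref{DNS}. The crucial observation is that the commutators of $T$ with the frame vanish: since $T$ is a constant-coefficient combination of the coordinate fields $\partial_u,\partial_v$, and since $T(\Omega)=\tfrac{1}{2}(\partial_u+\partial_v)\Omega=0$ by the stationarity relations $\ns_3\Omega=-\omega$, $\ns_4\Omega=\omega$ in~\eqref{backgroundcomp}, a direct computation gives $[T,e_3]=[T,e_4]=[T,e_A]=0$. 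Consequently, for any two frame vectors the Lie derivative reduces to a plain directional derivative of the associated scalar component, $(\mathcal{L}_Th)(e_\alpha,e_\beta)=T\big(h(e_\alpha,e_\beta)\big)$.

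With this reduction in hand, the components $(\mathcal{L}_Th)_{44}$, $(\mathcal{L}_Th)_{33}$ and $(\mathcal{L}_Th)_{A3}$ are immediate, since the corresponding components of $h$ vanish identically by the double null gauge conditions in~\eqref{h}. For the $(3,4)$ component I would substitute $h_{34}=-4\,\Olin$ and compute $T(\Olin)=\tfrac{\Omega}{2}(e_3+e_4)(\Olin)=\tfrac{1}{2}(\olin+\olinb)$, using the linearised first variation formulas $e_3(\Olin)=\Omega^{-1}\olinb$ and $e_4(\Olin)=\Omega^{-1}\olin$ from proposition~\ref{LinMet}; this yields $(\mathcal{L}_Th)_{34}=-2(\olin+\olinb)$.

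The two remaining components require converting the directional derivative $T=\tfrac{\Omega}{2}(e_3+e_4)$ acting on a scalar frame component into the projected operators $\ns_3,\ns_4$. The key point is that for a $p$-covariant $\mathbb{S}^2_{u,v}$-tensor proposition~\ref{ProDTF2Sch} writes $\ns_3$ and $\ns_4$ as $\Omega^{-1}\partial_u$ and $\Omega^{-1}\partial_v$ plus opposite-sign multiples of $\tfrac{p}{2}(\Omega\mathrm{tr}\chi)$, so that in the symmetric combination $\tfrac{\Omega}{2}(\ns_3+\ns_4)$ these correction terms cancel and one recovers exactly $T$ acting on the component. Thus $(\mathcal{L}_Th)_{4A}=-\tfrac{1}{2}\big((\ns_3+\ns_4)\bmlin\big)_A$ and $(\mathcal{L}_Th)_{AB}=\tfrac{\Omega}{2}\big((\ns_3+\ns_4)\slashed{h}\big)_{AB}$. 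For the former I would feed in $\ns_3\bmlin$, obtained from $\partial_u\bmlin^A=2\Omega^2(\elin-\eblin)^A$ in proposition~\ref{LinMet} after lowering the index with $\slashed{g}$ (using $\ns_3 r^2=-(\mathrm{tr}\chi)r^2$ from~\eqref{backgroundcomp}), together with $\ns_4\bmlin$, to reach the stated $-\tfrac{1}{2}\ns_4\bmlin-\Omega(\elin-\eblin)+\tfrac{1}{4}(\mathrm{tr}\chi)\bmlin$. For the latter I would split $\slashed{h}$ into its trace and traceless parts: the trace part produces $\tfrac{1}{2}(\mathcal{L}_T\mathrm{tr}\slashed{h})\slashed{g}$, while proposition~\ref{LinMet} supplies $\ns_3\hat{\slashed{h}}=2\xblin$ and $\ns_4\hat{\slashed{h}}=2\xlin+\tfrac{2}{\Omega}\Dts\bmlin$, giving the remaining $\Omega\xlin+\Omega\xblin+\Dts\bmlin$.

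The main obstacle is not conceptual but the consistent bookkeeping of the $\mathrm{tr}\chi$- and $\omega$-correction factors that distinguish coordinate derivatives of frame components from the projected derivatives $\ns_3,\ns_4$, and the verification that these cancel in the symmetric combination dictated by $T=\tfrac{\Omega}{2}(e_3+e_4)$. The raised-versus-lowered index conversion in the $\bmlin$ term is the one place where a genuine, non-cancelling $\tfrac{1}{2}(\mathrm{tr}\chi)\bmlin$ contribution survives and must be handled with care; everything else follows by substituting the first variation formulas of proposition~\ref{LinMet} and collecting terms.
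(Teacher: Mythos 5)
Your proposal is correct, but it follows a genuinely different route from the paper's. The paper computes $\mathcal{L}_Th$ via the covariant formula $(\mathcal{L}_Th)_{\alpha\beta}=(\nabla_Th)_{\alpha\beta}+h_{\gamma\beta}(\nabla_{\alpha}T)^{\gamma}+h_{\gamma\alpha}(\nabla_{\beta}T)^{\gamma}$: it first records $\nabla_4T=\omega e_4$, $\nabla_3T=-\omega e_3$, $\nabla_AT=0$, writes $\nabla_Th=\frac{\Omega}{2}(\nabla_3h+\nabla_4h)$, and then feeds in the frame components of $\nabla_\alpha h$ already assembled in proposition~\ref{hcalc4D} via lemma~\ref{lemmacompsch}. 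You instead observe that $[T,e_3]=[T,e_4]=[T,e_A]=0$ (which holds because $T(\Omega)=0$ and $b\equiv0$ in these coordinates), so that every component of $\mathcal{L}_Th$ is simply $T$ applied to the corresponding scalar component of $h$; the connection terms never enter, and the only bookkeeping left is converting $\tfrac{1}{2}(\partial_u+\partial_v)$ into $\tfrac{\Omega}{2}(\ns_3+\ns_4)$ via proposition~\ref{ProDTF2Sch} (where the $\tfrac{p}{2}(\Omega\mathrm{tr}\chi)$ corrections cancel) and lowering the index on $\bmlin$, which is where the surviving $\tfrac{1}{4}(\mathrm{tr}\chi)\bmlin_A$ comes from. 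I checked the individual components against propositions~\ref{LinMet} and~\eqref{backgroundcomp} and they all come out as stated, including the $(4A)$ component, which is the only delicate one. Your approach is the more elementary of the two: it trades the paper's reliance on lemma~\ref{lemmacompsch} and proposition~\ref{hcalc4D} for a single frame-commutator observation, at the cost of being tied to the specific coordinate realisation $e_3=\Omega^{-1}\partial_u$, $e_4=\Omega^{-1}\partial_v$, $e_A=\partial_A$ with $b\equiv0$; the paper's covariant route generalises more readily to settings where $T$ does not commute with the frame.
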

\begin{proof}
First note that since $t=u+v$ and $r_{\star}=v-u$ one has $T=\frac{\Omega}{2}(e_3+e_4)$. From this one can compute $\nabla_4T={\omega}e_4$, $\nabla_3T=-{\omega}e_3$ and $\nabla_AT=0$.  Also,
\begin{align*}
(\nabla_Th)_{\alpha\beta}=\frac{\Omega}{2}(\nabla_3h)_{\alpha\beta}+\frac{\Omega}{2}(\nabla_4h)_{\alpha\beta}.
\end{align*}
Hence, one can use proposition~\ref{hcalc4D} to compute $(\nabla_Th)_{\alpha\beta}$. Finally one can finish the calculations by using the formula
\begin{align*}
(\mathcal{L}_Th)_{\alpha\beta}=(\nabla_Th)_{\alpha\beta}+h_{\gamma\beta}(\nabla_{\alpha}T)^{\gamma}+h_{\gamma\alpha}(\nabla_{\beta}T)^{\gamma}
\end{align*}
in conjunction with lemma~\ref{lemmacompsch} for $h$ in double null gauge and proposition~\ref{hcalc4D}.
\end{proof}
\begin{prop}
Let $h$ be a smooth solution to the linearised vacuum Einstein equation~(\ref{LVEE}) in double null gauge on the $4$-dimensional Schwarzschild exterior expressed in double null Eddington--Finkelstein coordinates. Then, in the basis $(e_3,e_4,\partial_A)$,
\begin{align*}
(\nabla_{3}(\mathcal{L}_Th))_{43}&=2\Omega\Big(\rlin +2\rho\Big(\frac{\Olino }{\Omega}\Big)\Big)-\frac{2}{\Omega}\partial_u\olinb ,\\
(\nabla_{4} (\mathcal{L}_Th))_{43}&=-\frac{2}{\Omega}\partial_v\olin+2\Omega\Big(\rlin +2\rho\Big(\frac{\Olino }{\Omega}\Big)\Big),\\
(\nabla_{3} (\mathcal{L}_T h))_{A4}&=\frac{1}{4}(\mathrm{tr}\chi)\big(\ns_4\accentset{(1)}{b}-\frac{1}{2}(\mathrm{tr}\chi)\bmlin \big)_A+2\ns_A(\olin-\olinb )+2\Omega(\bblin +\blin )_A-\frac{\Omega\mathrm{tr}\chi}{2}(\elin +3\eblin )_A,\\
(\widehat{\nabla_3(\mathcal{L}_Th)})_{AB}&=(\widehat{\ns_3(\slashed{\mathcal{L}}_T\slashed{h})})_{AB}=\frac{1}{2}(\Omega\mathrm{tr}\chi)(\xblin +\xlin )-2\Omega \Dts \accentset{(1)}{{\underline{\eta}}}-2\omega\xblin -\Omega\ablin ,\\
(\widehat{\nabla_4(\mathcal{L}_Th)})_{AB}&=(\widehat{\ns_4(\slashed{\mathcal{L}}_T\slashed{h})})_{AB}=\Dts (\ns_4\bmlin )-\frac{\mathrm{tr}\chi}{2}\Dts \bmlin -\frac{\Omega\mathrm{tr}\chi}{2}(\xblin +\xlin )-2\Omega \Dts \eblin +2\omega\xlin -\Omega\alin .
\end{align*}
\end{prop}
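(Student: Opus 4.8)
The plan is to recognise that, by Proposition~\ref{lieder}, $\mathcal{L}_T h$ is a symmetric $2$-tensor whose $44$, $33$ and $3A$ components all vanish, so it satisfies \emph{exactly} the structural hypotheses of Lemma~\ref{lemmacompsch}. I would therefore apply that lemma with $\mathrm{S}=\mathcal{L}_T h$, reading off from Proposition~\ref{lieder} the three pieces of data
\[
\mathrm{S}_{43}=-2(\olin+\olinb),\qquad \mathrm{v}^{\mathrm{S}}_A=-\tfrac12(\ns_4\bmlin)_A-\Omega(\elin-\eblin)_A+\tfrac14(\mathrm{tr}\chi)\bmlin_A,
\]
\[
\slashed{\mathrm{S}}_{AB}=\tfrac12(\mathcal{L}_T\mathrm{tr}\slashed{h})\slashed{g}_{AB}+\Omega\xlin_{AB}+\Omega\xblin_{AB}+(\Dts\bmlin)_{AB}.
\]
Lemma~\ref{lemmacompsch} then reduces each desired component to a differential operation on these three objects, and the remaining task is to simplify using the linearised null structure equations of Section~\ref{sec:LNSSchw} together with the background relations~\eqref{backgroundcomp}--\eqref{rhotoomega}. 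The two scalar components are immediate: Lemma~\ref{lemmacompsch} gives $(\nabla_3(\mathcal{L}_T h))_{43}=e_3(\mathrm{S}_{43})$ and $(\nabla_4(\mathcal{L}_T h))_{43}=e_4(\mathrm{S}_{43})$, and since $e_3=\Omega^{-1}\partial_u$, $e_4=\Omega^{-1}\partial_v$ act on scalars as $\ns_3$, $\ns_4$, I would substitute $\mathrm{S}_{43}=-2(\olin+\olinb)$ and use Proposition~\ref{Linomega} to trade $\ns_3\olin$ and $\ns_4\olinb$ for the stated curvature terms, leaving $\partial_u\olinb$ and $\partial_v\olin$ (for which no propagation equation is listed) as written.

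For the traceless parts I would use $(\nabla_3\mathrm{S})_{AB}=(\ns_3\slashed{\mathrm{S}})_{AB}$ and $(\nabla_4\mathrm{S})_{AB}=(\ns_4\slashed{\mathrm{S}})_{AB}$. The pure-trace piece $\tfrac12(\mathcal{L}_T\mathrm{tr}\slashed{h})\slashed{g}$ drops out of the traceless projection because $\ns_3,\ns_4$ annihilate $\slashed{g}$ on Schwarzschild (a direct consequence of~\eqref{backgroundcomp}). On the remaining terms I would apply $\ns_3\Omega=-\omega$, $\ns_4\Omega=\omega$, replace $\ns_3\xlin,\ns_3\xblin,\ns_4\xlin,\ns_4\xblin$ by the linearised shear equations of Proposition~\ref{LinShear}, and use the commutation Lemma~\ref{comlem} to push $\ns_3,\ns_4$ through $\Dts$. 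In the $\ns_4$ case this produces precisely the $\Dts(\ns_4\bmlin)$ and $-\tfrac{\mathrm{tr}\chi}{2}\Dts\bmlin$ corrections in the stated identity; in the $\ns_3$ case the key feature is that the $\Dts\elin$ generated by $\ns_3\xlin$ cancels against the $2\Omega\,\Dts\elin$ arising when $\ns_3$ is commuted through $\Dts\bmlin$ (using $\ns_3\bmlin=-\tfrac12\mathrm{tr}\chi\,\bmlin+2\Omega(\elin-\eblin)$ from Proposition~\ref{LinMet}), leaving only the $-2\Omega\,\Dts\eblin$ term.

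The genuinely laborious component is $(\nabla_3(\mathcal{L}_T h))_{A4}=\ns_3\mathrm{v}^{\mathrm{S}}_A-\hat{\omega}\mathrm{v}^{\mathrm{S}}_A$. Here I would expand $\ns_3\mathrm{v}^{\mathrm{S}}$ term by term. The $\ns_3\ns_4\bmlin$ piece is handled by the mixed commutator $\ns_3\ns_4=\ns_4\ns_3+\hat\omega(\ns_3+\ns_4)$ of Lemma~\ref{comlem}, after which $\ns_3\bmlin$ is evaluated through the first variation formula $\partial_u\bmlin=2\Omega^2(\elin-\eblin)$ of Proposition~\ref{LinMet}, and $\ns_4(\elin-\eblin)$, $\ns_3(\elin-\eblin)$ are replaced using the torsion propagation equations of Proposition~\ref{LinTorProp}. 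The crucial cancellation is that the $\tfrac{2}{\Omega}\ns\olin$ hidden inside $\ns_4\eblin$ and the $\tfrac{2}{\Omega}\ns\olinb$ hidden inside $\ns_3\elin$ combine to give exactly the angular-gradient term $2\ns_A(\olin-\olinb)$, while the surviving $\mathrm{tr}\chi$ and $\hat\omega$ factors reorganise into $\tfrac14(\mathrm{tr}\chi)(\ns_4\bmlin-\tfrac12\mathrm{tr}\chi\,\bmlin)$, the curvature term $2\Omega(\bblin+\blin)$, and $-\tfrac{\Omega\mathrm{tr}\chi}{2}(\elin+3\eblin)$.

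I expect this last covector computation to be the main obstacle, both because it involves the second-order mixed derivative $\ns_3\ns_4\bmlin$, which must first be commuted and then re-expressed through two distinct propagation equations, and because the bookkeeping of raising and lowering the index on $\bmlin$ against the $(u,v)$-dependent induced metric $\slashed{g}$ is precisely where sign errors and spurious $\mathrm{tr}\chi$ coefficients are most likely to arise. Once all substitutions are assembled and the background identities~\eqref{backgroundcomp}--\eqref{rhotoomega} are applied to collapse the coefficient functions, each of the five stated formulas follows.
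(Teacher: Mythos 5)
Your proposal is correct and follows exactly the route the paper takes: apply Lemma~\ref{lemmacompsch} with $\mathrm{S}=\mathcal{L}_Th$ using the components from Proposition~\ref{lieder}, then simplify via the commutation Lemma~\ref{comlem}, the linearised null structure equations of Section~\ref{sec:LNSSchw}, and the background relations. The specific cancellations you identify (the $\Dts\elin$ terms in the $\ns_3$ traceless component, and the recombination of $\tfrac{2}{\Omega}\ns\olin$ and $\tfrac{2}{\Omega}\ns\olinb$ into $2\ns_A(\olin-\olinb)$ in the $A4$ component) check out.
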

\begin{proof}
To prove these relations one uses lemma~\ref{lemmacompsch} and proposition~\ref{lieder} to double null decompose the above quantities. The results above then follow from an application of the commutation lemma~\ref{comlem} and the linearised null structure equations of section~\ref{sec:LNSSchw}. 
\end{proof}
\begin{prop}\label{htracerel}
Let $h$ be a smooth solution to the linearised vacuum Einstein equation~(\ref{LVEE}) in double null gauge on the $4$-dimensional Schwarzschild exterior. Then one has the following relations
\begin{align*}
\slashed{\mathcal{L}}_T\mathrm{tr}\slashed{h}&=\otxb +\otx -\divs \bmlin ,\\
\ns_3(\slashed{\mathcal{L}}_T\mathrm{tr}\slashed{h})&=\frac{1}{\Omega}\Big(2\Omega^2\divs \eblin +2\Omega^2\Big(\rlin +\frac{\Olino }{\Omega}\rho\Big)+\frac{1}{2}(\Omega\mathrm{tr}\chi)\Big(\otxb +\otx \Big)\\
&\nonumber\quad-2\omega\otxb -2(\Omega\mathrm{tr}\chi)\olinb \Big),\nonumber\\
\ns_4(\slashed{\mathcal{L}}_T\mathrm{tr}\slashed{h})&=\frac{1}{\Omega}\Big(2\Omega^2\divs \accentset{(1)}{{\underline{\eta}}}+2\Omega^2\Big(\rlin +\frac{\Olino }{\Omega}\rho\Big)-\frac{1}{2}(\Omega\mathrm{tr}\chi)\Big(\otxb +\otx \Big)\\
&\nonumber\quad+2\omega\otx +2(\Omega\mathrm{tr}\chi)\olin-\Omega\divs (\ns_4\bmlin )+\frac{1}{2}(\Omega\mathrm{tr}\chi)\divs \bmlin \Big).\nonumber
\end{align*}
\end{prop}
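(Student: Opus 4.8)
The plan is to derive the three identities in sequence, each building on the previous one, using only the linearised first variation formulas (Proposition~\ref{LinMet}), the Raychaudhuri and transversal propagation equations (Propositions~\ref{LinRay} and~\ref{LinPropExp}), the commutation lemma~\ref{comlem}, and the Schwarzschild background relations~\eqref{backgroundcomp}.

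First I would establish the algebraic identity for $\slashed{\mathcal{L}}_T\mathrm{tr}\slashed{h}$. The key preliminary observation is that $T(r)=0$ on the Schwarzschild exterior, since $r$ depends only on $r_{\star}=v-u$ while $T=\tfrac{1}{2}(\partial_u+\partial_v)$; consequently $\slashed{\mathcal{L}}_T\slashed{g}=0$ and $T$ acts on the scalar $\mathrm{tr}\slashed{h}$ simply as a directional derivative. Writing $T=\tfrac{\Omega}{2}(e_3+e_4)$ as in the proof of Proposition~\ref{lieder}, and recalling that on scalars $\ns_3=e_3$ and $\ns_4=e_4$, I would compute $\slashed{\mathcal{L}}_T\mathrm{tr}\slashed{h}=\tfrac{\Omega}{2}\big(\ns_3\mathrm{tr}\slashed{h}+\ns_4\mathrm{tr}\slashed{h}\big)$ and substitute the two first variation formulas $\ns_3\mathrm{tr}\slashed{h}=\tfrac{2}{\Omega}\otxb$ and $\ns_4\mathrm{tr}\slashed{h}=\tfrac{2}{\Omega}(\otx-\divs\bmlin)$ from Proposition~\ref{LinMet}. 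The background factors cancel and yield the first relation $\slashed{\mathcal{L}}_T\mathrm{tr}\slashed{h}=\otxb+\otx-\divs\bmlin$.

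For the second and third relations, since $\slashed{\mathcal{L}}_T\mathrm{tr}\slashed{h}$ is a scalar I would apply $\ns_3$ (resp.\ $\ns_4$) directly to the first relation and substitute propagation equations term by term. For $\ns_3$ I would use the Raychaudhuri equation (Proposition~\ref{LinRay}) for $\ns_3\otxb$ and the transversal propagation equation (Proposition~\ref{LinPropExp}) for $\ns_3\otx$; for $\ns_4$ the roles of the two equations are exchanged. The remaining term $\ns_3(\divs\bmlin)$ I would treat by commuting: since $\ns_3\slashed{g}=0$ one has $\ns_3(\divs\bmlin)=\slashed{g}^{AB}\ns_3\ns_A\bmlin_B$, and the commutation lemma~\ref{comlem} gives $\ns_3(\divs\bmlin)=\divs(\ns_3\bmlin)+\tfrac{\mathrm{tr}\chi}{2}\divs\bmlin$. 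To evaluate $\ns_3\bmlin$ I would convert the first variation formula $\partial_u\bmlin^A=2\Omega^2(\elin-\eblin)^A$ into the projected derivative of the covector $\bmlin_A$ via $\ns_3\slashed{g}=0$, obtaining $\ns_3\bmlin=-\tfrac{1}{2}(\mathrm{tr}\chi)\bmlin+2\Omega(\elin-\eblin)$; the $\mathrm{tr}\chi$ contributions then cancel and $\ns_3(\divs\bmlin)=2\Omega\,\divs(\elin-\eblin)$, whose $\divs\elin$ part cancels against the corresponding term produced by $\ns_3\otx$, leaving the $2\Omega\,\divs\eblin$ of the target. Collecting the remaining terms and using $\rho=-\hat{\omega}\,\mathrm{tr}\chi$ together with~\eqref{backgroundcomp} reproduces the stated expression.

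The third relation is handled identically, except that there is no first variation formula propagating $\bmlin$ in the $e_4$ direction. Hence $\ns_4\bmlin$ cannot be eliminated, and after the commutation $\ns_4(\divs\bmlin)=\divs(\ns_4\bmlin)-\tfrac{\mathrm{tr}\chi}{2}\divs\bmlin$ the term $\divs(\ns_4\bmlin)$ must be retained; this is exactly why $\divs(\ns_4\bmlin)$ and a residual $\divs\bmlin$ appear explicitly in the target formula. The main bookkeeping obstacle throughout is the raising and lowering of the index on $\bmlin$ combined with the non-commutativity of $\ns_3,\ns_4$ with $\divs$: one must consistently use $\ns_3\slashed{g}=\ns_4\slashed{g}=0$ to move $\slashed{g}$ through the projected derivatives and apply the commutation lemma~\ref{comlem} with the correct sign of the $\tfrac{\mathrm{tr}\chi}{2}$ term in each null direction. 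Beyond this the computation is a routine substitution of the linearised null structure equations and the Schwarzschild background relations.
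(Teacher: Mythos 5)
Your proposal is correct and follows essentially the same route as the paper's (very terse) proof: the first identity by substituting Proposition~\ref{LinMet} into $\slashed{\mathcal{L}}_T\mathrm{tr}\slashed{h}=\tfrac{\Omega}{2}(\ns_3+\ns_4)\mathrm{tr}\slashed{h}$, and the other two by applying Propositions~\ref{LinRay} and~\ref{LinPropExp} together with the commutation lemma~\ref{comlem} to handle the $\divs\bmlin$ terms. You correctly isolate the one genuinely nontrivial step --- converting $\partial_u\bmlin^A$ into $\ns_3$ of the associated covector via $\partial_u\slashed{g}_{BC}=-(\Omega\mathrm{tr}\chi)\slashed{g}_{BC}$ and commuting with $\divs$ so that the $\tfrac{\mathrm{tr}\chi}{2}$ contributions cancel --- which the paper leaves implicit.
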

\begin{proof}
The first equation follows from proposition~\ref{LinMet}. The rest of the results then follow from the linearised Raychauduri equations in proposition~\ref{LinRay} and propagation equations for the expansions in proposition~\ref{LinPropExp}. Note that for the last equation one uses the commutation lemma~\ref{comlem}.
\end{proof}

\subsection{Proof of Theorem~\ref{thm:CanEntoGusCon}}\label{TMR}
In the following subsection the main computation is performed. Many of the details are provided to leave the reader with no illusion as to the technical nature of the manipulations. 
\begin{proof}[Proof of theorem~\ref{thm:CanEntoGusCon}]
In this proof the following convention is adopted. The symbol $\equiv$ will denote equality under integration by parts on $\mathbb{S}^2_{u,v}$.

Recall that the $T$-canonical energy current $\mathcal{J}^T[h]^a$ is given by
\begin{align*}
\mathcal{J}^T[h]^a=&P^{abcdef}\Big[(\mathcal{L}_Th)_{bc}\nabla_dh_{ef}-h_{bc}\nabla_d(\mathcal{L}_Th)_{ef}\Big],
\end{align*}
with
\begin{align*}
P^{abcdef}\doteq g^{ae}g^{bf}g^{cd}-\frac{1}{2}g^{ad}g^{be}g^{cf}-\frac{1}{2}g^{ab}g^{ef}g^{cd}-\frac{1}{2}g^{ae}g^{df}g^{bc}+\frac{1}{2}g^{ad}g^{ef}g^{bc}.
\end{align*}
The inverse metric has a very simple form in the dual basis to the normalised null frame  $(e_3,e_4,\partial_{\theta},\partial_{\phi})$ associated to double null Eddington--Finkelstein coordinates. In particular, its \underline{non-zero} components are
\begin{align*}
    g^{34}=-\frac{1}{2},\qquad g^{AB}=\slashed{g}^{AB}.
\end{align*}
Recall that solution $h$ to the linearised vacuum Einstein equation~(\ref{LVEE}) in double null gauge is given by
\begin{align*}
h_{44}=0=h_{33}=h_{3A},\qquad h_{34}=-4\Big(\frac{\Olino }{\Omega}\Big),\qquad h_{4A}=-\frac{\bmlin _A}{\Omega},\qquad h_{AB}=\slashed{h}_{AB},
\end{align*}
in the basis $(e_3,e_4,\partial_{\theta},\partial_{\phi})$. Further by proposition~\ref{lieder}, the vanishing components of $\mathcal{L}_Th$ are
\begin{align*}
(\mathcal{L}_Th)_{44}&=0,\quad(\mathcal{L}_Th)_{33}=0,\quad (\mathcal{L}_Th)_{A3}=0.
\end{align*}
When calculating $\mathcal{J}^T[h]^4$ one should note that $\ns_3\slashed{g}=0$ and hence $[\mathrm{tr},\ns_3]=0$. From lemma~\ref{lemmacompsch} one has
\begin{align*}
    (\nabla_D\mathrm{S})_{3F}=\frac{1}{2}(\mathrm{tr}\chi)(\slashed{\mathrm{S}}_{DF}+\frac{1}{2}\mathrm{S}_{34}\slashed{g}_{DF})
\end{align*}
for $\mathrm{S}=h$ or $\mathrm{S}=\mathcal{L}_Th$. Hence,
\begin{align*}
\slashed{g}^{DF}(\nabla_D\mathrm{S})_{3F}=\frac{1}{2}(\mathrm{tr}\chi)(\mathrm{tr}\slashed{\mathrm{S}}+\mathrm{S}_{34}),
\end{align*}
for $\mathrm{S}=h$ or $\mathrm{S}=\mathcal{L}_Th$. Further, one can decompose $({\mathcal{L}}_Th)_{AB}$ into its trace and symmetric traceless part as
\begin{align*}
({\mathcal{L}}_Th)_{AB}&=\frac{1}{2}(\mathcal{L}_T\mathrm{tr}\slashed{h})\slashed{g}+\widehat{{\slashed{\mathcal{L}}}_T{h}}_{AB},\\
{\nabla}_3({\mathcal{L}}_Th)_{AB}&=\frac{1}{2}\ns_3(\mathcal{L}_T\mathrm{tr}\slashed{h})\slashed{g}+\widehat{\ns_3\slashed{\mathcal{L}}_Th}_{AB},
\end{align*}
where one uses that $(\mathcal{L}_Th)_{AB}=(\slashed{\mathcal{L}}_T\slashed{h})_{AB}$ and $\slashed{\mathcal{L}}_T\slashed{g}=0$. 
Combining these facts gives that $\mathcal{J}^T[h]^4$ can be written in a decomposed form as
\begin{align*}
\mathcal{J}^T[h]^4&=\frac{1}{4}\Big(\langle\widehat{\slashed{\mathcal{L}}_T\slashed{h}},\widehat{\ns_3\slashed{h}}\rangle-\langle\hat{ \slashed{h}},\widehat{\ns_3\slashed{\mathcal{L}}_T\slashed{h}}\rangle\Big)+\frac{1}{8}\Big(\mathcal{L}_T(\mathrm{tr}\slashed{h})e_3(h_{34})-(\mathrm{tr}\slashed{h})e_3((\mathcal{L}_Th)_{34})\Big)\nonumber\\
&\quad+\frac{1}{8}\Big(\ns_3(\mathcal{L}_T\mathrm{tr}\slashed{h})\mathrm{tr}\slashed{h}-(\mathcal{L}_T\mathrm{tr}\slashed{h})\ns_3(\mathrm{tr}\slashed{h})\Big)-\frac{1}{8}\ns_3\mathcal{L}_T(\mathrm{tr}\slashed{h})h_{34}\\
&\nonumber\quad+\frac{1}{8}\ns_3(\mathrm{tr}\slashed{h})(\mathcal{L}_Th)_{34}+\frac{1}{8}(\mathrm{tr}\chi)\Big(h_{34}\mathcal{L}_T(\mathrm{tr}\slashed{h})-(\mathcal{L}_Th)_{34}(\mathrm{tr}\slashed{h})\Big)\nonumber.
\end{align*}
Similarly, noting the relations derived above in lemma~\ref{lemmacompsch} and that
\begin{align*}
\frac{1}{4}\slashed{g}^{AB}\mathrm{v}^{\mathcal{L}_Th}_A\ns_B\mathrm{tr}\slashed{h}-\frac{1}{4}\slashed{g}^{AB}\mathrm{v}^{h}_A\ns_B\mathcal{L}_T\mathrm{tr}\slashed{h}&\equiv-\frac{1}{4}\divs \mathrm{v}^{\mathcal{L}_Th}\mathrm{tr}\slashed{h}+\frac{1}{4}\divs \mathrm{v}^{h}\mathcal{L}_T\mathrm{tr}\slashed{h},
\end{align*}
the component $\mathcal{J}^T[h]^3$ can be calculated as
\begin{align*}
\mathcal{J}^T[h]^3&\equiv\frac{1}{4}\Big(\langle\widehat{\slashed{\mathcal{L}}_T\slashed{h}},\widehat{\ns_4\slashed{h}}\rangle-\frac{1}{4}\langle \hat{\slashed{h}},\widehat{\ns_4(\slashed{\mathcal{L}}_T\slashed{h})}\rangle\Big)+\frac{1}{8}\Big(\ns_4(\mathcal{L}_T\mathrm{tr}\slashed{h})\mathrm{tr}\slashed{h}-(\mathcal{L}_T\mathrm{tr}\slashed{h})\ns_4(\mathrm{tr}\slashed{h})\Big)\nonumber\\
&\nonumber\quad+\frac{1}{8}\Big(\mathcal{L}_T(\mathrm{tr}\slashed{h})e_4(h_{34})-(\mathrm{tr}\slashed{h})e_4((\mathcal{L}_Th)_{34})-\ns_4\mathcal{L}_T(\mathrm{tr}\slashed{h})h_{34}+\ns_4(\mathrm{tr}\slashed{h})(\mathcal{L}_Th)_{34}\Big)\nonumber\\
&\quad-\frac{1}{8}(\mathrm{tr}\chi)\Big(h_{34}\mathcal{L}_T(\mathrm{tr}\slashed{h})-(\mathcal{L}_Th)_{34}(\mathrm{tr}\slashed{h})\Big)\\
&\nonumber\quad-\frac{1}{4}(\mathcal{L}_Th)_{34}\divs \mathrm{v}^h+\frac{1}{4}h_{34}\divs \mathrm{v}^{\mathcal{L}_Th}-\frac{1}{2}\langle\widehat{\slashed{\mathcal{L}}_Th},\widehat{\ns\mathrm{v}}^h\rangle+\frac{1}{2}\langle \hat{\slashed{h}},\widehat{\ns\mathrm{v}}^{\mathcal{L}_Th}\rangle\\
&\nonumber\quad+\frac{1}{4}\Big(\slashed{g}^{AB}\mathrm{v}^{\mathcal{L}_Th}_A(\nabla_3h)_{4B}-\slashed{g}^{AB}\mathrm{v}^{h}_A(\nabla_3\mathcal{L}_Th)_{4B}-\divs \mathrm{v}^{\mathcal{L}_Th}\mathrm{tr}\slashed{h}+\divs \mathrm{v}^{h}\mathcal{L}_T\mathrm{tr}\slashed{h}\Big),
\end{align*}
where $\mathrm{v}^{\mathrm{S}}_B=\mathrm{S}_{4B}$ (for $\mathrm{S}=h$ or $\mathrm{S}=\mathcal{L}_Th$) is considered as a covector. 

Further, using proposition~\ref{LinMet}, one has
\begin{align*}
\ns_A\mathrm{v}^h_B&=-\frac{1}{\Omega}\ns_Ab_B,\qquad
\ns_A\mathrm{v}^{\mathcal{L}_Th}_B=-\frac{1}{2}\ns_A(\ns_4\bmlin )_B-\Omega\ns_A(\elin -\eblin )_B+\frac{1}{2}\mathrm{tr}\chi\ns_A\bmlin _B.
\end{align*}
So,
\begin{align*}
\widehat{\ns\mathrm{v}^h}&=\frac{1}{\Omega}\Dts \bmlin ,\qquad 
\widehat{\ns\mathrm{v}^{\mathcal{L}_Th}}=\frac{1}{2}\Dts (\ns_4\bmlin )+\Omega(\Dts \elin -\Dts \eblin )-\frac{1}{2}\mathrm{tr}\chi\Dts \bmlin ,\\
\divs \mathrm{v}^h&=-\frac{1}{\Omega}\divs \bmlin ,\qquad \divs \mathrm{v}^{\mathcal{L}_Th}=-\frac{1}{2}\divs (\ns_4\bmlin )-\Omega\divs \elin +\Omega\divs \eblin +\frac{1}{2}\mathrm{tr}\chi\divs \bmlin .
\end{align*}
Therefore, exploiting these relations and propositions~\ref{hcalc4D}-\ref{htracerel}, one can write two complicated expressions for $\mathcal{J}^T[h]^4$  and $\mathcal{J}^T[h]^3$:
\begin{align}
\mathcal{J}^T[h]^4&\equiv\frac{\Omega}{2}|\xblin |^2-\frac{\omega}{\Omega}\Big(\frac{\Olino }{\Omega}\Big)\otxb -\frac{1}{2\Omega}\olinb \otx -\frac{1}{4\Omega}\otxb ^2+\frac{\Omega}{2}\langle\xlin ,\xblin \rangle\label{DirectCurrent1}\\
&\nonumber\quad-\frac{\Omega}{2}\langle\eblin ,\elin +\eblin \rangle-\frac{1}{4\Omega}\otx \otxb -\frac{1}{2\Omega}\olinb \Big(2\otxb -\divs \bmlin \Big)-\frac{1}{2\Omega}\olin\otxb \\
&\nonumber\quad+\frac{\Omega}{2}\langle\divs \hat{\slashed{h}},\eblin \rangle+\frac{\Omega}{4}\langle \hat{\slashed{h}}, \ablin \rangle+\frac{1}{2}\bblin (\bmlin )-\frac{(\Omega\mathrm{tr}\chi)}{8}\langle\xblin +\xlin ,\hat{\slashed{h}}\rangle+\frac{1}{2}\omega\langle \hat{\slashed{h}},\xblin \rangle-\frac{1}{4}\mathrm{tr}\chi\eblin (\bmlin )\\
&\nonumber\quad+\frac{\mathrm{tr}\chi}{16}\Big[\otxb +\otx \Big]\Big(\mathrm{tr}\slashed{h}-4\Big(\frac{\Olino }{\Omega}\Big)\Big)-\frac{1}{4\Omega}\Big(\frac{\Olino }{\Omega}\Big)\Big(4\partial_u\olin+4(\Omega\mathrm{tr}\chi)\olinb \Big)\\
&\nonumber\quad+\frac{1}{16\Omega}\Big(4\Omega^2\divs \eblin -4\omega\otxb +4(\partial_u\olinb )+4(\Omega\mathrm{tr}\chi)\olin\Big)\mathrm{tr}\slashed{h},
\end{align}
\begin{align}
\mathcal{J}^T[h]^3&\equiv\frac{\Omega}{2}|\xlin |^2-\frac{1}{4\Omega}\otx ^2-\frac{1}{2\Omega}\otxb \olin+\frac{\omega}{\Omega}\Big(\frac{\Olino }{\Omega}\Big)\otx +\frac{\Omega}{2}\langle\xblin ,\xlin \rangle\label{DirectCurrent2}\\
&\nonumber\quad+\frac{\Omega}{2}|\eblin |^2+\frac{1}{8}\mathrm{tr}\chi\langle\bmlin ,\elin +\eblin \rangle+\frac{1}{4}\langle\bmlin ,\bblin -\blin \rangle+\frac{1}{8}(\Omega\mathrm{tr}\chi)\langle \hat{\slashed{h}},(\xblin +\xlin )\rangle_{\slashed{g}}-\frac{3}{2}\Omega\langle\elin ,\eblin \rangle\\
&\nonumber\quad-\frac{1}{4\Omega}\otxb \otx -\frac{1}{\Omega}\otx \olin-\frac{1}{2\Omega}\otx \olinb +\frac{1}{2\Omega}\divs \bmlin \olin\\
&\nonumber\quad-\frac{1}{2}\omega\langle \hat{\slashed{h}},\xlin \rangle+\frac{\Omega}{4}\langle \hat{\slashed{h}},\alin \rangle+\frac{\Omega}{2}\langle\divs \hat{\slashed{h}},\elin \rangle+\frac{1}{4}\Big(\langle (\ns_4\bmlin ),(\elin -\eblin )\rangle+\langle\bmlin ,\ns_4\eblin \rangle-\langle\bmlin ,\ns_3\elin \rangle\Big)\\
&\nonumber\quad+\frac{1}{\Omega}\Big(\frac{\Olino }{\Omega}\Big)\Big({(\Omega\mathrm{tr}\chi)}\olin-\partial_u\olin\Big)+\frac{\Omega\mathrm{tr}\chi}{16}\Big[\otx +\otxb \Big]\Big(4\Big(\frac{\Olino }{\Omega}\Big)-\mathrm{tr}\slashed{h}\Big)\\
&\nonumber\quad+\frac{1}{4\Omega}\Big(\Omega^2\divs \elin +\partial_v\olin+\omega\otx -(\Omega\mathrm{tr}\chi)\olinb \Big)\mathrm{tr}\slashed{h},
\end{align}
where the following relation has been employed
\begin{align*}
    2\Big(\frac{\Olino }{\Omega}\Big)\divs \elin \equiv-\langle\elin ,\elin +\eblin \rangle
\end{align*}
and similarly for $\eblin $. Also, note that in simplifying these expressions one uses the linearised Codazzi equations in proposition~\ref{LinCod} to give that
\begin{align*}
\langle\Dts \bmlin ,2\xblin \rangle&\equiv 2\langle\bmlin ,\divs \xblin \rangle=\mathrm{tr}\chi\langle \bmlin ,\elin \rangle+2\langle \bmlin ,\bblin \rangle-\frac{1}{\Omega}\otxb \divs \bmlin ,\\
\langle\Dts \bmlin ,2\xlin \rangle&\equiv 2\langle\bmlin ,\divs \xlin \rangle=-\mathrm{tr}\chi\langle \bmlin ,\eblin \rangle-2\langle \bmlin ,\blin \rangle-\frac{1}{\Omega}\otx \divs \bmlin .
\end{align*}
The function $\mathcal{A}$ can be identified by the following observations: 
\begin{enumerate}
\item[(1)] If one considers the wider problem of interest, namely the conservation law on the boundary of a characteristic rectangle on the exterior of $(\mathrm{Schw}_4,g_s)$ then if one can write
\begin{align}
\mathcal{J}^T[h]^3&=\overline{\mathcal{J}^T[h]}^3-\frac{1}{r^2}\ns_4\mathfrak{F},
\qquad \mathcal{J}^T[h]^4=\overline{\mathcal{J}^T[h]}^4+\frac{1}{r^2}\ns_3\mathfrak{F},\label{Usefulremarkboundary}
\end{align} 
for some function $\mathfrak{F}$, then one has a cancellation of $\mathfrak{F}$ at the spheres at the corners of the characteristic rectangle.
\item[(2)] There are terms in $\mathcal{J}^4$ and $\mathcal{J}^3$ which appear with the correct derivative ($\partial_u$ for $\mathcal{J}^4$ and $\partial_v$ for $\mathcal{J}^3$) to integrate by parts but one has no expression for them in terms of the null structure equations. For example $\partial_u\olinb \mathrm{tr}\slashed{h}$ and $\langle\ns_4\bmlin ,(\elin -\elin )\rangle$. However, one has expressions for $\partial_v\olinb $ and $\partial_u\bmlin ^A$ from propositions~\ref{Linomega} and~\ref{LinMet} respectively. So, with point (1) in mind, expressing such terms as total derivative and adding and subtracting such terms is advantageous to manipulate the expressions for $\mathcal{J}^3$ and $\mathcal{J}^4$. 
\end{enumerate}
The function $\mathcal{A}$ (defined in theorem~\ref{thm:CanEntoGusCon}) can be written as
\begin{align*}
\mathcal{A}[h]&=\frac{1}{r^2}\Big(\mathcal{A}_1-\mathcal{A}_2-\mathcal{A}_3-\mathcal{A}_4+\mathcal{A}_5-\mathcal{A}_6-\mathcal{A}_7+\mathcal{A}_8+\mathcal{A}_9-\mathcal{A}_{10}\Big)
\end{align*}
with
\begin{equation}
    \begin{aligned}[c]
    \mathcal{A}_1&\doteq \frac{1}{4}r^2\olinb \mathrm{tr}\slashed{h},\\
    \mathcal{A}_4&\doteq \frac{r^2}{8}\otx \mathrm{tr}\slashed{h},\\
    \mathcal{A}_7&\doteq\frac{3}{2}r^2\Big(\frac{\Olino }{\Omega}\Big)\otxb ,
    \end{aligned}
    \qquad 
    \begin{aligned}[c]
    \mathcal{A}_2&\doteq \frac{1}{4}r^2\olin\mathrm{tr}\slashed{h},\\
    \mathcal{A}_5&\doteq \frac{r^2}{8}\otxb \mathrm{tr}\slashed{h},\\
    \mathcal{A}_8&\doteq \frac{3r^2}{2}\Big(\frac{\Olino }{\Omega}\Big)\otx ,
    \end{aligned}
    \qquad 
    \begin{aligned}[c]
        \mathcal{A}_3&\doteq \frac{r^2}{4}\langle\bmlin ,\elin -\eblin \rangle,\\
        \mathcal{A}_6&\doteq \frac{r^2\Omega}{4}\langle\xblin -\xlin ,\hat{\slashed{h}}\rangle,\\
        \mathcal{A}_9&\doteq \frac{r^2(\Omega\mathrm{tr}\chi)}{2}\Big(\frac{\Olino }{\Omega}\Big)\mathrm{tr}\slashed{h},
    \end{aligned}
\end{equation}
and
\begin{align*}
     \mathcal{A}_{10}&\doteq 2r^2{(\Omega\mathrm{tr}{\chi})}\Big(\frac{\Olino }{\Omega}\Big)^2.
\end{align*}
One can check that using proposition~\ref{LinMet} that
\begin{align*}
\frac{1}{r^2}\ns_3\mathcal{A}_1&=-\frac{1}{4}(\mathrm{tr}\chi)\olinb \mathrm{tr}\slashed{h}+\frac{1}{4\Omega}\partial_u\olinb \mathrm{tr}\slashed{h}+\frac{1}{2\Omega}\olinb \otxb ,\\
\frac{1}{r^2}\ns_4\mathcal{A}_1&=\frac{1}{4}\Big((\mathrm{tr}\chi)\olinb +\frac{1}{\Omega}\partial_v\olinb \Big)\mathrm{tr}\slashed{h}+\frac{\olinb }{2\Omega}\Big(\otx -\divs \bmlin \Big),\\
\frac{1}{r^2}\ns_3\mathcal{A}_2&=-\frac{1}{4}(\mathrm{tr}\chi)\olin\mathrm{tr}\slashed{h}+\frac{1}{4\Omega}\partial_u\olin\mathrm{tr}\slashed{h}+\frac{1}{2\Omega}\olin\otxb ,\\
\frac{1}{r^2}\ns_4\mathcal{A}_2&=\frac{1}{4}\Big((\mathrm{tr}\chi)\olin+\frac{1}{\Omega}\partial_v\olin\Big)\mathrm{tr}\slashed{h}+\frac{\olin}{2\Omega}\Big(\otx -\divs \bmlin \Big).
\end{align*}
Using propositions~\ref{LinMet} and~\ref{LinTorProp} one has
\begin{align*}
\frac{1}{r^2}\ns_3\mathcal{A}_3&=-\frac{1}{4}(\mathrm{tr}\chi)\langle\bmlin ,\elin -\eblin \rangle+\frac{\Omega}{2}|\elin -\eblin |^2+\frac{1}{4}\langle\bmlin ,\ns_3\elin \rangle-\frac{1}{4}\langle\bmlin ,\bblin \rangle,\\
\frac{1}{r^2}\ns_4\mathcal{A}_3&=\frac{\mathrm{tr}\chi}{8}(\elin -\eblin )(\bmlin )+\frac{1}{4}\Big[(\elin -\eblin )(\ns_4\bmlin )-\langle\bmlin ,\ns_4\eblin \rangle-\blin (\bmlin )\Big].
\end{align*}
Using propositions~\ref{LinMet},~\ref{LinPropExp} and~\ref{LinRay} gives
\begin{align*}
\frac{1}{r^2}\ns_3\mathcal{A}_4&=\frac{1}{4\Omega}\otxb \otx +\frac{\Omega}{4}\mathrm{tr}\slashed{h}\divs \elin -\frac{1}{4\Omega}(\partial_u\olin)\mathrm{tr}\slashed{h}\\
&\nonumber\quad-\frac{1}{16\Omega}\mathrm{tr}\slashed{h}(\Omega\mathrm{tr}\chi)\Big(\otxb +\otx \Big),\nonumber\\
\frac{1}{r^2}\ns_4\mathcal{A}_4&=\frac{1}{4\Omega}\otx ^2+\frac{1}{4\Omega}\mathrm{tr}\slashed{h}\Big(\omega\otx +(\Omega\mathrm{tr}\chi)\olin\Big)-\frac{1}{4\Omega}\otx \divs \bmlin 
\end{align*}
and
\begin{align*}
\frac{1}{r^2}\ns_3\mathcal{A}_5&=\frac{1}{4\Omega}\otxb ^2-\frac{1}{4\Omega}\mathrm{tr}\slashed{h}\Big(\omega\otxb +(\Omega\mathrm{tr}\chi)\olinb \Big),\\
\frac{1}{r^2}\ns_4\mathcal{A}_5&=\frac{1}{4\Omega}\otxb \Big(\otx -\divs \bmlin \Big)-\frac{1}{4\Omega}(\partial_v\olinb )\mathrm{tr}\slashed{h}\\
&\nonumber\quad+\frac{\mathrm{tr}\chi}{16}\mathrm{tr}\slashed{h}\Big[\otxb +\otx \Big]+\frac{\Omega}{4}\divs \eblin \mathrm{tr}\slashed{h}.\nonumber
\end{align*}
Using propositions~\ref{LinMet} and~\ref{LinShear} gives
\begin{align*}
\frac{1}{r^2}\ns_3\mathcal{A}_6&=\frac{1}{8}(\Omega\mathrm{tr}\chi)\langle\xblin +\xlin ,\hat{\slashed{h}}\rangle-\frac{\Omega}{2}\langle\xblin ,\xlin \rangle+\frac{\Omega}{2}|\xblin |^2+\frac{\Omega}{2}\langle\eblin ,\divs \hat{\slashed{h}}\rangle-\frac{1}{2}\omega\langle \xblin ,\hat{\slashed{h}}\rangle-\frac{\Omega}{4}\langle \ablin ,\hat{\slashed{h}}\rangle,\\
\frac{1}{r^2}\ns_4\mathcal{A}_6&\equiv\frac{\Omega\mathrm{tr}\chi}{8}\langle\xblin +\xlin ,\hat{\slashed{h}}\rangle+\frac{\Omega}{2}\langle\xblin ,\xlin \rangle-\frac{\Omega}{2}|\xlin |^2-\frac{\Omega}{2}\langle\eblin ,\divs \hat{\slashed{h}}\rangle-\frac{1}{2}\omega\langle \xlin ,\hat{\slashed{h}}\rangle+\frac{\Omega}{4}\langle \alin ,\hat{\slashed{h}}\rangle\\
&\nonumber\quad+\frac{1}{2}(\bblin +\blin )(\bmlin )+\frac{\mathrm{tr}\chi}{4}(\elin +\eblin )(\bmlin )-\frac{1}{4\Omega}\Big(\otxb -\otx \Big)\divs \bmlin .\nonumber
\end{align*}
With propositions~\ref{LinMet},~\ref{LinPropExp} and~\ref{LinRay} one has
\begin{align*}
\frac{1}{r^2}\ns_3\mathcal{A}_7&=\frac{3}{2\Omega}\olinb \otxb -\frac{3}{\Omega}\Big(\frac{\Olino }{\Omega}\Big)\Big(\omega\otxb +(\Omega\mathrm{tr}\chi)\olinb \Big),\\
\frac{1}{r^2}\ns_4\mathcal{A}_7&\equiv\frac{3}{2\Omega}\olin\otxb -\frac{3}{2}\Omega\langle\elin ,\eblin \rangle-\frac{3\Omega}{2}\Omega|\eblin |^2-\frac{3}{\Omega}\Big(\frac{\Olino }{\Omega}\Big)\partial_v\olinb \\
&\nonumber\quad+\frac{3}{4\Omega}\Big(\frac{\Olino }{\Omega}\Big)(\Omega\mathrm{tr}\chi)\Big(\accentset{(1)}{(\Omega\mathrm{tr}{\underline{\chi}})}+\otx \Big).\nonumber
\end{align*}
Analogously, with propositions~\ref{LinMet},~\ref{LinPropExp} and~\ref{LinRay} one has
\begin{align*}
\frac{1}{r^2}\ns_3\mathcal{A}_8&=\frac{3}{2\Omega}\olinb \otx -\frac{3}{2}\Omega|\elin |^2-\frac{3}{2}\Omega\langle\elin ,\eblin \rangle-\frac{3}{\Omega}\Big(\frac{\Olino }{\Omega}\Big)\partial_u\olin\\
&\nonumber\quad-\frac{3}{4}\Big(\frac{\Olino }{\Omega}\Big)(\mathrm{tr}\chi)\Big(\otx +\accentset{(1)}{(\Omega\mathrm{tr}{\underline{\chi}})}\Big),\\
\frac{1}{r^2}\ns_4\mathcal{A}_8&=\frac{3}{2\Omega}\olin\otx +\frac{3}{\Omega}\Big(\frac{\Olino }{\Omega}\Big)\Big(\omega\otx +(\Omega\mathrm{tr}\chi)\olin\Big).
\end{align*}
Using proposition~\ref{LinMet} gives
\begin{align*}
\frac{1}{r^2}\ns_3\mathcal{A}_9&=-\frac{\Omega}{4}(\mathrm{tr}\chi)^2\frac{\Olino }{\Omega}\mathrm{tr}\slashed{h}-\omega\mathrm{tr}\chi\frac{\Olino }{\Omega}\mathrm{tr}\slashed{h}+\frac{1}{\Omega}(\Omega\mathrm{tr}\chi)\frac{\Olino }{\Omega}\otxb +\frac{1}{2}(\mathrm{tr}\chi)\olinb \mathrm{tr}\slashed{h},\\
\frac{1}{r^2}\ns_4\mathcal{A}_9&=\frac{\mathrm{tr}\chi}{2}\olin\mathrm{tr}\slashed{h}+\frac{\mathrm{tr}\chi}{4}\Big[\Omega\mathrm{tr}\chi+4\omega\Big]\Big(\frac{\Olino }{\Omega}\Big)\mathrm{tr}\slashed{h}+\mathrm{tr}\chi\Big(\frac{\Olino }{\Omega}\Big)\otx +\frac{\mathrm{tr}\chi}{2}(\elin +\eblin )(\bmlin ),\\
\frac{1}{r^2}\ns_3\mathcal{A}_{10}&=-\Omega(\mathrm{tr}\chi)^2\Big(\frac{\Olino }{\Omega}\Big)^2-4\omega(\mathrm{tr}\chi)\Big(\frac{\Olino }{\Omega}\Big)^2+4{(\Omega\mathrm{tr}{\chi})}\Big(\frac{\Olino }{\Omega}\Big)\olinb ,\\
\frac{1}{r^2}\ns_4\mathcal{A}_{10}&=\mathrm{tr}\chi\Big[\Omega\mathrm{tr}\chi+4\omega\Big]\Big(\frac{\Olino }{\Omega}\Big)^2+4{(\Omega\mathrm{tr}{\chi})}\Big(\frac{\Olino }{\Omega}\Big)\olin.
\end{align*}
Therefore, denoting $(\overline{\mathcal{J}}^T[h])^3=(\mathcal{J}^T[h])^3+\frac{1}{r^2}\ns_4(r^2\mathcal{A})$ and $(\overline{\mathcal{J}}^T[h])^4=(\mathcal{J}^T[h])^4-\frac{1}{r^2}\ns_3(r^2\mathcal{A})$, one can calculate that
\begin{align*}
(\overline{\mathcal{J}}^T[h])^3&\equiv\Omega|\xlin |^2-\frac{1}{2\Omega}\otx ^2-\frac{2}{\Omega}\otxb \olin+\frac{4}{\Omega}\Big(\frac{\Olino }{\Omega}\Big)\omega\otx +2\Omega|\eblin |^2\\
&\nonumber\quad+\Big(\frac{\Olino }{\Omega}\Big)\Big\{\frac{2}{\Omega}\partial_v\olinb -\Omega\Big(\divs \divs \hat{\slashed{h}}-\frac{1}{2}\slashed{\Delta}\mathrm{tr}\slashed{h}\Big)-\frac{\mathrm{tr}\chi}{2}\Big(\accentset{(1)}{(\Omega\mathrm{tr}{\underline{\chi}})}-\otx \Big)\\
&\nonumber\quad+\frac{\mathrm{tr}\chi}{4}\Big(\Omega\mathrm{tr}\chi+4\omega\Big)\Big(\mathrm{tr}\slashed{h}-4\Big(\frac{\Olino }{\Omega}\Big)\Big)\Big\},\\
(\overline{\mathcal{J}}^T[h])^3&\equiv\Omega|\xblin |^2-\frac{4}{\Omega}\omega\Big(\frac{\Olino }{\Omega}\Big)\otxb -\frac{2}{\Omega}\olinb \otx -
\frac{1}{2\Omega}\otxb ^2+2\Omega|\elin |^2\\
&\nonumber\quad+\Big(\frac{\Olino }{\Omega}\Big)\Big\{\frac{2}{\Omega}\partial_u\olin-\Omega\Big(\divs \divs \hat{\slashed{h}}-\frac{1}{2}\slashed{\Delta}\mathrm{tr}\slashed{h}\Big)-\frac{\mathrm{tr}\chi}{2}\Big(\accentset{(1)}{(\Omega\mathrm{tr}{\underline{\chi}})}-\otx \Big)\\
&\nonumber\quad+\frac{\mathrm{tr}\chi}{4}\Big(\Omega\mathrm{tr}\chi+4\omega\Big)\Big(\mathrm{tr}\slashed{h}-4\Big(\frac{\Olino }{\Omega}\Big)\Big)\Big\},
\end{align*}
where one uses that
\begin{align*}
   \Big\langle\elin +\eblin ,\divs \hat{\slashed{h}}-\frac{1}{2}\ns\mathrm{tr}\slashed{h} \Big\rangle\equiv -2\Big(\frac{\Olino }{\Omega}\Big)\Big[\divs \divs \hat{\slashed{h}}-\frac{1}{2}\slashed{\Delta}\mathrm{tr}\slashed{h}\Big].
\end{align*}
Using the linearised Gauss equation in proposition~\ref{LinGauss} gives the result stated in theorem~\ref{thm:CanEntoGusCon}.
\end{proof}
\subsection{Proof of Theorem~\ref{thm:CEntoGConb}}\label{betaconsproof}
\begin{proof}[Proof of theorem~\ref{thm:CEntoGConb}]
In this proof the following convention is adopted. The symbol $\equiv$ will denote equality under integration by parts of $\mathbb{S}^2_{u,v}$.

Now rather than go through the direct computation as in section~\ref{TMR} one can use the following idea to avoid (some of) the long computations. If $h$ satisfies the linearised null structure equations then $\mathcal{L}_{\Omega_k}h$ does. Note further that one can establish
\begin{align*}
(\mathcal{L}_{\Omega_k}h)_{33}=0=(\mathcal{L}_{\Omega_k}h)_{44}=(\mathcal{L}_{\Omega_k}h)_{3A}. 
\end{align*}
Therefore, if one expresses $(\mathcal{J}^T[h])^4-\frac{1}{r^2}\ns_3(r^2\mathcal{A}[h])$ and $(\mathcal{J}^T[h])^3+\frac{1}{r^2}\ns_4(r^2\mathcal{A}[h])$ in terms of $h$, then replacing with $\mathcal{L}_{{\Omega}_k}h$ will result in a simpler form for the components of $(\mathcal{J}^T[\mathcal{L}_{\Omega_k}h])$ (plus a boundary term). One can check that this yields 
\begin{align*}
(\mathcal{J}^T[\mathcal{L}_{\Omega_k}h])^4&=\frac{1}{\Omega}\Big(\Omega^2|\slashed{\mathcal{L}}_{{\Omega}_k}\xblin |^2+2\Omega^2|\slashed{\mathcal{L}}_{{\Omega}_k}\elin |^2-4\omega\Omega_k\Big(\frac{\Olino }{\Omega}\Big)\Omega_k(\otxb )\\
&\nonumber\quad-2\Omega_k(\olinb )\Omega_k(\otx )-
\frac{1}{2}\big(\Omega_k\otxb \big)^2\Big)+\frac{1}{r^2}\ns_3(r^2\mathcal{A}[\mathcal{L}_{{\Omega}_k}h]),\\
(\mathcal{J}^T[\mathcal{L}_{\Omega_k}h])^3&=\frac{1}{\Omega}\Big(2\Omega^2|\slashed{\mathcal{L}}_{{\Omega}_k}\eblin |^2+\Omega^2|\slashed{\mathcal{L}}_{{\Omega}_k}\xlin |^2+4\omega\Omega_k\Big(\frac{\Olino }{\Omega}\Big)\Omega_k\big(\otx \big)\Big)\\
&\nonumber\quad-2\Omega_k\big(\otxb \big)\Omega_k(\olin)-\frac{1}{2}\big(\Omega_k\otx \big)^2-\frac{1}{r^2}\ns_4(r^2\mathcal{A}[\mathcal{L}_{{\Omega}_k}h]).
\end{align*}

For simplicity, denote $\slashed{\mathcal{J}}^a\doteq \sum_k(\mathcal{J}^T[\mathcal{L}_{\Omega_k}h])^a$ and $\slashed{\mathcal{A}}\doteq \sum_{k}\mathcal{A}[\mathcal{L}_{{\Omega}_k}h]$.
Now, note that $\{\Omega_k\}_{k}$ satisfy the following identities
\begin{align}\label{metso3}
\sum_{k}\Omega_k^A\Omega_k^B&=r^2\slashed{g}^{AB}\qquad \Dts \Omega_k=0=\divs \Omega_k,
\end{align}
where the latter two identities follow from the Killing property. Next from lemma~\ref{Useful2DLemma}, a covector $\xi\in \Omega^1\mathbb{S}^2_{u,v}$ satisfies
\begin{align*}
|\slashed{\mathcal{L}}_{\Omega_k}\xi|^2&=|\ns_{\Omega_k}\xi|^2+\frac{1}{4}(\slashed{\mathrm{curl}}\Omega_k)^2|\xi|^2+(\slashed{\mathrm{curl}}\Omega_k)\slashed{\varepsilon}(\xi,\ns_{\Omega_k}\xi).
\end{align*}
One can check that $\sum_k(\slashed{\mathrm{curl}}\Omega_k)\Omega_k=0$. Using equation~(\ref{metso3}) and that $\sum_k(\slashed{\mathrm{curl}}\Omega_k)^2=4$ one finds
\begin{align*}
\sum_k|\slashed{\mathcal{L}}_{\Omega_k}\xi|^2&=r^2|\ns\xi|^2+|\xi|^2.
\end{align*}
Similarly, using lemma~\ref{Useful2DLemma}, for $\Theta\in \mathrm{symtr}(T\mathbb{S}^2_{u,v}\otimes T\mathbb{S}^2_{u,v})$ one has
\begin{align*}
\sum_{k}|\slashed{\mathcal{L}}_{\Omega_k}\Theta|^2=r^2|\ns\Theta|^2+4|\Theta|^2.
\end{align*}
Then, using the above results with $\xi\in \{\elin ,\eblin \}$ and $\Theta\in \{\xblin ,\xlin \}$ gives
\begin{align*}
\slashed{\mathcal{J}}^4&\equiv\frac{1}{\Omega}\Big(\Omega^2r^2|\ns\xblin |^2+4\Omega^2|\xblin |^2+2\Omega^2r^2|\ns\elin |^2+2\Omega^2|\elin |^2-
\frac{1}{2}r^2\big|\ns\otxb \big|^2\\
&\nonumber\quad-4r^2\omega\langle\ns\Big(\frac{\Olino }{\Omega}\Big),\ns(\otxb )\rangle-2r^2\langle\ns(\olinb ),\ns(\otx )\rangle\Big)+\frac{1}{r^2}\ns_3(r^2\slashed{\mathcal{A}}),\\
\slashed{\mathcal{J}}^3&\equiv\frac{1}{\Omega}\Big(2\Omega^2r^2|\ns\eblin |^2+2\Omega^2|\eblin |^2+\Omega^2r^2|\ns\xlin |^2+4\Omega^2|\xlin |^2-\frac{1}{2}r^2\big|\ns\otx \big|^2\\
&\nonumber\quad-2r^2\langle\ns\otxb ,\ns(\olin)\rangle+4r^2\omega\langle\ns\Big(\frac{\Olino }{\Omega}\Big),\ns\otx \rangle\Big)-\frac{1}{r^2}\ns_4(r^2\slashed{\mathcal{A}}).
\end{align*}
Recall that, from propositions~\ref{LinMet},~\ref{LinTorProp},~\ref{LinTorCon}, and lemma~\ref{Useful2DLemma}
\begin{equation}
    \begin{aligned}[c]
    \slashed{d}\Big(\frac{\Olino }{\Omega}\Big)&=\frac{1}{2}(\elin +\eblin ),\\
    \slashed{d}\olinb &=\frac{\Omega}{2}\Big(\ns_3\elin -\mathrm{tr}\chi \elin +\bblin \Big),\\
    |\ns\Theta|^2&\equiv 2|\divs \Theta|^2-\mathrm{Scal}(\slashed{g})|{\Theta}|^2,
    \end{aligned}
    \quad
    \begin{aligned}[c]
    \slashed{\mathrm{curl}}\elin &=\slin =-\slashed{\mathrm{curl}}\eblin ,\\
    \slashed{d}\olin&=\frac{\Omega}{2}\Big(\ns_4\eblin +\mathrm{tr}\chi \eblin -\blin \Big),\\
    |\ns\xi|^2&\equiv|\slashed{\mathrm{curl}}\xi|^2+|\divs \xi|^2-\frac{\mathrm{Scal}(\slashed{g})}{2}|\xi|^2.
    \end{aligned}
\end{equation}
Combining these results with $\xi\in \{\elin ,\eblin \}$ and $\Theta\in \{\xblin ,\xlin \}$ and the commutation lemma~\ref{comlem} gives
\begin{align*}
\slashed{\mathcal{J}}^4&\equiv\frac{1}{\Omega}\Big(2\Omega^2r^2|\divs \xblin |^2+2\Omega^2|\xblin |^2+2\Omega^2r^2|\divs \elin |^2+2\Omega^2r^2|\slin |^2-
\frac{1}{2}r^2\big|\ns\otxb \big|^2\\
&\nonumber\quad+2r^2\omega\divs (\elin +\eblin )\otxb +r^2\Omega\ns_3(\divs \elin )\otx -\frac{3}{2}r^2\Omega(\mathrm{tr}\chi)\divs (\elin )\otx \\
&\nonumber\quad+r^2\Omega\divs (\bblin )\otx \Big)+\frac{1}{r^2}\ns_3(r^2\slashed{\mathcal{A}}),\\
\slashed{\mathcal{J}}^3&\equiv\frac{1}{\Omega}\Big(2\Omega^2r^2(\divs \eblin )^2+2\Omega^2r^2|\slin |^2+\Omega^2r^2|\divs \xlin |^2+2\Omega^2|\xlin |^2-\frac{1}{2}r^2\big|\ns\otx \big|^2\\
&\nonumber\quad+r^2\Omega\otxb \ns_4(\divs \eblin )+\frac{3}{2}r^2\mathrm{tr}\chi\Omega\otxb \divs (\eblin )-r^2\Omega\otxb \divs (\blin )\\
&\nonumber\quad-2r^2\omega\divs (\elin +\eblin )\otx \Big)-\frac{1}{r^2}\ns_4(r^2\slashed{\mathcal{A}}).
\end{align*}
Using the linearised Codazzi equations in proposition~\ref{LinCod} one has
\begin{align*}
2\Omega^2r^2|\divs \xblin |^2&\equiv\frac{r^2}{2}|\ns\otxb |^2+\frac{\Omega^2r^2(\mathrm{tr}\chi)^2}{2}|\elin |^2+2\Omega^2r^2|\bblin |^2\\
&\nonumber\quad+2\Omega^2r^2(\mathrm{tr}\chi)\langle\elin ,\bblin \rangle-\Omega r^2\mathrm{tr}\chi\otxb \divs \elin -2\Omega r^2\otxb \divs \bblin ,\\
2\Omega^2r^2|\divs \xlin |^2&\equiv\frac{r^2}{2}|\ns\otx |^2+\frac{\Omega^2r^2(\mathrm{tr}\chi)^2}{2}|\eblin |^2+2\Omega^2r^2|\blin |^2\\
&\nonumber\quad+2\Omega^2r^2(\mathrm{tr}\chi)\langle\eblin ,\blin \rangle+\Omega r^2\mathrm{tr}\chi\otx \divs \eblin +2\Omega r^2\otx \divs \blin .
\end{align*}
Substituting into $\slashed{\mathcal{J}}^a$ gives
\begin{align*}
\slashed{\mathcal{J}}^4&\equiv\frac{1}{\Omega}\Big(2\Omega^2|\xblin |^2+2\Omega^2r^2|\divs \elin |^2+2\Omega^2r^2|\slin |^2+\frac{\Omega^2r^2}{2}(\mathrm{tr}\chi)^2|\elin |^2+2\Omega^2r^2|\bblin |^2+r^2\Omega\divs \bblin \otx \\
&\nonumber\quad-\frac{3\Omega\mathrm{tr}\chi}{2}r^2\divs \elin \otx +2r^2\omega\divs (\elin +\eblin )\otxb +r^2\Omega\ns_3(\divs \elin )\otx \\
&\nonumber\quad+2\Omega^2r^2(\mathrm{tr}\chi)\langle\elin ,\bblin \rangle-\Omega r^2\mathrm{tr}\chi\otxb \divs \elin -2\Omega r^2\otxb \divs \bblin \Big)+\frac{1}{r^2}\ns_3(r^2\slashed{\mathcal{A}}),\\
\slashed{\mathcal{J}}^3&\equiv\frac{1}{\Omega}\Big(2\Omega^2r^2(\divs \eblin )^2+2\Omega^2r^2|\slin |^2+2\Omega^2|\xlin |^2+\frac{\Omega^2r^2}{2}(\mathrm{tr}\chi)^2|\eblin |^2+2\Omega^2r^2|\blin |^2-r^2\Omega\otxb \divs \blin \\
&\nonumber\quad+r^2\Omega\otxb \ns_4(\divs \eblin )+\frac{3}{2}r^2\mathrm{tr}\chi\Omega\otxb \divs \eblin +2\Omega^2r^2(\mathrm{tr}\chi)\langle\eblin ,\blin \rangle+2\Omega r^2\otx \divs \blin \\
&\nonumber\quad+\Omega r^2\mathrm{tr}\chi\otx \divs \eblin -2r^2\omega\divs (\elin +\eblin )\otx \Big)-\frac{1}{r^2}\ns_4(r^2\slashed{\mathcal{A}}).\nonumber
\end{align*}
By considering similar ideas to the points (1) and (2) around equation~(\ref{Usefulremarkboundary}) in the proof of theorem~\ref{thm:CanEntoGusCon} in section~\ref{TMR} one can identify a boundary term. To this end, define
\begin{equation}
    \begin{aligned}[c]
    \mathcal{B}_1&\doteq r^4\Big(\otx \divs \elin -\otxb \divs \eblin \Big),\\
    \mathcal{B}_3&\doteq 2r^4\Omega\mathrm{tr}\chi\Big(\frac{\Olino }{\Omega}\Big)\accentset{(1)}{\rho,}\\
    \mathcal{B}_5&\doteq \frac{r^4}{2\Omega}\mathrm{tr}\chi\otxb \otx ,\\
    \end{aligned}\qquad 
    \begin{aligned}[c]
    \mathcal{B}_2&\doteq r^4\rlin \big(\otx -\accentset{(1)}{(\Omega\mathrm{tr}{\underline{\chi}})}\big),\\
    \mathcal{B}_4&\doteq r^4(\Omega\mathrm{tr}\chi)\langle\elin ,\eblin \rangle,\\
    \mathcal{B}_6&\doteq r^4\frac{\omega}{\Omega^2}\otx \otxb .
    \end{aligned}
\end{equation}
Computing $\ns_3\mathcal{B}_1$ and $\ns_4\mathcal{B}_1$ with propositions~\ref{LinRay}, \ref{LinTorProp} and \ref{LinPropExp} gives
\begin{align*}
\frac{1}{r^2}\ns_3\mathcal{B}_1&\equiv2r^2\mathrm{tr}\chi\olinb \divs \eblin -\frac{3}{2}r^2(\mathrm{tr}\chi)\otx \divs \elin +2r^2\Omega(\divs \elin )^2+2\Omega r^2\rlin \divs \elin -2\Omega r^2\rho|\elin |^2\\
&\nonumber\quad-2\Omega r^2\rho\langle\elin ,\eblin \rangle+r^2\otx \ns_3\divs \elin -r^2\otxb \divs \bblin +\frac{2r^2\omega}{\Omega}\otxb \divs \eblin \nonumber,
\end{align*}
and
\begin{align*}
\frac{1}{r^2}\ns_4\mathcal{B}_1&\equiv2r^2\mathrm{tr}\chi\olin\divs \elin -\frac{3}{2}r^2(\mathrm{tr}\chi)\otxb \divs \eblin -2r^2\Omega(\divs \eblin )^2-2\Omega r^2\rlin \divs \eblin +2\Omega r^2\rho|\underline{\elin }|^2\\
&\nonumber\quad+2\Omega r^2\rho\langle\elin ,\eblin \rangle-r^2\otxb \ns_4\divs \eblin -r^2\otx \divs \blin +\frac{2r^2\omega}{\Omega}\otx \divs \elin \nonumber.
\end{align*}
Again using propositions~\ref{LinRay},~\ref{LinPropExp} and the Bianchi equation for $\rlin $ in proposition~\ref{LinBianchi} one has that
\begin{align*}
\frac{1}{r^2}\ns_3\mathcal{B}_2&\equiv r^2\big(\otxb -\otx \big)\divs \bblin +\frac{3}{2}r^2\frac{\rho}{\Omega}\otxb ^2+2\Omega r^2|\rlin |^2\\
&\nonumber\quad-\frac{3}{2}r^2\frac{\rho}{\Omega}\otxb \otx 
+2\Omega r^2\rlin \divs \elin +4\Omega r^2\rho\Big(\frac{\Olino }{\Omega}\Big)\rlin \\
&\nonumber\quad-r^2\mathrm{tr}\chi\rlin \otxb +2r^2\frac{\omega}{\Omega}\rlin \otxb +2r^2\mathrm{tr}\chi\rlin \olinb ,\\
\frac{1}{r^2}\ns_4\mathcal{B}_2&\equiv r^2\big(\otx -\otxb \big)\divs \blin -\frac{3}{2}r^2\frac{\rho}{\Omega}\otx ^2-2\Omega r^2|\rlin |^2\\
&\nonumber\quad+\frac{3}{2}r^2\frac{\rho}{\Omega}\otxb \otx 
-2\Omega r^2\rlin \divs \eblin -4\Omega r^2\rho\Big(\frac{\Olino }{\Omega}\Big)\rlin \\
&\nonumber\quad-r^2\mathrm{tr}\chi\rlin \otx +2r^2\frac{\omega}{\Omega}\rlin \otx +2r^2\mathrm{tr}\chi\rlin \olin.
\end{align*}
Further, from propositions~\ref{LinMet} and the Bianchi equation for $\rlin $ in proposition~\ref{LinBianchi} one has
\begin{align*}
\frac{1}{r^2}\ns_3\mathcal{B}_3&\equiv2r^2(\mathrm{tr}\chi)\olinb \rlin +r^2(\Omega\mathrm{tr}\chi)\langle\bblin ,\elin \rangle+r^2(\Omega\mathrm{tr}\chi)\langle\bblin ,\eblin \rangle+4r^2\rho\Big(\frac{\Olino }{\Omega}\Big)\rlin \\
&\nonumber\quad-3r^2\rho(\mathrm{tr}\chi)\Big(\frac{\Olino }{\Omega}\Big)\otxb ,\\
\frac{1}{r^2}\ns_4\mathcal{B}_3&\equiv2r^2(\mathrm{tr}\chi)\olin\rlin -r^2(\Omega\mathrm{tr}\chi)\langle\blin ,\elin \rangle-r^2(\Omega\mathrm{tr}\chi)\langle\blin ,\eblin \rangle-4r^2\rho\Big(\frac{\Olino }{\Omega}\Big)\rlin \\
&\nonumber\quad-3r^2\rho(\mathrm{tr}\chi)\Big(\frac{\Olino }{\Omega}\Big)\otx .
\end{align*}
Noting that $r^2\rho\mathrm{tr}\chi=-4\Omega\omega=-\frac{4\omega}{\Omega}(1+r^2\rho)$ one can show, using proposition~\ref{LinTorProp} that
\begin{align*}
\frac{1}{r^2}\ns_3\mathcal{B}_4&\equiv-2r^2(\mathrm{tr}\chi)\olinb \divs \eblin +2r^2\rho\Omega\big(\langle\elin ,\eblin \rangle-|\elin |^2\big)-2\Omega|\elin |^2+r^2(\Omega\mathrm{tr}\chi)\langle\elin -\eblin ,\bblin \rangle,\\
\frac{1}{r^2}\ns_4\mathcal{B}_4&\equiv-2r^2(\mathrm{tr}\chi)\olin\divs \elin -2r^2\rho\Omega\langle\elin ,\eblin \rangle+2\Omega|\eblin |^2+2r^2\Omega\rho|\eblin |^2+r^2(\Omega\mathrm{tr}\chi)\langle\elin -\eblin ,\blin \rangle.
\end{align*}
Noting that $(\mathrm{tr}\chi)^2=\frac{4}{r^2}+4\rho$ and $\omega\mathrm{tr}\chi=-\Omega\rho$ with propositions~\ref{LinRay} and~\ref{LinPropExp} gives
\begin{align*}
\frac{1}{r^2}\ns_3\mathcal{B}_5&\equiv r^2(\mathrm{tr}\chi)\otxb (\divs \elin +\rlin )-8\frac{\omega}{\Omega}\Big(\frac{\Olino }{\Omega}\Big)\otxb -8\frac{\omega}{\Omega}\rho r^2\Big(\frac{\Olino }{\Omega}\Big)\otxb -\frac{1}{\Omega}\otxb ^2\\
&\nonumber\quad -\frac{r^2\rho}{\Omega}\otxb ^2+\frac{r^2\rho}{\Omega}\otxb \otx  -\frac{4}{\Omega}\olinb \otx -\frac{4r^2}{\Omega}\rho\olinb \otx ,\\
\frac{1}{r^2}\ns_4\mathcal{B}_5&\equiv r^2(\mathrm{tr}\chi)\otx (\divs \eblin +\rlin )-8\frac{\omega}{\Omega}\Big(\frac{\Olino }{\Omega}\Big)\otx -8\frac{\omega}{\Omega}\rho r^2\Big(\frac{\Olino }{\Omega}\Big)\otx +\frac{1}{\Omega}\otx ^2\\
&\nonumber\quad +\frac{r^2\rho}{\Omega}\otx ^2-\frac{r^2\rho}{\Omega}\otxb \otx +\frac{4}{\Omega}\olin\otxb +\frac{4r^2}{\Omega}\rho\olin\otxb .
\end{align*}
Finally, one can check that using propositions~\ref{LinRay} and~\ref{LinPropExp} that
\begin{align*}
\frac{1}{r^2}\ns_3\mathcal{B}_6&\equiv-\frac{r^2}{2\Omega}\rho\otx \otxb +\frac{2\omega r^2}{\Omega}\otxb \divs \elin +\frac{2r^2\omega}{\Omega}\rlin \otxb \\
&\quad+4r^2\frac{\omega\rho}{\Omega}\Big(\frac{\Olino }{\Omega}\Big)\otxb +\frac{1}{2}r^2\frac{\rho}{\Omega}\otxb ^2+\frac{2r^2\rho}{\Omega}\olinb \otx ,\nonumber\\
\frac{1}{r^2}\ns_4\mathcal{B}_6&\equiv\frac{r^2}{2\Omega}\rho\otx \otxb +\frac{2\omega r^2}{\Omega}\otx \divs \eblin +\frac{2r^2\omega}{\Omega}\rlin \otx \\
&\quad+4r^2\frac{\omega\rho}{\Omega}\Big(\frac{\Olino }{\Omega}\Big)\otx -\frac{1}{2}r^2\frac{\rho}{\Omega}\otx ^2-\frac{2r^2\rho}{\Omega}\olin\accentset{(1)}{(\Omega\mathrm{tr}{\underline{\chi}})}\nonumber.
\end{align*}
By writing that 
\begin{align*}
\mathcal{B}=\frac{1}{r^2}\big(\mathcal{B}_1-\mathcal{B}_2+\mathcal{B}_3+\mathcal{B}_4-\mathcal{B}_5+\mathcal{B}_6\big)+\sum_{k=1}^3\mathcal{A}[\mathcal{L}_{\Omega_k}h]
\end{align*}
one can now calculate $\sum(\mathcal{J}[\mathcal{L}_{\Omega_i}h])^4-\frac{1}{r^2}\ns_3(r^2\mathcal{B})$ and $\sum(\mathcal{J}[\mathcal{L}_{\Omega_i}h])^3+\frac{1}{r^2}\ns_4(r^2\mathcal{B})$ to show the desired result. 
\end{proof}
\subsection{Proof of Theorem~\ref{TH5}}\label{alphacons}
\begin{proof}[Proof of Theorem~\ref{TH5}]
In this proof the following convention is adopted. The symbol $\equiv$ will denote equality under integration by parts of $\mathbb{S}^2_{u,v}$.

There are two methods to verify the conservation law of theorem~\ref{TH5}. One can compute \underline{directly} using the linearised Bianchi identities (proposition~\ref{LinBianchi}) and linearised null structure equations (propositions~\ref{LinMet}-\ref{LinCod}) that the fluxes appearing in the statement are conserved. More precisely, one can compute that 
\begin{align*}
0&\equiv\frac{1}{r^2}\ns_3\Big[r^2\Big(\frac{\Omega^4}{4}|\alin |^2+\frac{3}{2}\Omega^4\big(|\rlin |^2+|\slin |^2+|\blin |^2\big)+\frac{\Omega^4}{2}|\bblin |^2+f_2|\xblin |^2+f_1|\xlin |^2+f_3|\eblin |^2-\frac{1}{\Omega^2}f_3\olin\otxb\\
&\nonumber\quad +\frac{2}{\Omega^2}\Big(\omega f_3+{2\Omega\mathrm{tr}\chi f_2}\Big)\Big(\frac{\Olino }{\Omega}\Big)\otx -\frac{f_1}{2\Omega^2}\otx ^2-\frac{f_2}{2\Omega^2}\Big[\otxb +2(\Omega\mathrm{tr}\chi)\Big(\frac{\Olino }{\Omega}\Big)\Big]^2\Big)\Big]\\
&\nonumber\quad+\frac{1}{r^2}\ns_4\Big[r^2\Big(\frac{\Omega^4}{4}|\ablin |^2+\frac{3}{2}\Omega^4\big(|\rlin |^2+|\slin |^2+|\bblin |^2\big)+\frac{\Omega^4}{2}|\blin |^2+f_1|\xblin |^2+f_2|\xlin |^2+ f_3|\elin |^2-\frac{f_3}{\Omega^2}\olinb \otx\\
&\nonumber\quad  -\frac{2}{\Omega^2}\Big(\omega f_3+{2\Omega\mathrm{tr}\chi f_2}\Big)\Big(\frac{\Olino }{\Omega}\Big)\otxb -\frac{f_1}{2\Omega^2}\otxb ^2-\frac{f_2}{2\Omega^2}\Big[\otx -2(\Omega\mathrm{tr}\chi)\Big(\frac{\Olino }{\Omega}\Big)\Big]^2\Big)\Big].
\end{align*}
Therefore, if one integrates over the spacetime region~$[u_0,u_1]\times[v_0,v_1]\times\mathbb{S}_{u,v}^2$ then one obtains the conservation law in the statement of theorem~\ref{TH5}. Given the fluxes, this is perhaps the simplest way to prove the conservation law.

The second way is more constructive proof and is completely analogous to the proof of theorem~\ref{thm:CEntoGConb} where one computes the canonical energy of $\mathcal{L}_Th$ and then manipulates the fluxes into a `satisfactory' form. This was how the author originally derived the conservation law. This is illustrated below, however, many explicit computations are left out for brevity but can be found in appendix~\cite{MyThesis}. 

Again rather than go through the direct computation of $(\mathcal{J}^T[\mathcal{L}_{T}h])^a$ as in section~\ref{TMR} one can use the same idea as in the proof of theorem~\ref{thm:CEntoGConb} to avoid (some of) the long computations. If $h$ satisfies the linearised null structure equations then $\mathcal{L}_{T}h$ does. Therefore, if one expresses $(\mathcal{J}^T[h])^4-\frac{1}{r^2}\ns_3(\mathcal{A}[h])$ and $(\mathcal{J}^T[h])^3+\frac{1}{r^2}\ns_4(\mathcal{A}[h])$ in terms of $h$, then replacing with $\mathcal{L}_{T}h$ will result in a simpler form for the components of $(\mathcal{J}^T[\mathcal{L}_{T}h])$ (plus a boundary term). One can check that this yields
\begin{align}
(\mathcal{J}^T[\mathcal{L}_{T}h])^4&\equiv\frac{1}{\Omega}\Big(\Omega^2|\slashed{\mathcal{L}}_{T}\xblin |^2+2\Omega^2|\slashed{\mathcal{L}}_{T}\elin |^2-4\omega T\Big(\frac{\Olino }{\Omega}\Big)T(\otxb )\label{jlthstart}\\
&\nonumber\qquad-2T(\olinb )T(\otx )-
\frac{1}{2}\big(T\otxb \big)^2\Big)+\frac{1}{r^2}\ns_3\mathcal{A}(\mathcal{L}_{T}h),\\
(\mathcal{J}^T[\mathcal{L}_{T}h])^3&\equiv\frac{1}{\Omega}\Big(2\Omega^2|\slashed{\mathcal{L}}_{T}\eblin |^2+\Omega^2|\slashed{\mathcal{L}}_{T}\xlin |^2+4\omega T\Big(\frac{\Olino }{\Omega}\Big)T\big(\otx \big)\\
&\nonumber\qquad-2T\big(\otxb \big)T(\olin)-\frac{1}{2}\big(T\otx \big)^2\Big)-\frac{1}{r^2}\ns_4\mathcal{A}(\mathcal{L}_{T}h).
\end{align}
Calculating using the proposition~\ref{LinShear} gives:
\begin{align*}
\slashed{\mathcal{L}}_T\xblin &=-\frac{\Omega}{2}\ablin -\omega\xblin -\Omega\Dts \eblin +\frac{1}{4}(\Omega\mathrm{tr}\chi)\Big(\xblin +\xlin \Big),\\
\slashed{\mathcal{L}}_T\xlin &=-\frac{\Omega}{2}\alin +\omega\xlin -\Omega\Dts \elin -\frac{1}{4}(\Omega\mathrm{tr}\chi)\Big(\xblin +\xlin \Big).
\end{align*}
Further, from proposition~\ref{LinTorProp}
\begin{align*}
\slashed{\mathcal{L}}_T\elin &=\ns\olinb +\frac{1}{4}\Omega\mathrm{tr}\chi(\elin +\eblin )-\frac{\Omega}{2}\Big(\blin +\bblin \Big),\\
\slashed{\mathcal{L}}_T\eblin &=\ns\olin-\frac{1}{4}\Omega\mathrm{tr}\chi(\elin +\eblin )+\frac{\Omega}{2}\Big(\blin +\bblin \Big).
\end{align*}
One can additionally compute $T(\otxb )$, $T(\otx )$, $T(\olin)$ and $T(\olinb )$ from propositions~\ref{LinRay}, \ref{LinPropExp} and~\ref{Linomega}. Note that under integration by parts on the spheres and using the torsion equations in proposition~\ref{LinTorProp}:
\begin{align}
|\Dts \eblin |^2\equiv\frac{1}{2}|\slin |^2-\Big[\frac{1}{4}(\Omega\mathrm{tr}\chi)+\omega\Big]\frac{\mathrm{tr}\chi}{\Omega}|\eblin |^2+\frac{1}{2}|\divs \eblin |^2,\label{jlthfinal}
\end{align}
and analogously for $|\Dts \elin |^2$. This allows one to compute two (rather horrendous) expressions for $(\mathcal{J}^T[\mathcal{L}_Th])^4-\frac{1}{r^2}\ns_3(\mathcal{A}[\mathcal{L}_Th])$ and $(\mathcal{J}^T[\mathcal{L}_Th])^3+\frac{1}{r^2}\ns_4\mathcal{A}[\mathcal{L}_Th]$ where $\mathcal{A}[\mathcal{L}_{T}h]$ results from expressing $\mathcal{A}[h]$ above in terms of $h$ and replacing it with $\mathcal{L}_{T}h$. 

At this point one can then (arduously) identify the following boundary term to use in the manipulation of the resulting flux densities:
\begin{align*}
\mathcal{C}[h]&\doteq \frac{\Omega^2}{2}\Big(\frac{1}{4}(\Omega\mathrm{tr}\chi)-\omega\Big)\Big[|\xlin |^2+|\xblin |^2\Big]+\frac{\Omega^2}{4}(\Omega\mathrm{tr}\chi)\langle\xblin ,\xlin \rangle-\Omega^3\Big[\langle\bblin ,\eblin \rangle+\langle\blin ,\elin \rangle\Big]\\
&\nonumber\quad+\olin T(\otxb )-\olinb T(\otx )+(\Omega\mathrm{tr}\chi)\olin\olinb -2\Omega^2\omega\Big(\frac{\Olino }{\Omega}\Big)\rlin +\Omega^2\omega\langle\elin ,\eblin \rangle\\
&\nonumber\quad+\frac{1}{4}\Omega^2(\Omega\mathrm{tr}\chi)\Big[|\elin |^2+|\eblin |^2\Big]+\frac{1}{4}\Big(\omega-\frac{1}{4}(\Omega\mathrm{tr}\chi)\Big)\Big[\otxb ^2+\otx ^2\Big]\\
&\nonumber\quad-\frac{1}{8}(\Omega\mathrm{tr}\chi)\otx \otxb +\Big(2\omega^2-\frac{3}{2}\Omega^2\rho\Big)\Big(\frac{\Olino }{\Omega}\Big)\Big[\otxb -\otx \Big]\\
&\nonumber\quad-\Big(4\Omega^2\omega\rho+\frac{3}{2}\Omega^2(\Omega\mathrm{tr}\chi)\rho\Big)\Big(\frac{\Olino }{\Omega}\Big)^2+\sum_k\mathcal{A}[\mathcal{L}_{T}h].
\end{align*}
Therefore, computing $(\mathcal{J}^T[\mathcal{L}_Th])^4-\frac{1}{r^2}\ns_3(r^2\mathcal{C})$ and $(\mathcal{J}^T[\mathcal{L}_Th])^3+\frac{1}{r^2}\ns_4(r^2\mathcal{C})$ allows one to show $T$-canonical energy of $\mathcal{L}_{T}h$ satisfies
\begin{align*}
{\mathcal{E}}^T_u[\mathcal{L}_{T}h](v_0,v_1)&=2\dot{\overline{{\mathcal{E}}}}^T_u[h](v_0,v_1)-2\int_{\mathbb{S}_{u,v}^2}\mathcal{C}[h](u,v,\theta,\phi)\slashed{\varepsilon}\Big|_{v_0}^{v_1},\\
{\mathcal{E}}^T_v[\mathcal{L}_{T}h](u_0,u_1)&=2\dot{\overline{{\mathcal{E}}}}^T_v[h](u_0,u_1)+2\int_{\mathbb{S}_{u,v}^2}\mathcal{C}[h](u,v,\theta,\phi)\slashed{\varepsilon}\Big|_{u_0}^{u_1}.
\end{align*}
By the conservation of the canonical energy and the cancellation of the boundary terms on the spheres $\mathbb{S}_{u,v}^2$ gives the result stated in theorem~\ref{TH5}.
\end{proof}
\pagebreak
\footnotesize
\bibliographystyle{IEEEtranS}
\bibliography{CL1bib}

\begin{thebibliography}{10}
\providecommand{\url}[1]{#1}
\csname url@samestyle\endcsname
\providecommand{\newblock}{\relax}
\providecommand{\bibinfo}[2]{#2}
\providecommand{\BIBentrySTDinterwordspacing}{\spaceskip=0pt\relax}
\providecommand{\BIBentryALTinterwordstretchfactor}{4}
\providecommand{\BIBentryALTinterwordspacing}{\spaceskip=\fontdimen2\font plus
\BIBentryALTinterwordstretchfactor\fontdimen3\font minus
  \fontdimen4\font\relax}
\providecommand{\BIBforeignlanguage}[2]{{%
\expandafter\ifx\csname l@#1\endcsname\relax
\typeout{** WARNING: IEEEtranS.bst: No hyphenation pattern has been}%
\typeout{** loaded for the language `#1'. Using the pattern for}%
\typeout{** the default language instead.}%
\else
\language=\csname l@#1\endcsname
\fi
#2}}
\providecommand{\BIBdecl}{\relax}
\BIBdecl

\bibitem{Alinhac}
S.~Alinhac, \emph{{Geometric Analysis of Hyperbolic Differential Equations: An
  Introduction}}.\hskip 1em plus 0.5em minus 0.4em\relax Cambridge University
  Press, 2010.

\bibitem{AndersonBlue}
L.~Andersson and P.~Blue, ``Hidden symmetries and decay for the wave equation
  on the {Kerr} spacetime,'' \emph{Ann. Math.}, vol. 182, no.~3, pp. 787--853,
  2015.

\bibitem{ABBSM19}
L.~Andersson, T.~Bäckdahl, P.~Blue, and S.~Ma, ``{Stability for linearized
  gravity on the Kerr spacetime},'' 2019, arxiv:1903.03859.

\bibitem{Aretakis}
S.~Aretakis, ``Lecture {Notes} {General} {Relativity} {Columbia}
  {University},'' 2012.

\bibitem{Chandrasekhar2}
S.~Chandrasekhar, \emph{The {Mathematical} {Theory} of {Black} {Holes}}.\hskip
  1em plus 0.5em minus 0.4em\relax Oxford Univ. Press, New York, 1992.

\bibitem{Chandrasekhar3}
S.~Chandrasekhar and V.~Ferrari, ``{The Flux Integral for Axisymmetric
  Perturbations of Static Space-Times},'' \emph{Proc. Roy. Soc. Lon. A}, vol.
  428, no. 1875, pp. 325--349, 1990.

\bibitem{FormBH}
D.~Christodoulou, \emph{The {Formation} of {Black} {Holes} in {General}
  {Relativity}}.\hskip 1em plus 0.5em minus 0.4em\relax EMS, 2009.

\bibitem{MyThesis}
S.~C. Collingbourne, ``{The Gregory--Laflamme Instability and Conservation Laws
  for Linearised Gravity (Doctoral thesis)},'' \emph{University of Cambridge},
  2022.

\bibitem{HolzegelSC}
S.~C. Collingbourne and G.~Holzegel, ``{Uniform Boundedness for Solutions to
  the Teukolsky Equation on Schwarzschild from Conservation Laws of Linearised
  Gravity},'' 2023, to Appear.

\bibitem{DR}
M.~Dafermos and I.~Rodnainski, ``Lectures on {Black} {Holes} and {Linear}
  {Waves},'' \emph{Clay Math. Proc.}, vol.~17, pp. 97--206, 2008.

\bibitem{DHR2}
M.~Dafermos, G.~Holzegel, and I.~Rodnianski, ``{Boundedness and Decay for the
  Teukolsky Equation on Kerr Spacetimes {I}: The Case $|a|<<{M}$},'' \emph{Ann.
  PDE}, vol.~5, no.~1, p.~2, 2019.

\bibitem{DHR}
------, ``The linear stability of the {Schwarzschild} solution to gravitational
  perturbations,'' \emph{Acta Math.}, vol. 222, no.~1, pp. 1--214, 2019.

\bibitem{DHRT}
M.~Dafermos, G.~Holzegel, I.~Rodnianski, and M.~Taylor, ``{The non-linear
  stability of the Schwarzschild family of black holes},'' 2021,
  arXiv:2104.08222.

\bibitem{DR2}
M.~Dafermos and I.~Rodnianski, ``A proof of the uniform boundedness of
  solutions to the wave equation on slowly rotating {Kerr} backgrounds,''
  \emph{Invent. Math.}, vol. 185, pp. 467--559, 2011.

\bibitem{Mihalis2}
------, ``A new physical-space approach to decay for the wave equation with
  applications to black hole spacetimes,'' \emph{XVIth Int. Con. Math. Phys.},
  2009.

\bibitem{DR4}
------, ``The red-shift effect and radiation decay on black hole spacetimes,''
  \emph{Commun. Pure Appl. Math.}, vol.~62, no.~7, pp. 859--919, 2009.

\bibitem{DR3}
------, ``{Decay for solutions of the wave equation on Kerr exterior spacetimes
  I-II: The cases $|a| << M$ or axisymmetry},'' 2010, arXiv:1010.5132.

\bibitem{Mihalis}
M.~Dafermos, I.~Rodnianski, and Y.~Shlapentokh-Rothman, ``Decay for solutions
  of the wave equation on {Kerr} exterior spacetimes {III}: The full
  subextremal case $|a| < {M}$,'' 2014, arXiv:1402.7034.

\bibitem{DRY}
------, ``\BIBforeignlanguage{English (US)}{{{Decay for solutions of the wave
  equation on {Kerr} exterior spacetimes III: The full subextremal case $|a| <
  M$}}},'' \emph{\BIBforeignlanguage{English (US)}{Ann. Math.}}, vol. 183,
  no.~3, pp. 787--913, 2016.

\bibitem{Friedman78}
J.~L. Friedman, ``{Generic instability of rotating relativistic stars},''
  \emph{{Comm. Math. Phys.}}, vol.~62, no.~3, pp. 247--278, 1978.

\bibitem{Elena}
E.~Giorgi, ``{Boundedness and Decay for the Teukolsky System of Spin $\pm 2$ on
  Reissner--Nordström Spacetime: The Case $Q<< M$},'' \emph{Ann. Hen. Poin.},
  vol.~21, no.~8, pp. 2485--2580, 2020.

\bibitem{GKS}
E.~Giorgi, S.~Klainerman, and J.~Szeftel, ``{Wave equations estimates and the
  nonlinear stability of slowly rotating Kerr black holes},'' 2022,
  arXiv:2205.14808.

\bibitem{HolW}
S.~Hollands and R.~M. Wald, ``Stability of {Black} {Holes} and {Black}
  {Branes},'' \emph{Commun. Math. Phys.}, vol. 321, no.~3, p. 629, 2013.

\bibitem{Holzegel}
G.~Holzegel, ``Conservation laws and flux bounds for gravitational
  perturbations of the {Schwarzschild} metric,'' \emph{Class. Quant. Grav.},
  vol.~33, no.~20, p. 205004, 2016.

\bibitem{Joe}
J.~Keir, ``Stability, instability, canonical energy and charged black holes,''
  \emph{Class. Quant. Grav.}, vol.~31, no.~3, p. 035014, 2014.

\bibitem{Szeftel}
S.~Klainerman and J.~Szeftel, \emph{{Global Nonlinear Stability of
  Schwarzschild Spacetime Under Polarized Perturbations}}, ser. Ann. Math.
  Stud.\hskip 1em plus 0.5em minus 0.4em\relax Princeton University Press,
  2020.

\bibitem{KS3}
------, ``{Kerr stability for small angular momentum},'' 2021,
  arXiv:2104.11857.

\bibitem{Moncrief2}
V.~Moncrief and N.~Gudapati, ``{A Positive-Definite Energy Functional for the
  Axisymmetric Perturbations of Kerr-Newman Black Holes},'' 2021,
  arXiv:2105.12632.

\bibitem{Morawetz}
C.~S. Morawetz, ``Decay for solutions of the exterior problem for the wave
  equation,'' \emph{Commun. Pure Appl. Math.}, vol.~28, no.~2, pp. 229--264,
  1975.

\bibitem{PW2}
K.~Prabhu and R.~M. Wald, ``Canonical energy and {Hertz} potentials for
  perturbations of {Schwarzschild} spacetime,'' \emph{Class. Quant. Grav.},
  vol.~35, no.~23, p. 235004, 2018.

\bibitem{YakovRita1}
Y.~Shlapentokh-Rothman and R.~Teixeira~da Costa, ``{Boundedness and decay for
  the {Teukolsky} equation on {Kerr} in the full subextremal range $|a|<M$:
  frequency space analysis},'' 2020, arXiv:2007.07211.

\bibitem{YakovRita2}
------, ``{Boundedness and decay for the {Teukolsky} equation on {Kerr} in the
  full subextremal range $|a|<M$: physical space analysis},'' 2023,
  arXiv:2302.08916.

\end{thebibliography}
\end{document}